%% file: iclr2026_conference.tex
\documentclass{article} 
\usepackage{iclr2026_conference,times}

\input{math_commands.tex}

\usepackage[hidelinks]{hyperref}
\usepackage{url}
\usepackage{amssymb}
\usepackage{amsmath}
\usepackage{algorithm}
\usepackage{algorithmicx}
\usepackage{algpseudocode}
\usepackage{graphicx} 
\usepackage{caption} 
\usepackage{amsthm}
\usepackage{newunicodechar}
\theoremstyle{definition}
\theoremstyle{definition}
\newtheorem{definition}{Definition}

\newtheorem{theorem}{Theorem}[section]
\newtheorem{lemma}[theorem]{Lemma}

\usepackage{wrapfig}
\usepackage{subcaption} 
\theoremstyle{remark}

\usepackage{booktabs}
\title{Safe Continuous-time Multi-Agent Reinforcement Learning via Epigraph Form}

\author{
Xuefeng Wang$^{1}$\thanks{Equal contribution.},
Lei Zhang$^{1}$\footnotemark[1],
Henglin Pu$^{1}$,
Husheng Li$^{1}$\thanks{Corresponding author.},
Ahmed H. Qureshi$^{1}$\footnotemark[2] \\
$^{1}$Purdue University
}

\iclrfinalcopy 
\begin{document}

\maketitle

\begin{abstract}
Multi-agent reinforcement learning (MARL) has made significant progress in recent years, but most algorithms still rely on a discrete-time Markov Decision Process (MDP) with fixed decision intervals. This formulation is often ill-suited for complex multi-agent dynamics, particularly in high-frequency or irregular time-interval settings, leading to degraded performance and motivating the development of continuous-time MARL (CT-MARL).
Existing CT-MARL methods are mainly built on Hamilton–Jacobi–Bellman (HJB) equations. However, they rarely account for safety constraints such as collision penalties, since these introduce discontinuities that make HJB-based learning difficult. To address this challenge, we propose a continuous-time constrained MDP (CT-CMDP) formulation and a novel MARL framework that transforms discrete MDPs into CT-CMDPs via an epigraph-based reformulation. We then solve this by proposing a novel physics-informed neural network (PINN)-based actor–critic method that enables stable and efficient optimization in continuous time.
We evaluate our approach on continuous-time safe multi-particle environments (MPE) and safe multi-agent MuJoCo benchmarks. Results demonstrate smoother value approximations, more stable training, and improved performance over safe MARL baselines, validating the effectiveness and robustness of our method. Code is available at \href{https://github.com/Wangxuefeng1024/Safe-Continuous-time-Multi-Agent-Reinforcement-Learning-via-Epigraph-Form}{this link}.
\end{abstract}

\section{Introduction}
MARL has achieved remarkable success in diverse domains, ranging from strategic games \citep{smac, game2}, multi-robot coordination \citep{coord1, coord2}, and wireless communication \citep{comm1}. These advances demonstrate the potential of MARL as a powerful framework for solving complex cooperative and competitive decision-making problems. Despite these achievements, most existing MARL algorithms are formulated in discrete time and fundamentally rely on the Bellman equation \citep{bellman}. 
This formulation often assumes fixed time intervals between decision steps, which is adequate in settings where the decisions naturally occur at uniform time intervals. However, this assumption is not well-suited for complex high-frequency domains such as autonomous driving \citep{auto1, auto2}, financial trading \citep{financial}, where decision-making requires continuous-time control. 
In such cases, discrete-time RL often struggles to learn accurate policy \citep{ct1, hjbppo}, as fixed-step discretization fails to represent non-uniform temporal dynamics, resulting in degraded performance and unstable learning \citep{dt_poor1, dt_poor2, dt_poor3}. These limitations highlight the necessity of developing an alternative framework beyond discrete-time Bellman equations, which is compatible with CT-MARL.

Recent studies \citep{nips} have explored the HJB equations to solve CT-MARL problems. The HJB can be viewed as the continuous-time analogue of the Bellman recursion, where the value function is characterized as the viscosity solution of a nonlinear Partial Differential Equation (PDE) \citep{learning_viscosity}. In practice, PINNs have emerged as a common approach to approximate HJB solutions: they train neural networks to minimize HJB PDE residuals and leverage gradient-consistent signals for policy improvement \citep{hjbppo, meng2024physics}. This formulation eliminates the need for fixed time discretization and enables MARL to operate in continuous-time domains. However, in safety CT-MARL settings, state constraints (e.g., when they are treated as collision penalties) introduce value discontinuities, making it difficult for HJB-based PINNs to approximate the value functions accurately \citep{tro}.

To address these challenges, we first cast safe CT-MARL as a CT-CMDP with explicit state constraints. 
We then introduce a revised epigraph reformulation that augments the system with an auxiliary state $z$, transforming the discontinuous constrained values into a continuous form suitable for PDE-based learning. On top of this reformulation, we adopt an actor–critic framework to learn values and policies under continuous-time state constraints.
Specifically, we improve epigraph-based training by integrating the inner and outer optimization into a unified scheme. 
At each rollout, we compute the optimal auxiliary state $z^*$ and uses it directly for training, while keeping all networks $z$-independent. This design avoids the noise of random $z$ sampling, yields more accurate policy updates, and eliminates costly root-finding at execution.

Our main contributions are summarized as follows. \textbf{(1)} To the best of our knowledge, this is the first work to explicitly incorporate state constraints into the formulation of CT-MARL. We introduce an epigraph-based reformulation to bounds discounted cumulative cost and state constraints within a unified objective, effectively transforming discontinuous values into continuous ones. \textbf{(2)} We design an improved epigraph training scheme that integrates inner and outer optimization, providing more stable learning signals and removing the need for costly root-finding algorithms. 
\textbf{(3)} We prove the existence and uniqueness of viscosity solutions for epigraph-based HJB PDEs, providing theoretical support for our method. 
Extensive experiments on adapted continuous-time safe MPE and multi-agent MuJoCo benchmarks further demonstrate that our approach consistently outperforms current safe MARL methods.
\section{Related Work}
\subsection{Continuous-Time Reinforcement Learning}
Discrete-time reinforcement learning (DTRL) often performs poorly in continuous-time environments, particularly when decision intervals are irregular \citep{dt_poor1, dt_poor2, dt_poor3}. Consequently, continuous-time reinforcement learning (CTRL) has received growing attention as a more suitable framework for such problems \citep{ct1, ct2, ct3, ct4, ct_cs1, ct_cs2}. Most existing studies focus on the single-agent setting, proposing various approaches for value function approximation \citep{hjbppo, related_ct_single_agent1, related_ct_single_agent2}. For example, \citet{hjbppo} employ PINNs to approximate the value function and guide a PPO-based policy update, while \citet{ct_cs2} address stochastic dynamics through a Martingale loss designed for stochastic differential equations.  
In contrast, research on CT-MARL remains limited. Prior works \citep{ct_marl1, ct_marl2} have considered multi-agent problems in continuous time, but largely in application-specific contexts rather than as general-purpose algorithms. The study in \citet{nips} represents the first systematic attempt to design CT-MARL methods, combining PINNs with value gradient iteration to improve value approximation and performance. However, these approaches still inherit the limitations of PINNs that they can only approximate smooth value functions and therefore neglect safety constraints. 

\subsection{Multi-agent systems with Safety Concerns}
Multi-agent scenarios often raise critical safety concerns, and directly learning under combined reward and safety signals poses significant challenges. A number of studies have explored safe MARL frameworks to address these issues \citep{safe_marl1, safe_marl2, safe_marl3, safe_marl4}. For instance, \citet{primal} employ primal--dual methods to enforce safety constraints, while \citet{safe_trust_region} adopt a trust-region approach. \citet{macpo} introduce MACPO and MAPPO-Lagrange, which provide theoretical guarantees for both monotonic reward improvement and safety constraint satisfaction. In addition, \citet{mit3} leverage epigraph forms to formulate multi-agent safe optimal control problems, improving stability during training.  
However, these approaches are primarily developed in discrete-time settings, which limits their ability to capture continuous-time dynamics. Some efforts have incorporated safety into continuous-time multi-agent systems (e.g., \citet{stanford}), but they assume fully known system dynamics and rely on optimal control algorithms, significantly restricting applicability. In more realistic scenarios, where dynamics are only partially known or highly complex, such methods fail to provide practical solutions.  

Existing methods remain limited in handling discontinuities and safety constraints in CT-MARL. Discrete-time safe MARL algorithms provide theoretical guarantees but do not naturally extend to continuous dynamics, while continuous-time approaches struggle with discontinuous value functions. To address these challenges, we propose an epigraph-based reformulation that unifies safety constraints and standard cost functions within a single objective, enabling principled and stable learning in CT-MARL.

\section{Methodology}
In this section, we present our epigraph-based PINN actor–critic iteration (EPI) for solving CT-MARL with state constraints. \textbf{1) We first formalize the learning problem as CT-CMDP}. Secondly, \textbf{2) we reformulate the CT-CMDP using an epigraph form}. By introducing an auxiliary state $z$ to augment system states, this reformulation converts discontinuous value functions into continuous ones. Building on this reformulation, \textbf{3) we develop an actor-critic learning architecture that aligns with the epigraph inner-outer optimization scheme}. Specifically, the outer optimization computes the optimal auxiliary state $z^*$ along the rollout, ensuring that the critic captures the tightest feasible trade-off between return and safety. Based on this, the inner optimization trains the critic using PINNs, which jointly update the return and constraint networks together with $z^*$ to approximate the epigraph-based value function. This stabilized critic then serves as the foundation for actor training: we derive an advantage function consistent with the epigraph-based HJB PDEs, which provides the key learning signal for policy improvement.
\subsection{Problem Formulation}
\subsubsection{Continuous-time Constrained Markov Decision Process}
\label{sec:cmdp}
We consider a CT-CMDP problem, formally defined by the tuple
\begin{equation}
\mathcal{M}
  = \bigl\langle
      \mathcal{X},
      \{\mathcal{U}_i\}_{i=1}^N,
      N,
      f,
      \{l_i\}_{i=1}^N,
      c,
      \{t_k\}_{k \ge 0},
      \gamma
    \bigr\rangle,
\end{equation}
where $\mathcal{X} \subseteq \mathbb{R}^n$ is the global state space, and $\mathcal{U} = \mathcal{U}_1 \times \dots \times \mathcal{U}_N \subseteq \mathbb{R}^m$ is the joint control space for $N$ agents. The system evolves according to time-invariant nonlinear dynamics $\dot{x}(t) = f(x(t), u(t))$ with $x(0) = x_0$, where $f: \mathcal{X} \times \mathcal{U} \to \mathcal{X}$. Each agent $i$ applies a decentralized policy $\pi_i: \mathcal{X} \times [0, \infty) \to \mathcal{U}_i$, and the joint policy is denoted as $\pi = (\pi_1, \dots, \pi_N)$. All agents share the non-negative cost function $l=\sum_{i=1}^Nl_i$, where $l_i: \mathcal{X} \times \mathcal{U}_i \rightarrow \mathbb{R}$ is the independent cost function of agent $i$. The system is further subject to state-dependent safety constraints specified by a function $c: \mathcal{X} \to \mathbb{R}$, with the feasible set defined as $\mathcal{F} = \{ x \in \mathcal{X} \;|\; c(x) \le 0 \}$. Control actions are updated at irregular decision times $\{t_k\}_{k \ge 0}$, with strictly positive intervals $\tau_k = t_{k+1} - t_k$. $\gamma \in (0, 1]$ is the discount factor. Throughout the paper, we assume that $\mathcal{U}_i$ is compact and convex, $f$ and $c$ are Lipschitz continuous, and $l_i$ is Lipschitz continuous and bounded. The joint objective is to minimize the cumulative cost under joint control input $u = (u_1, \ldots, u_N)$ subject to state constraints $c(x)$:
\begin{equation}
\begin{aligned}
v(x) &= \min_{u \in \mathcal{U}} \int_t^{\infty} \gamma^{\tau-t}\, l(x(\tau), u(\tau)) \, d\tau \\
&\text{s.t.}\quad c(x(\tau)) \le 0, \quad \forall \tau \ge t. \label{eq:value_fun}
\end{aligned}
\end{equation}

\subsubsection{Epigraph Reformulation}



The value becomes discontinuous~\citep{altarovici2013general} when state constraints are violated in Eq.~\ref{eq:value_fun}, which hinders the convergence of HJB-based PINN training. To address this, we leverage an epigraph reformulation that converts value in Eq.~\ref{eq:value_fun} into a continuous representation.

\begin{definition}[Epigraph Reformulation]
We introduce an auxiliary state variable $z(t) \in \mathbb{R}$ to reformulate Eq.~\ref{eq:value_fun} using the epigraph forms. Here, $z$ follows the dynamic $\dot z(t) = -l(x(t), u(t))- \ln \gamma \cdot z(t)$. Therefore, the auxiliary value function is defined as
\begin{equation}
V(x, z) 
= \min_{u \in \mathcal{U}} \max \left\{
    \max_{\tau \in [t, \infty]} c(x(\tau)),
    \int_t^{\infty} \gamma^{\tau-t} l (x(\tau), u(\tau))d\tau  - z
\right\},
\label{eq:auxiliary_value_fun}
\end{equation}
\end{definition}
\begin{lemma}[Value Equivalence]
Suppose the assumptions in Sec.~\ref{sec:cmdp} hold. For all $(t,x,z)\in[0,\infty)\times\mathcal{X}\times\mathbb{R}$,
the constrained value $v$ and auxiliary value $V$ are related by
\begin{align}
v(x) = \min\{z\in\mathbb{R}~|~ V(x,z)\le 0 \}.
\label{eq:min_z}
\end{align}
\vspace{-0.05in}
\label{lem:min_z}
\end{lemma}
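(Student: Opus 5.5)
We prove the identity by the usual two inequalities, using only the definitions of $v$ and $V$. Fix $x$ and write $J(x,u)=\int_t^\infty \gamma^{\tau-t} l(x(\tau),u(\tau))\,d\tau$ and $C(x,u)=\sup_{\tau\ge t} c(x(\tau))$ for an admissible control signal $u$. With this notation, $v(x)=\inf\{J(x,u) : C(x,u)\le 0\}$ and $V(x,z)=\inf_u \max\{C(x,u),\,J(x,u)-z\}$. The auxiliary dynamics of $z$ serve only to make $V$ satisfy a dynamic programming principle along trajectories; at the initial time, $z$ enters $V$ only as the constant being subtracted. The key structural fact is that $z\mapsto V(x,z)$ is non-increasing, because $J-z$ decreases in $z$ for every fixed $u$. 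Hence $\{z : V(x,z)\le 0\}$ is an up-set (an interval unbounded above), and the claim amounts to identifying its left endpoint with $v(x)$ and showing that endpoint is attained.

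\textbf{Step 1: every feasible $z$ is at least $v(x)$.} Suppose $V(x,z)\le 0$. If the infimum over $u$ is attained, some $u^\ast$ satisfies $C(x,u^\ast)\le 0$ and $J(x,u^\ast)\le z$. Then $u^\ast$ is feasible for the constrained problem in Eq.~\ref{eq:value_fun}, so $v(x)\le J(x,u^\ast)\le z$.

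If the infimum is only approached, take $u_n$ with $C(x,u_n)\le 1/n$ and $J(x,u_n)\le z+1/n$. Now $v$ cannot be invoked directly, since $u_n$ may be slightly infeasible. This is why the statement uses $\min$ and why existence of optimal controls matters. I would appeal to the standing assumptions: $\mathcal U$ is compact and convex, and $f$, $c$, $l$ are Lipschitz. Following the standard setting of \citet{altarovici2013general}, with relaxed or measurable controls and $f(x,\mathcal U)$ convex, the trajectory set is compact in the topology of uniform convergence on compacts. In that topology $C$ is lower semicontinuous (a sup of continuous functions), and $J$ is lower semicontinuous by Fatou together with convexity. A limit control $\bar u$ then gives $C(x,\bar u)\le 0$ and $J(x,\bar u)\le z$, so $v(x)\le z$.

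Discounting handles the infinite horizon. For $\gamma<1$ the tails of $J$ are uniformly small because $l$ is bounded. For $\gamma=1$ I would add the implicit assumption $J<\infty$ and rely on lower semicontinuity alone.

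\textbf{Step 2: $z=v(x)$ is feasible.} Take $z=v(x)$, assuming $v(x)<\infty$, i.e.\ the feasible set is nonempty. By the same compactness and lower-semicontinuity argument, an optimal constrained control $u^\ast$ exists, with $C(x,u^\ast)\le 0$ and $J(x,u^\ast)=v(x)$. Then $V(x,v(x))\le \max\{C(x,u^\ast),\,0\}\le 0$. Combined with Step 1, this shows the set $\{z : V(x,z)\le 0\}$ equals $[v(x),\infty)$, so its minimum is $v(x)$. In the degenerate infeasible case, $v(x)=+\infty$ and $V(x,z)>0$ for all $z$, so the set is empty and both sides are $+\infty$ under the convention $\min\emptyset=+\infty$.

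\textbf{Main obstacle.} The only non-routine point is the attainment and limit argument in Steps 1–2, i.e.\ existence of optimal controls for the state-constrained problem. Without it, one only obtains $v(x)=\inf\{z : V(x,z)\le 0\}$. I would first try to state and use the closedness of the set of admissible trajectories under the convexity and compactness assumptions. If that is unavailable, I would present the result with $\inf$, and note that it coincides with $\min$ whenever optimal controls exist.
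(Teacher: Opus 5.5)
Your two-inclusion skeleton, $v(x)\le z \Leftrightarrow V(x,z)\le 0$, is the paper's argument. The paper simply asserts in both directions that a control attaining the relevant minimum exists; in the reverse direction it even adds $c(x(\tau))\le 0$ as a hypothesis. Where you go beyond this, you misplace the role of attainment.

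\emph{The $\min$ over $z$ is never at issue.} For each $u$, the map $z\mapsto\max\{C(x,u),J(x,u)-z\}$ is non-increasing and $1$-Lipschitz, so $V(x,\cdot)$ is continuous. Moreover $V(x,z)\ge -z>0$ for $z<0$, since $l\ge 0$. Hence $\{z : V(x,z)\le 0\}$ is closed and bounded below, and when nonempty its infimum is automatically attained.

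\emph{Step 2 needs no existence theorem.} Taking $\varepsilon$-optimal feasible controls gives $V(x,v(x))\le\varepsilon$ for every $\varepsilon>0$.

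\emph{The only place compactness enters is Step 1}, namely $V(x,z)\le 0\Rightarrow v(x)\le z$. There it is not a technicality: without it the identity fails outright. So your fallback ``$v(x)=\inf\{z : V(x,z)\le 0\}$'' is false, not a weaker true statement.

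Concretely, take $\dot x=u$, $\mathcal U=[-1,1]$, $c(x)=|x|$, $l(x,u)=1-|u|$, $\gamma\in(0,1)$ and $x=0$. Every assumption of Sec.~\ref{sec:cmdp} holds. The only feasible control is $u\equiv 0$, so $v(0)=-1/\ln\gamma>0$, and this minimum is attained. Meanwhile, bang--bang chattering with period $2\delta$ keeps $|x|\le\delta$ with $l\equiv 0$. This gives $V(0,z)=0$ for all $z\ge 0$, hence $\min\{z : V(0,z)\le 0\}=0\neq v(0)$.

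In this example $f(0,\mathcal U)=[-1,1]$ is convex and the trajectories do converge uniformly. What fails is lower semicontinuity of $J$. So the hypothesis your Step 1 actually needs is Filippov--Cesari convexity of the extended velocity set $\{(f(x,u),\,l(x,u)+r) : u\in\mathcal U,\ r\ge 0\}$ (for example, $f$ affine and $l$ convex in $u$), not convexity of $f(x,\mathcal U)$.

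Passing to relaxed controls does not rescue the claim either. Relaxation lowers $v$ (the relaxed value at $0$ is $0$ here), so you would be proving the identity for a different $v$.

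With the correct convexity assumption stated explicitly, your Steps 1--2 form a correct proof. They also make explicit a hypothesis that the paper's own proof silently relies on, since the lemma as stated is false under the paper's assumptions alone.
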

\vspace{-0.25in}
Here, the sub-zero level set of auxiliary value $V$ becomes the epigraph of the constrained value $v$. The proof is listed in Appendix \ref{proof:equivalence}.

\begin{lemma}[Optimality Condition]
For all $(t,x,z)\in[0,\infty)\times\mathcal X\times\mathbb R$, consider a small enough $h>0$, the auxiliary value function $V$ satisfies
\vspace{-0.05in}
\begin{equation}
V(x,z)= \min_{u \in \mathcal{U}} \max \left\{
        \max_{\tau\in[t,t+h]} c(x(\tau)), \gamma^h V (x(t+h),z(t+h))
    \right\}.
\end{equation} 
\label{lem:dpp_epigraph}
\end{lemma}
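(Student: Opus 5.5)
The plan is to reduce the statement to a semigroup identity for the epigraph state $z$, and then run the usual dynamic-programming argument (concatenating controls) on the two arguments of the outer $\max$ in Eq.~\ref{eq:auxiliary_value_fun}.

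\textbf{Step 1 (the $z$-dynamics absorb the running cost).} From $\dot z=-l-\ln\gamma\, z$ we get
\begin{equation*}
\tfrac{d}{d\tau}\bigl(\gamma^{\tau-t}z(\tau)\bigr)=\gamma^{\tau-t}\bigl(\ln\gamma\, z+\dot z\bigr)=-\gamma^{\tau-t}l(x(\tau),u(\tau)).
\end{equation*}
Integrating over $[t,t+h]$ gives $\gamma^h z(t+h)=z(t)-\int_t^{t+h}\gamma^{\tau-t}l\,d\tau$. Write $J_s(u)=\int_s^\infty\gamma^{\tau-s}l\,d\tau$. Then
\begin{equation*}
J_t(u)-z(t)=\gamma^h\bigl(J_{t+h}(u)-z(t+h)\bigr),
\end{equation*}
which is exact for every control, not only as $h\to0$. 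Boundedness of $l$ and $\gamma\le1$ make all integrals finite when $\gamma<1$. For $\gamma=1$ I will assume, as the paper implicitly does, that the cost of the controls under consideration is integrable.

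\textbf{Step 2 (splitting the constraint term).} Next split $\max_{\tau\in[t,\infty]}c=\max\{\max_{[t,t+h]}c,\ \max_{[t+h,\infty]}c\}$. Together with Step 1, the payoff of a fixed $u$ becomes
\begin{equation*}
\max\Bigl\{\max_{[t,t+h]}c,\ \max\bigl\{\max_{[t+h,\infty]}c,\ \gamma^h(J_{t+h}-z(t+h))\bigr\}\Bigr\}.
\end{equation*}

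\textbf{Step 3 (identifying the inner bracket with $\gamma^hV$).} I would use positive homogeneity, $\gamma^h\max\{a,b\}=\max\{\gamma^h a,\gamma^h b\}$. This shows the inner bracket equals $\gamma^h$ times the tail payoff evaluated at $(x(t+h),z(t+h))$, except that its constraint part carries the factor $1$ rather than $\gamma^h$. \textbf{This is the main obstacle.} To handle it I would exploit the freedom in the definition of $V$: the constraint term in Eq.~\ref{eq:auxiliary_value_fun} is to be read with discounting measured from the current time. In the discounted epigraph convention, $\max_\tau\gamma^{\tau-t}c(x(\tau))$ evaluated from time $t$ restricts on $[t+h,\infty)$ to exactly $\gamma^h$ times the same quantity evaluated from $t+h$. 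Under that reading the inner bracket is literally $\gamma^h$ times the tail payoff.

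For the undiscounted $c$ as written, I would check the key consequence used downstream directly. Since $\gamma^h>0$, multiplication by $\gamma^h$ preserves signs and orderings, so optimizing the bracket over tail controls and optimizing $\gamma^h\times$(tail payoff) select the same tail controls. I would state explicitly which convention the equality relies on.

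\textbf{Step 4 (the two inequalities).} For ``$\ge$'': given any $u$, restrict it to $[t+h,\infty)$. Its tail payoff is at least $V(x(t+h),z(t+h))$, and taking the minimum over $u$ gives the bound. For ``$\le$'': fix $\varepsilon>0$ and an arbitrary $u$ on $[t,t+h)$. Concatenate it with an $\varepsilon$-optimal control for $V(x(t+h),z(t+h))$, which is a measurable admissible control. Its payoff is at most $\max\{\max_{[t,t+h]}c,\ \gamma^hV(x(t+h),z(t+h))+\varepsilon\}$; then minimize over the first segment and let $\varepsilon\to0$. Compactness of $\mathcal U_i$ and Lipschitz continuity of $f$ and $c$ guarantee well-posed trajectories and continuity of $\max_{[t,t+h]}c$ in the control. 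The ``small $h$'' hypothesis is used only so that trajectories stay in $\mathcal X$ on $[t,t+h]$.
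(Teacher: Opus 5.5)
Your route (concatenate controls, split the running maximum of $c$, rewrite the cost term via the $z$-dynamics, identify the tail with $\gamma^h V$) is the same as the paper's. Your Step~1 is sharper than the paper's version: the paper substitutes $z(t)=z(t+h)+\int_t^{t+h}\gamma^{\tau-t}l\,d\tau$, dropping the factor $\gamma^h$, and then restores $\gamma^h$ by an ``$\approx$''. Your identity $J_t-z(t)=\gamma^h(J_{t+h}-z(t+h))$ is exact.

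The genuine gap is Step~3, which you located correctly but do not close. Write $C=\max_{\tau\ge t+h}c(x(\tau))$ and $D=J_{t+h}-z(t+h)$. Your ordering argument fails because $\max\{C,\gamma^hD\}$ is not a monotone function of $\gamma^h\max\{C,D\}$. Take $\gamma^h=0.99$ and two tail controls giving $(C,D)=(1,0.99)$ and $(0.995,1.001)$. The bracket takes the values $1$ and $0.995$, while $\gamma^h\max\{C,D\}$ takes $0.99$ and $0.99099$, so the two criteria prefer different controls. Even coinciding minimizers would not give equal minimal values, and equality of values is what the lemma asserts. The discounted reading, in turn, replaces the $V$ of Eq.~\ref{eq:auxiliary_value_fun} by a different function. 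Even for that function, the splitting produces $\max_{\tau\in[t,t+h]}\gamma^{\tau-t}c(x(\tau))$ as the first term rather than $\max_{\tau\in[t,t+h]}c(x(\tau))$.

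In fact no argument can close this step, because for $\gamma<1$ the statement is false as written. Take $l\equiv0$ and $c\equiv c_0<0$ (any $f$). Then $z(t+h)=\gamma^{-h}z$ and $V(x,z)=\max\{c_0,-z\}$. The right-hand side is $\max\{c_0,\gamma^hc_0,-z\}=\max\{\gamma^hc_0,-z\}$, which strictly exceeds $V(x,z)$ whenever $z>\gamma^h|c_0|$, by exactly $(1-\gamma^h)|c_0|$ once $z\ge|c_0|$. This discrepancy is of order $h$, the same order as the terms kept when the HJB PDE is derived from the lemma, so it cannot be absorbed into an ``$\approx$''. The paper's proof passes over exactly this point: its last equality silently pulls $\gamma^h$ out of $\max_{\tau\ge t+h}c(x(\tau))$.

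What your Steps~1, 2 and~4 do establish rigorously is
\[
V(x,z)=\min_{u_1}\max\Bigl\{\max_{\tau\in[t,t+h]}c(x(\tau)),\ \min_{u_2}\max\bigl\{\max_{\tau\ge t+h}c(x(\tau)),\ \gamma^h\bigl(J_{t+h}-z(t+h)\bigr)\bigr\}\Bigr\},
\]
with $u_1,u_2$ the restrictions of $u$ to $[t,t+h]$ and $(t+h,\infty)$. This is the stated identity when $\gamma=1$. For $\gamma<1$ and bounded $c$, its inner term differs from $\gamma^hV(x(t+h),z(t+h))$ by at most $(1-\gamma^h)\sup|c|$. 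You should state the result in one of these corrected forms rather than assert the original identity under a convention that changes the definition of $V$.
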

\vspace{-0.15in}
The proof is listed in Appendix \ref{proof:optimal}.

\begin{theorem}[Epigraph-based HJB PDE]
Let $V: \mathcal X \times \mathbb R \!\to\! \mathbb R$ be the auxiliary value function defined in Eq.~\ref{eq:auxiliary_value_fun}. Then $V$ is the unique viscosity solution of the following HJB PDE for all $(t,x,z)\in[0,\infty)\times\mathcal X\times\mathbb R$.  
\vspace{-0.1in}
\begin{equation}
\max\left\{
    \max_{\tau \in [t, \infty]} c(x) - V(x, z),\;
    \min_{u \in \mathcal{U}} \mathcal{H}(x,z,\nabla_x V,\partial_z V)
\right\} = 0,
\label{eq:hjb_pde}
\end{equation}
where $\mathcal{H}(x,z,\nabla_x V,\partial_z V)$ is Hamiltonian and satisfies $\mathcal{H}=\nabla_x V\cdot f(x, u) - \partial_z V \cdot l(x, u) + \ln \gamma \cdot V$ and optimal control $u^*=\argmin_{u \in \mathcal{U}}\mathcal{H}$. The derivation proof is provided in Appendix \ref{proof:pde}.
\label{them:hjb_pde}
\end{theorem}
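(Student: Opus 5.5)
The plan has two parts. First, use the dynamic programming principle of Lemma~\ref{lem:dpp_epigraph} to show that $V$ is a viscosity solution of Eq.~\ref{eq:hjb_pde}. Second, establish uniqueness through a comparison principle for the resulting obstacle-type stationary HJB equation.

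\textbf{Step 1: continuity and growth of $V$.} Under the standing assumptions, $V$ should be bounded on compact sets and locally Lipschitz (uniformly continuous in $(x,z)$). The argument has three pieces:
\begin{itemize}
\item $f$ is Lipschitz, so Gr\"onwall gives trajectory deviations of order $e^{L_f s}|x-x'|$.
\item $c$ is Lipschitz, so the running maximum $\max_\tau c(x(\tau))$ inherits a modulus.
\item $l$ is bounded and Lipschitz, so the discounted cost is Lipschitz in $x$, and $V$ is $1$-Lipschitz in $z$ directly from Eq.~\ref{eq:auxiliary_value_fun}.
\end{itemize}
The case $\gamma=1$ needs care, because the integral may diverge. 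I would first treat $\gamma<1$ and note that the Gr\"onwall factor must be dominated by $\gamma^{s}$, or else restrict to a finite effective horizon. Continuity is what lets viscosity test functions be applied.

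\textbf{Step 2: viscosity sub- and supersolution.} Take a smooth $\phi$ touching $V$ from above (respectively below) at $(x_0,z_0)$. From Lemma~\ref{lem:dpp_epigraph} we always have $V(x_0,z_0)\ge c(x_0)$, since the maximum includes $\tau=t$. This gives the first term $c(x)-V\le 0$, so the left-hand side of Eq.~\ref{eq:hjb_pde} is always $\le 0$ in that component.

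\emph{Subsolution.} Fix a constant control $u$ and use $V(x_0,z_0)\le \max\{\max_{[0,h]}c,\ \gamma^h V(x(h),z(h))\}$.
\begin{itemize}
\item If $c(x_0)<V(x_0,z_0)$, then for small $h$ the second term is active. Hence $\phi(x_0,z_0)\le \gamma^h\phi(x(h),z(h))$.
\item Dividing by $h$ and letting $h\to0$, with $\dot z=-l-\ln\gamma\, z$, gives $0\le \nabla_x\phi\cdot f-\partial_z\phi\,(l+\ln\gamma\, z)+\ln\gamma\,\phi$.
\item This is the Hamiltonian up to the $z$-drift term. I would reconcile it with the paper's $\mathcal H$ by absorbing $-\ln\gamma\,z\,\partial_z V$, or by working in time-scaled coordinates, and I would state this explicitly.
\end{itemize}
Taking the minimum over $u$ yields the correct sign, with the sign convention set by the $\min_u$ in Eq.~\ref{eq:hjb_pde}.

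\emph{Supersolution.} Either $V(x_0,z_0)=c(x_0)$, and we are done, or the DPP is achieved by near-optimal controls. In the latter case a standard argument applies: suppose $\min_u\mathcal H<0$ strictly at the touching point; by continuity the strict inequality persists on a neighborhood and contradicts the DPP along $\epsilon$-optimal trajectories. Compactness of $\mathcal U$ and continuity of $f,l$ make the Hamiltonian continuous and the limits uniform in $u$.

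\textbf{Step 3: uniqueness.} This is the main obstacle. I would prove a comparison principle: if $W_1$ is a bounded-growth upper semicontinuous subsolution and $W_2$ is a lower semicontinuous supersolution, then $W_1\le W_2$.
\begin{itemize}
\item Use the doubling of variables $\Phi=W_1(x,z)-W_2(y,w)-\frac{|x-y|^2+|z-w|^2}{2\epsilon}-\delta(|x|^2+z^2)$.
\item At a maximum point, split into cases. If the obstacle term is active for $W_1$, then $W_1\le c$ while $W_2\ge c$, and Lipschitz continuity of $c$ closes the case.
\item Otherwise, use the Hamiltonian inequalities and the strict monotonicity of $-\ln\gamma\cdot V$ in $V$ (for $\gamma<1$). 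The Lipschitz structure $|\mathcal H(x,p)-\mathcal H(y,p)|\le C|x-y|(1+|p|)$ then gives the standard contradiction as $\epsilon,\delta\to0$.
\end{itemize}
The delicate points are the unbounded $z$-direction (the linear drift in $z$) and the degenerate case $\gamma=1$, where properness fails. For the latter I would either assume $\gamma<1$ or exploit the $z$-monotonicity of $V$, namely $\partial_z V\le 0$. Combining existence from Step 2 with comparison gives that $V$ is the unique viscosity solution.
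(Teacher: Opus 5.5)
The proposal breaks in Step 2, at the point where you plan to ``absorb'' $-\ln\gamma\,z\,\partial_z\phi$ into the paper's $\mathcal H$. No argument can close this, because for $\gamma<1$ the function $V$ does not satisfy Eq.~\ref{eq:hjb_pde}. Your drift computation from $\dot z=-l-\ln\gamma\,z$ is correct. Concretely, take $f\equiv0$, $l\equiv\ell_0>0$, $c\equiv-M$ with $M>0$, and $\rho=-\ln\gamma>0$; then $V(x,z)=\max\{-M,\ \ell_0/\rho-z\}$.

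On $\{z<\ell_0/\rho+M\}$, $V$ is smooth and the stated left-hand side equals $\max\{z-M-\ell_0/\rho,\ -z\ln\gamma\}$. This is nonzero for every $z\neq0$, and your extra term $-\partial_zV\,\ln\gamma\,z=z\ln\gamma$ would repair exactly this region. On $\{z>\ell_0/\rho+M\}$, however, $V\equiv c=-M$, and \emph{both} Hamiltonians give $\min_u\mathcal H=-M\ln\gamma>0$.

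The underlying reason is that Lemma~\ref{lem:dpp_epigraph}, which you use as a black box in both halves of Step 2, is not exact for $\gamma<1$. The factor $\gamma^h$ in $\gamma^hV(x(t+h),z(t+h))$ also rescales the running-max branch, which is undiscounted in $V$. In the example, the lemma's right-hand side is $\max\{-M\gamma^h,\ \ell_0/\rho-z\}$, which differs from $V$ wherever $\ell_0/\rho-z<-M\gamma^h$. The identity is exact for the variant with $\sup_{\tau\ge t}\gamma^{\tau-t}c(x(\tau))$ inside the max, and for that variant your Step 2, with the drift term kept, is the right argument. For $V$ as defined, the stated equation is consistent only at $\gamma=1$, which is precisely where your Step 3 loses the strict monotonicity it relies on. So the statement itself must be amended before your plan can succeed.

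The paper's appendix does not supply the missing reconciliation either. It reaches the stated $\mathcal H$ by substituting $z(t)=z(t+h)+\int_t^{t+h}\gamma^{\tau-t}l\,d\tau$ (in effect $\dot z=-l$) together with an ``$\approx$'' step in the proof of Lemma~\ref{lem:dpp_epigraph}. It then runs Evans's contradiction template for the two viscosity inequalities, and for uniqueness cites Evans, Ch.~10, Thm.~1, which concerns a time-dependent Cauchy problem without an obstacle. Your doubling-of-variables comparison for the stationary obstacle equation is the more appropriate route to uniqueness, but it cannot rescue an equation that $V$ does not satisfy.

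Two further steps of your plan would also need repair.
\begin{itemize}
\item Your Step 2 conventions are inconsistent. The left-hand side of Eq.~\ref{eq:hjb_pde} is nonincreasing in $V$. The dynamic programming identity gives $\min_u\mathcal H(\phi)\ge0$ when $\phi$ touches from above (as you found), but $\min_u\mathcal H(\phi)\le0$ from below along $\epsilon$-optimal controls. So the supersolution contradiction hypothesis must be $\min_u\mathcal H>0$, not $<0$. The case $V(x_0,z_0)=c(x_0)$ is also not exempt, since both components of the max must then be $\le0$; this is the convention the paper uses.
\item Step 1's continuity claim fails for the undiscounted infinite-horizon running max, because the Gr\"onwall modulus $e^{L_f s}|x-x'|$ is not uniform in $s$. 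For instance, $f(x,u)=x$, $c(x)=\max\{-1,\min\{x,1\}\}$, $l\equiv0$ give $V(0,z)=0$ but $V(x,z)=1$ for all $x>0$, $z\ge0$. You would therefore need discontinuous viscosity solutions and a correspondingly stronger comparison principle.
\end{itemize}
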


\subsection{Epigraph Learning Framework}

\begin{figure}[htbp]
    \centering
    \includegraphics[width=1.00\linewidth]{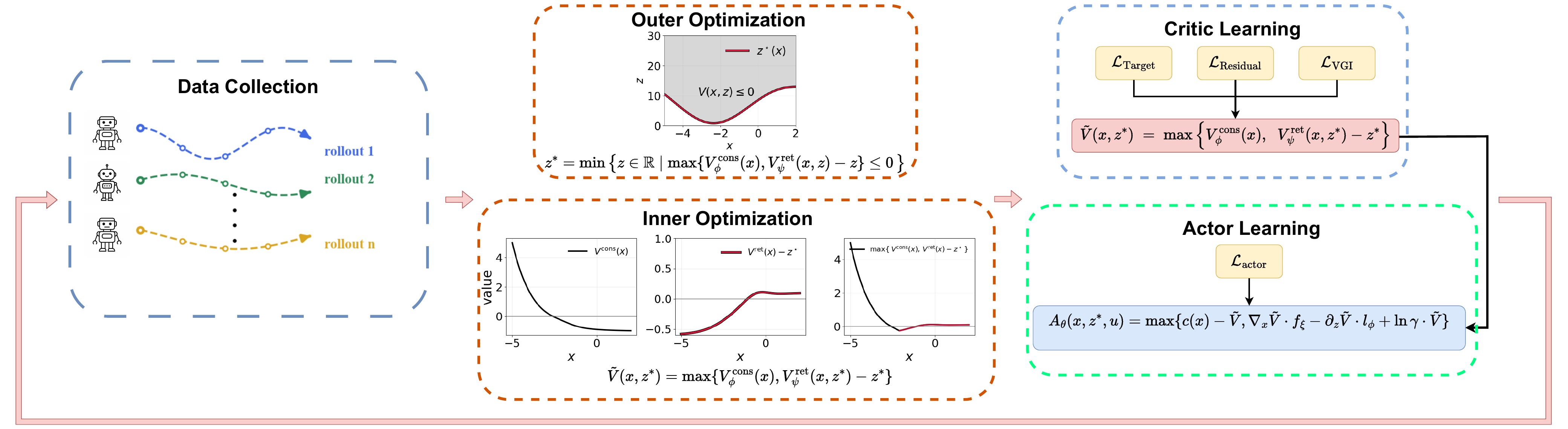} 
    \caption{Overview of the proposed epigraph-based CT-MARL framework. 
The pipeline begins with data collection, where individual agent rollouts are aggregated into a centralized rollout $\mathcal{X}_R$ for the training; the outer optimization computes optimal $z^*$ to balance discounted cumulative cost and safety constraints; the inner optimization corresponds to critic learning, where return networks $ V^{\text{ret}}_\psi(x)$ and constraint value networks $V^{\text{cons}}_\phi(x)$ are optimized jointly with the optimal auxiliary state $z^*$; and actor learning leverages the advantage function to improve policies.}
    \label{fig:method}
    \vspace{-20pt}`
\end{figure}
As illustrated in Fig.~\ref{fig:method}, our framework integrates the epigraph-based inner-outer optimization \citep{mit3} into the actor-critic paradigm. 
The outer loop updates $z^*$ along the rollout by solving Eq.~\ref{eq:outer-problem}, ensuring that the critic is trained with the minimal $z$ that simultaneously satisfies both costs and safety constraints.
\begin{equation}
\label{eq:outer-problem}
\min_{z \in \mathbb{R}} \; z 
\quad \text{s.t.} \quad 
\min_{\pi} \max \Big\{ \sup_{\tau \ge t} c(x(\tau)), \; \int_t^\infty \gamma^{\tau-t}\,l(x(\tau),\pi(\tau))\,d\tau - z \Big\} \le 0.
\end{equation}
In the inner loop, the critic is trained as follows: the return and constraint value networks ($ V^{\text{ret}}_\psi(x)$ and $V^{\text{cons}}_\phi(x)$) are optimized using $z^*$ to approximate the auxiliary value function $\tilde V(x,z^*)$. 
This stabilized critic subsequently supplies the learning signals for decentralized actors, which map local observations to continuous-time policies under the standard centralized training decentralized execution setup \citep{coma, maddpg}. 
We next describe the revised outer optimization in detail, focusing on solving the optimal auxiliary state $z^*$ that trades off discounted cost against safety violations without costly root-finding algorithms~\citep{mit1,mit2,mit3}.

\subsubsection{Revised Outer Optimization} 
We seek the minimal $z$ such that the epigraph-based value $V$ remains non-positive, as defined in Eq.~\ref{eq:min_z}. 
Using the return and constraint value network learned by the critic, the optimal auxiliary state $z^*$ can be found by solving for the minimal feasible solution:
\vspace{-0.05in}
\begin{equation}
z^{*} =\min \Big\{ z \in \mathbb{R}~|~ 
\max\{V^{\text{cons}}_\phi(x), \, V^{\text{ret}}_\psi(x) - z\} \le 0 \Big\},
\label{eq:zstar}
\end{equation}
where return value network $V^{\mathrm{ret}}_{\psi}(x)$ that approximates the discounted cumulative cost  $\int_t^\infty \gamma^{\tau-t}\,l(x(\tau),\pi(\tau))\,d\tau$, and constraint value network $V^{\mathrm{cons}}_{\phi}(x)$ represents the violation for worst-case future constraints  $ \sup_{\tau \ge t} c(x(\tau))$.

In previous epigraph formulations \citep{stanford,mit3}, the outer problem is solved during the execution phase: $z$ is sampled along the rollouts during training, and $z^*$ is computed at execution time via root-finding \citep{rootfinding}. This design has two drawbacks in CT-MARL: (1) the random sampling of $z$ introduces nonstationary noise that destabilizes the updates of actor and critic and further leads to poor convergence; (2) at execution, root-finding must be performed at every step, which is computationally expensive and often incompatible with real-time requirements. In contrast, we design the return and constraint value networks as functions of the states $x$ solely. We then integrate the outer optimization into actor-critic training: for each episode, $z^*$ is computed using the current learned value $\tilde V$ along the predicted rollout. The actor is then trained against a $z$-independent critic, yielding a $z$-independent policy $\pi(x)$. This design ensures stable actor training, and enables real-time deployment by eliminating the need for root-finding during execution. Since the critic’s value networks are $z$-independent, the outer optimization is simplified to a scalar search for $z^*$, which adds negligible cost to model training. 

\subsubsection{Inner Optimization with Critic Learning} The inner optimization is responsible for updating the PINN-based critic networks. 
Given a task-dependent range $[z_{\min}, z_{\max}]$, the outer optimization computes $z^*$, which is then clipped to this range (i.e., $z^* \leftarrow \min\{\max\{z^*, z_{\min}\}, z_{\max}\}$) before being used to train the critic module. The critic consists of two value networks: a return value network $V^{\mathrm{ret}}_{\psi}(x)$, and a constraint value network $V^{\mathrm{cons}}_{\phi}(x)$. 
Together with the computed $z^*$ from  Eq. \ref{eq:zstar}, these define the composite epigraph-based value function:
\vspace{-0.05in}
\begin{equation}
    \tilde{V}(x,z^*) \;=\; \max \Big\{ V^{\mathrm{cons}}_{\phi}(x), \;\; V^{\mathrm{ret}}_{\psi}(x) - z^* \Big\}. 
\end{equation}
To ensure stable and accurate training, we employ three complementary losses:

\textbf{(i) Residual Loss.} 
We use PINN architecture \citep{hjbppo} to approximate the value function governed by epigraph-based HJB PDEs, and introduce a residual loss that penalizes violations of the corresponding PDEs:
\begin{equation}
\label{eq:HJB}
\mathcal{L}_{\text{Residual}} 
= \Big(
\max \Big\{ 
c(x) - \tilde{V}, \;
\min_{u\in \mathcal{U}} \big[
\nabla_x \tilde{V} \cdot f(x,u) - \partial_z \tilde{V} \cdot l(x,u) + \ln \gamma\cdot \tilde{V}
\big]
\Big\}
\Big)^2.
\end{equation}
\vspace{-0.15in}

\textbf{(ii) Target Loss.} 
In standard PINNs, a boundary loss is combined with the PDE residual to approximate PDE solutions \citep{pinn_Review, pinn2}. In the infinite-horizon setting, however, no boundary condition is available, and training the critic only on residuals is insufficient: optimization may converge, but to incorrect PDE solutions \citep{wang2022and}. To address this, we add a rollout-based \emph{target loss} that measures the discrepancy between the epigraph-based value approximation with a numerical target defined by Eq.~\ref{eq:auxiliary_value_fun}. For each episode, the current value $\tilde V$ generates a closed-loop trajectory $\{x(\tau),u(\tau)\}_{\tau=t}^{\infty}$; from this trajectory we construct the target ${V}_{\text{tgt}}(x,z)=\max \left\{
    \max_{\tau \in [t, \infty]} c(x(\tau)),\;\;
    \int_t^{\infty} \gamma^{\tau - t}\, l(x(\tau), u(\tau))\,d\tau  - z^*
\right\}$ and minimize the squared error:
\vspace{-0.05in}
\begin{equation}
\mathcal{L}_{\text{Target}}
\;=\;
\Big(
{V}_{\text{tgt}}(x,z^*)
\;-\;
\max\{\,V^{\mathrm{cons}}_{\phi}(x),\;V^{\mathrm{ret}}_{\psi}(x)-z^*\,\}
\Big)^2.
\label{eq:target-loss}
\end{equation}
\vspace{-0.15in}


\textbf{(iii) Value Gradient Iterations.}
Standard PINN training in multi-agent settings often struggles to approximate accurate value functions, primarily because the learned value gradients are inaccurate or unstable \citep{nips,tro}. The VGI techniques \citep{Pontryagin, nips} are designed to enhance the quality of learned value gradients. In our framework, accurate gradients $\nabla_x V(x)$ are crucial for precise value approximations, which in turn affect actor learning and ultimately determine the quality of the resulting policies. To establish the theoretical basis of this module, we follow Theorem 3.4 in~\cite{bokanowski2021relationship} and Theorem 2 in~\cite{hermosilla2023relationship}:
\begin{equation}\label{eq:vgi_eq}
\nabla_{x_t} \tilde{V}(x_t)
= 
\nabla_{x_t} (\chi(x_t)l(x_t,u_t)+(1-\chi(x_t))c(x_t))\Delta t
+
\gamma^{\Delta t}\nabla_{x_{t+\Delta t}} \tilde{V}(x_{t+\Delta t})\cdot\nabla_{x_t} f(x_t,u_t),
\end{equation}
where the $\chi(x_t)\;:=\;\mathbf{1} \{\,V^{\mathrm{ret}}_{\psi}(x_t)-z_t\;\ge\;V^{\mathrm{cons}}_{\phi}(x_t)\}$.
As shown in Eq.~\ref{eq:vgi_eq}, the value gradient satisfies a recursive relation coupling the local cost gradient with the backpropagated dynamics term.
The overall critic objective is a weighted sum of the three losses as:
\begin{equation}
    \mathcal{L}_{\text{Critic}}
= 
\lambda_{\mathrm{res}}\mathcal{L}_{\mathrm{Residual}}
+
\lambda_{\mathrm{tgt}}\mathcal{L}_{\mathrm{Target}}
+
\lambda_{\mathrm{vgi}}\mathcal{L}_{\mathrm{VGI}},
\end{equation}
where the weights $(\lambda_{\mathrm{res}}, \lambda_{\mathrm{tgt}}, \lambda_{\mathrm{vgi}})$
are selected to keep the losses on comparable scales and are determined via
 grid search.

\subsubsection{Actor Learning}
After introducing the inner-outer optimization for critic learning, we turn to the actor learning. We first define the epigraph-based Q-function, which is used for deriving policy update rules.

\begin{definition}[Epigraph-based Q-function]
Following the definition in \citep{mit1}, for any state-action pair $(x_t,u_t)$ and auxiliary state $z_t$, the epigraph-based Q-function is defined
\begin{equation}
Q(x_t,z_t^*,u_t) = \max\Big\{c(x_t)\,,\, \gamma^h V(x_{t+h},z_{t+h}^*)\Big\}.
\label{eq:epi_q_def}
\end{equation}
where $x_{t+h}$ and $z_{t+h}^*$ are the states and optimal auxiliary state at $t+h$, respectively. $h$ is a short time interval.
\end{definition}

\begin{lemma}[Epigraph-based advantage function]
\label{lem:epi-adv}
The epigraph-based advantage function
\begin{equation}
A(x_t,z_t^*,u_t) = Q(x_t,z_t^*,u_t) -  V(x_t,z_t^*)
\end{equation}
is equivalent to epigraph-based HJB PDE when $h \to 0$
\begin{equation}
A(x_t,z_t^*,u_t) 
= \max\{c(x_t)-V(x_t,z_t^*), \nabla_{x_t} V\cdot f(x_t, u_t) - \partial_{z_t} V \cdot l(x_t, u_t) + \ln \gamma \cdot V \}.
\label{eq:epi-adv}
\end{equation}
\end{lemma}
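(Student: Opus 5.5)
The argument is a first-order expansion of the Q-function in Eq.~\ref{eq:epi_q_def} in $h$, carried out separately in the two regimes of the outer $\max$, and then compared with the Hamiltonian of Theorem~\ref{them:hjb_pde}. We read the lemma as the formal statement that $A$, normalised by $h$ on the dynamic branch, has the sign and first-order content of the epigraph HJB expression. This is the same limiting argument that takes the dynamic programming identity of Lemma~\ref{lem:dpp_epigraph} to the PDE in Eq.~\ref{eq:hjb_pde}.

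\textbf{Step 1: subtract $V$ inside the maximum.} Because $\max\{a,b\}-v=\max\{a-v,\,b-v\}$, we can write
\[
A(x_t,z_t^*,u_t)=\max\Big\{c(x_t)-V(x_t,z_t^*),\;\gamma^hV(x_{t+h},z_{t+h})-V(x_t,z_t^*)\Big\}.
\]
The first entry already coincides with the first entry of Eq.~\ref{eq:epi-adv}, so only the second entry needs work.

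\textbf{Step 2: Taylor-expand the second entry.} Take $V$ differentiable at $(x_t,z_t^*)$; at points where it is not, replace $V$ by smooth test functions touching it from above and below, as in the viscosity argument behind Theorem~\ref{them:hjb_pde}. Use the dynamics $\dot x=f(x,u)$ and $\dot z=-l(x,u)-\ln\gamma\, z$, and write $\gamma^h=1+h\ln\gamma+o(h)$. Then
\[
\gamma^hV(x_{t+h},z_{t+h})-V(x_t,z_t^*)=h\Big[\nabla_xV\cdot f-\partial_zV\,(l+\ln\gamma\, z)+\ln\gamma\, V\Big]+o(h).
\]
Two normalisations are needed to match the stated Hamiltonian exactly:
\begin{itemize}
\item Divide by $h$. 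Because both entries are compared against $0$ in Eq.~\ref{eq:hjb_pde}, rescaling the dynamic entry by a positive factor preserves the sign of the maximum and its zero set. This is how the equivalence should be read.
\item Absorb the term $-\partial_zV\ln\gamma\, z$. The paper's Hamiltonian omits it, which is consistent with evaluating $V$ along the optimal $z^*$ trajectory. Since $V$ depends on $z$ through $V^{\mathrm{ret}}-z$, one has $\partial_zV\in\{0,-1\}$, and the discount on $z$ is already reflected in $\ln\gamma\, V$ on the active branch. I would check this on each branch of $\chi$ separately.
\end{itemize}

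\textbf{Step 3: pass to the limit.} After the normalisation, let $h\to0$. The $o(h)/h$ remainder vanishes, which yields Eq.~\ref{eq:epi-adv}. On the set where $c(x_t)>V$ is impossible, Lemma~\ref{lem:dpp_epigraph} gives $c(x_t)\le V$. On the set where the constraint is active, $c(x_t)=V$, the first entry is $0$ and dominates whenever the Hamiltonian term is nonpositive. Minimising over $u_t$ then recovers Eq.~\ref{eq:hjb_pde}, i.e.\ $\min_uA=0$, which confirms consistency with Theorem~\ref{them:hjb_pde}.

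\textbf{Main obstacle.} The hardest step is justifying the expansion where $V$ is only Lipschitz, especially at the switching surface $V^{\mathrm{ret}}-z=V^{\mathrm{cons}}$, together with the mismatched scaling in $h$ between the two entries. I would first try to handle it with one-sided (Dini) derivatives and the viscosity super- and sub-solution inequalities. If that fails, I would fall back to stating the equivalence almost everywhere.
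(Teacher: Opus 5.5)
\textbf{The step that fails.} Your overall route is the paper's: pull $V$ inside the maximum, expand $\gamma^hV(x_{t+h},z_{t+h})$ to first order, rescale by $h$, and let $h\to0$. Your sign/zero-set reading of the rescaling is fine. It is more careful than the paper's step, which divides $\max\{c-V,\,\mathcal H h\}$ by $h$ but leaves the first entry unscaled. The step that fails is your second normalisation: the term $-\partial_zV\,\ln\gamma\,z$ cannot be absorbed.

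Take the return branch of $\tilde V$, where $V=V^{\mathrm{ret}}(x)-z$ and $\partial_zV=-1$. There the term equals $\ln\gamma\,z$, which is nonzero whenever $\gamma<1$ and $z\neq0$. It is not ``already reflected'' in $\ln\gamma\,V$. Instead it cancels the $-\ln\gamma\,z$ inside $\ln\gamma\,V=\ln\gamma\,(V^{\mathrm{ret}}-z)$. So under $\dot z=-l-\ln\gamma\,z$ the honest first-order coefficient is $\nabla_xV^{\mathrm{ret}}\cdot f+l+\ln\gamma\,V^{\mathrm{ret}}$, whereas Eq.~\ref{eq:epi-adv} has $\nabla_xV^{\mathrm{ret}}\cdot f+l+\ln\gamma\,(V^{\mathrm{ret}}-z)$.

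Evaluating at $z=z^*$ changes nothing, since the Hamiltonian is pointwise in $(x,z)$. Also, $\partial_zV\in\{0,-1\}$ is a property of the parametrised $\tilde V$, not of the true $V$, which is only nonincreasing and $1$-Lipschitz in $z$. Neither fact makes the term vanish. The discrepancy is not cosmetic. If $V$ solves the HJB for the stated dynamics, then on the return branch the minimum over $u$ of the lemma's Hamiltonian is $-\ln\gamma\,z\neq0$. So even your sign/zero-set version of the equivalence fails there.

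\textbf{How the paper avoids it.} The paper does not absorb this term; it never produces it. Its expansion uses the increment $z_{t+h}=z_t-l\,h+o(h)$. This is the same convention as its proof of Lemma~\ref{lem:dpp_epigraph}, where $z(t)=z(t+h)+\int_t^{t+h}\gamma^{\tau-t}l\,d\tau$. That convention agrees with the stated auxiliary dynamics $\dot z=-l-\ln\gamma\,z$ only when $\gamma=1$ or $z=0$. So from your set-up, Eq.~\ref{eq:epi-adv} is not reachable for general $\gamma<1$. You have two honest options:
\begin{itemize}
\item adopt the paper's first-order increment explicitly, and flag that it departs from $\dot z=-l-\ln\gamma\,z$; or
\item state the conclusion with $-\partial_zV\,(l+\ln\gamma\,z)$ in place of $-\partial_zV\,l$.
\end{itemize}
What you cannot do is argue that the term vanishes.
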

\vspace{-5pt}
In practice, evaluating the epigraph-based advantage in Eq.~\ref{eq:epi-adv} requires knowledge of the true dynamics $f(x,u)$ and cost function $l(x,u)$. Since these quantities are generally unknown in model-free reinforcement learning, we replace them with neural networks that are jointly trained alongside the actor. The derivation of the epigraph-based advantage function is listed at Appendix \ref{proof:advantage}.

\textbf{Dynamics and Cost Networks.}
To assist with the policy training, we employ two neural networks: 
a dynamics network $f_\xi(x,u, \Delta_t)$ that predicts the next state $x'$ 
given the current state–action pair, and a cost network $l_{\phi}(x,u,\Delta_t)$ 
that estimates the instantaneous stage cost. Both models are trained via 
supervised regression using observed transitions $(x,u,x',l)$ from the environment. 
Specifically, the training losses are
\begin{equation}
\label{eq:dyn_cost}
\mathcal{L}_{\text{dyn}}(\xi) 
= \big\| f_\xi(x,u,\Delta_t) - x' \big\|^2, 
\qquad
\mathcal{L}_{\text{rew}}(\phi) 
= \big( l_{\phi}(x,u,\Delta_t) - l(x,u) \big)^2,
\end{equation}
where $x'$ is the observed next state and $l(x,u)$ is the empirical cost signal. 
Equivalently, the dynamics learning can be interpreted as approximating the 
continuous-time derivative dynamics $( f_\xi(x,u,\Delta_t) - x)/\Delta t$.

\textbf{Actor Update with Learned Models.}
By substituting $\tilde{V}(x,z^*)$, $f_\xi$ and $l_\phi$ into the epigraph advantage expression
Eq.~\ref{eq:epi-adv}, we obtain a differentiable surrogate
\begin{equation}
A_{\theta}(x,z^*,u) 
= \max\{c(x)-\tilde V, \nabla_x \tilde V\cdot f_\xi 
- \partial_z \tilde V\cdot l_\phi 
+ \ln\gamma\cdot \tilde V\}.
\label{eq:actor_adv}
\end{equation}
The actor $\pi_\theta(u\mid x)$ is updated by minimizing the expected surrogate advantage:
\begin{equation}
\mathcal{L}_{\text{actor}}(\theta)
= \mathbb{E}_{x\sim \mathcal{X}_{R}, u \sim \pi_\theta(\cdot\mid x)}
\big[\,A_{\theta}(x,z^*,u)\,\big],
\label{eq:actor_loss}
\end{equation}
where $\mathcal{X}_{R}$ is the sampled data along the rollout. 

Specifically, we adopt a centralized-training decentralized-execution
structure: each agent’s actor $\pi_i(o_i, \Delta t)$ takes  its local observation
$o_i$ as input, while the training signal is derived from the state $x$. The overall training pipeline is summarized in Algorithm~\ref{alg:epi_marl} in Appendix~\ref{sec: training algorithms}.

\section{Experimental Results}
\begin{figure}[!ht]
    \centering
    \includegraphics[width=0.9\linewidth]{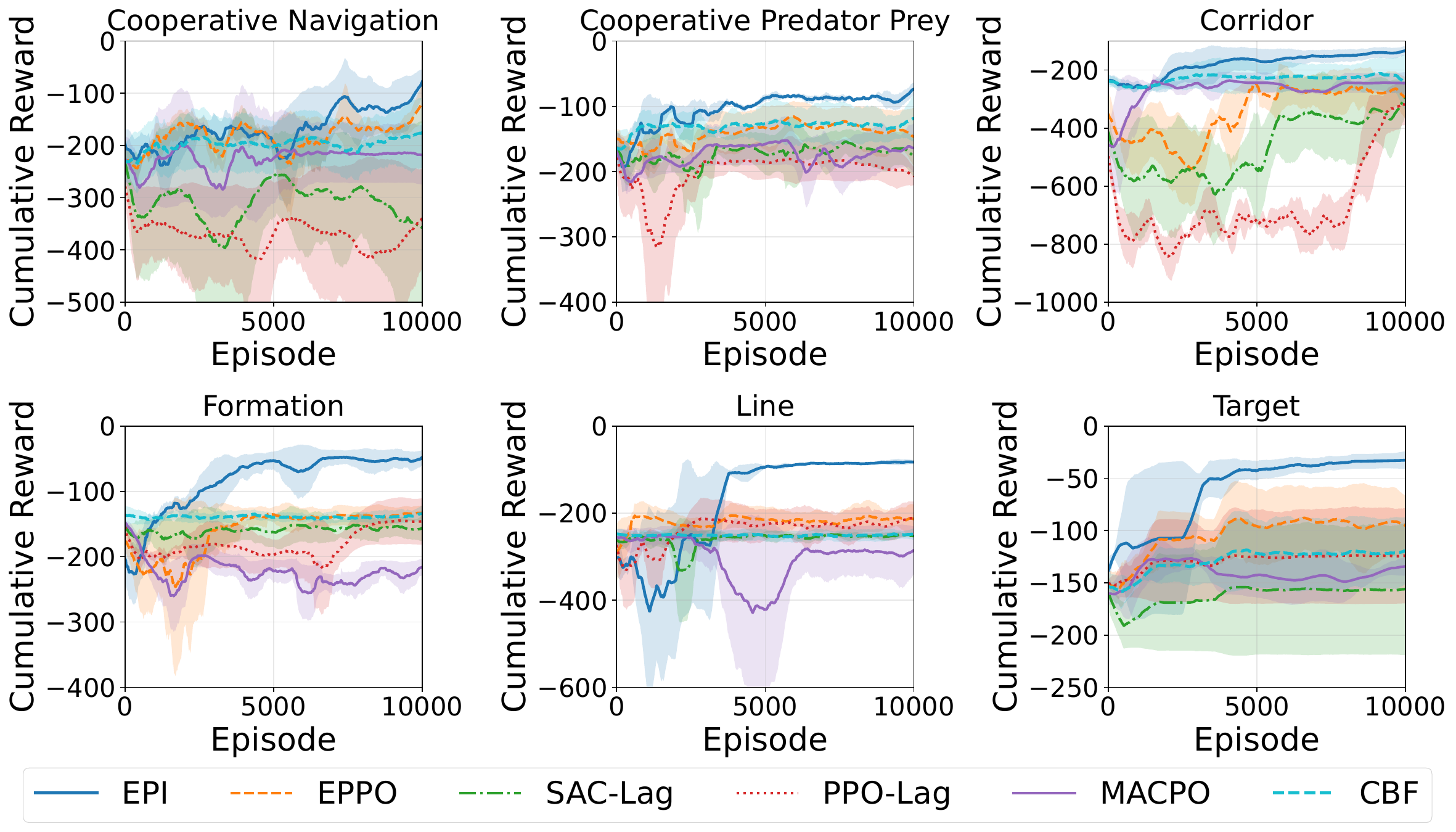} 
    \vspace{-5pt}
    \caption{Overall results for adapted MPE environments.}
    \label{fig:mpe_results}
    \vspace{-10pt}
\end{figure}
We organize our empirical study around the following research questions: 
\textbf{Q1.} How well does our method balance discounted cumulative cost and constraint satisfaction compared to state-of-the-art baselines? 
\textbf{Q2.} How does the different loss component in critic learning contribute to stable training and accurate value approximations? 
\textbf{Q3.} How does performance change when training with versus without the epigraph reformulation? 
\textbf{Q4.} How sensitive is the epigraph formulation to the choice of the auxiliary variable $z$ during training?
\textbf{Q5.} How robust is the method under stochastic disturbances, and how does performance degrade under model-mismatch noise?
\textbf{Q6.} How does the performance change under different discretization resolutions $\Delta t$?
\subsection{Benchmarks and baselines.} To evaluate our approach under continuous-time environments with safety constraints, we consider two adapted benchmarks: the safe continuous-time MPE \citep{maddpg, nips} and continuous-time Safe MA-MuJoCo \citep{mujoco_env, nips}. In MPE, we design several scenarios including \emph{Corridor}, \emph{Formation}, \emph{Line}, \emph{Target}, \emph{Simple Spread}, and \emph{Cooperative Predator–Prey}. These tasks typically place agents in environments with obstacles and require them to avoid both collisions with obstacles and collisions with other agents while navigating or pursuing their objectives. In MuJoCo, we adapt several scenarios such as Half Cheetah and Ant into continuous-time versions and introduce randomly placed walls as obstacles. The agents must coordinate to move forward efficiently while avoiding crashing into walls, ensuring that the learned policies account for both locomotion and safety considerations. 
\begin{wrapfigure}{r}{0.68\textwidth}
  \centering
  \vspace{-5pt}
  \includegraphics[width=\linewidth]{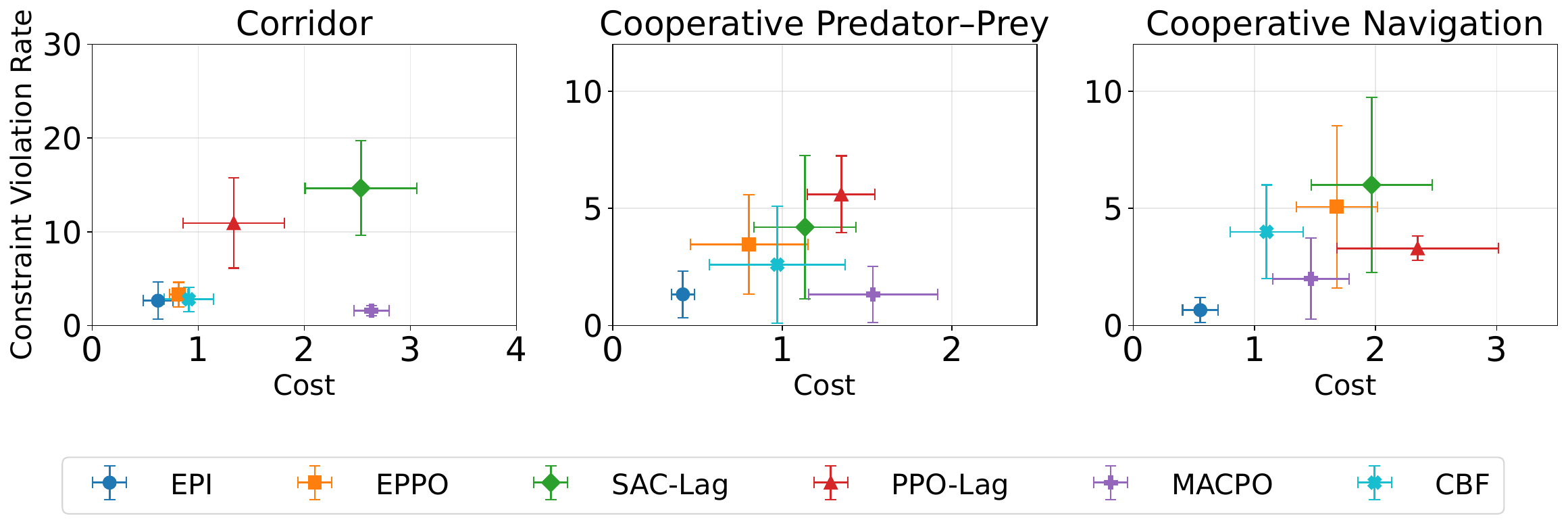}
  \caption{Performance of constraints and cost over MPE settings.}
  \label{fig:cost_constraints_performance}
  \vspace{-8pt}
\end{wrapfigure}
Lastly, we design a didactic example based on a constrained coupled oscillator, which admits an analytical ground-truth solution for both value functions and actions. This example provides a transparent testbed to directly validate the correctness of our learned critics against exact solutions.  Full details of the agent setups, metrics, state and action spaces, and cost specifications are provided in the Appendix \ref{Appendix:Envs}. 
\begin{figure}[!ht]
    \centering
    \includegraphics[width=0.9\linewidth]{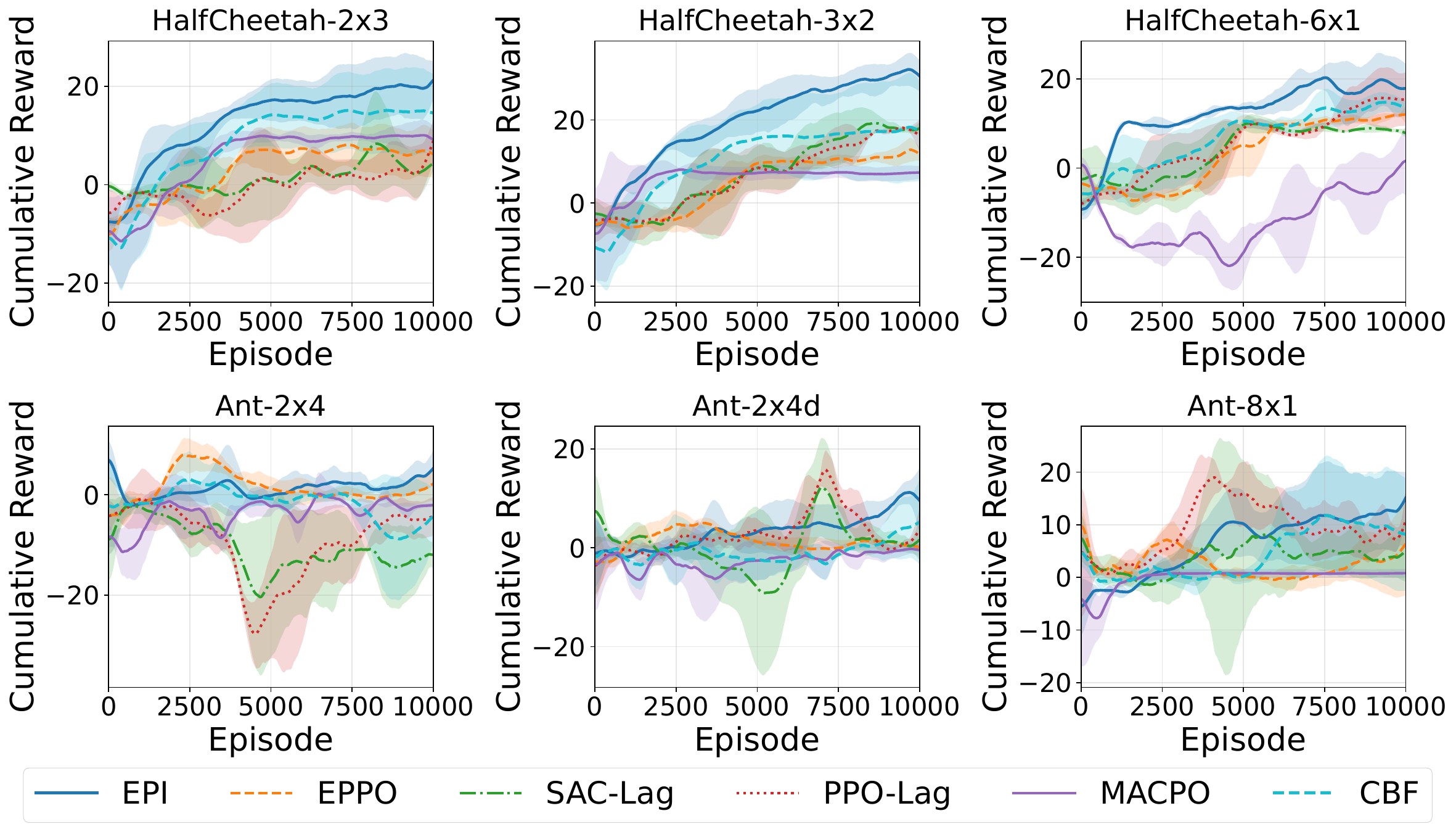} 
    \vspace{-5pt}
    \caption{Overall results for adapted multi-agent MuJoCo environments.}
    \label{fig:mujoco_results}
\vspace{-15pt}
\end{figure}
We compare our approach EPI with MACPO \citep{macpo}, MAPPO-Lag \citep{macpo}, SAC-Lag \citep{sac}, EPPO \citep{mit3} and CBF \citep{cbf}. The first three represent the most widely used families of safe MARL algorithms: trust-region based methods (MACPO) and Lagrangian based methods (MAPPO-Lag, SAC-Lag), covering both on-policy and off-policy learning. We also include EPPO as an epigraph-based baseline that follows the traditional epigraph optimization framework. We additionally include a control barrier function (CBF) baseline, which enforces safety through model-based barrier certificates and is commonly used in safe multi-agent control. Although these algorithms were originally developed in the discrete-time setting, we adapt them to continuous time by equipping their critics with the same PDE residual loss used in our method. Since the performance gap between discrete-time and continuous-time algorithms has already been well studied \citep{dt_poor1,dt_poor3}, our baselines focus only on isolating the effect of different safety mechanisms (trust-region, Lagrangian, or epigraph). 

\begin{figure}[!ht]
\vspace{-5pt}
    \centering
    \includegraphics[width=0.9\linewidth]{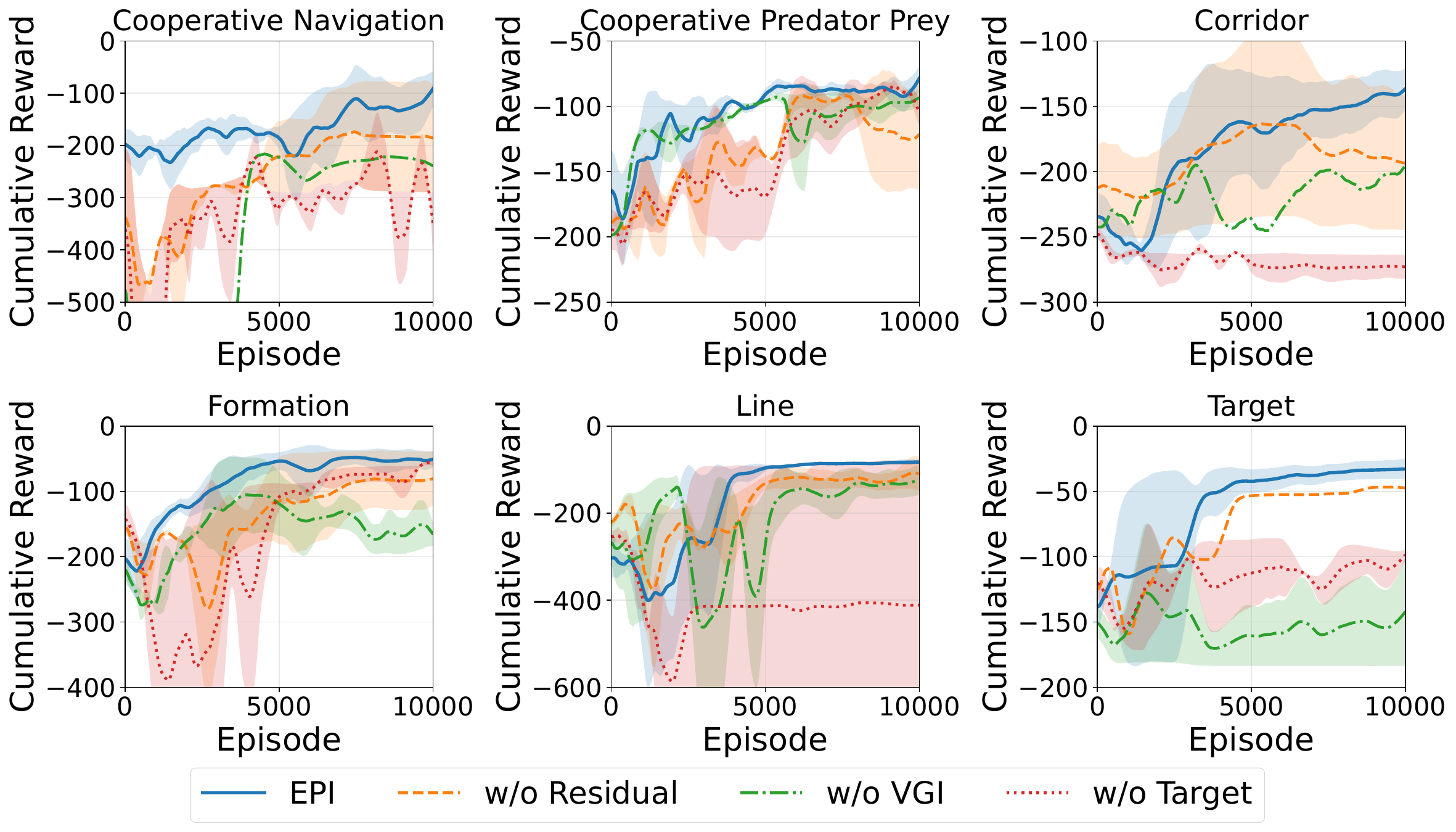} 
    \vspace{-5pt}
    \caption{Ablation study of different loss terms in critic network over MPE.}
    \label{fig:ablation_study}
\vspace{-10pt}
\end{figure}

\subsection{Results Analysis}
In this section, we present a systematic analysis of the results, addressing each research question in turn.
\textbf{Q1.} Our method consistently outperforms all baselines across both adapted MPE and MuJoCo environments in Fig.~\ref{fig:mpe_results} and Fig.~\ref{fig:mujoco_results}. We adopt the same reward design commonly used in prior safe MARL works such as MACPO~\citep{macpo}. 
Specifically, the reward is the combination of the task cost provided by the environment (e.g., distance to the target in MPE) and the safety penalty provided by the environment (e.g., collision penalty between agents or with obstacles), as detailed in Appendix~\ref{Appendix:Envs}, which directly reflects performance under both objectives. In Fig.~\ref{fig:cost_constraints_performance} and~\ref{fig:mujoco_cost}, each point corresponds to the average performance of one algorithm, with horizontal and vertical bars denoting standard deviations. Since the goal is to minimize both cost and constraint violations, the lower-left corner of each panel represents the desirable region. These results show that our algorithm EPI achieves nearly the lowest cost and constraint violation in every scenarios. Specifically, EPPO often remains stuck at suboptimal solutions because it randomly samples the auxiliary state $z$ instead of using $z^*$ for model training, introducing noise that disrupts policy updates and prevents stable convergence. 
\begin{wrapfigure}{r}{0.68\textwidth}
  \centering
  \includegraphics[width=\linewidth]{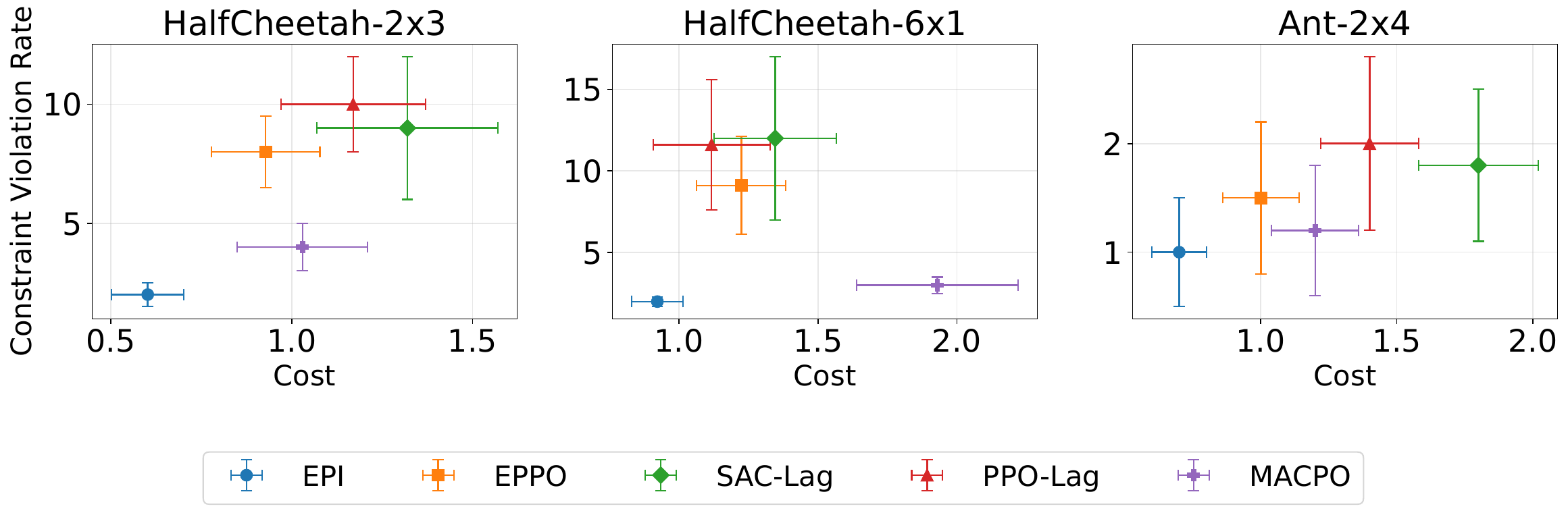}
  \caption{Performance of constraints and cost over MuJoCo settings.}
  \vspace{-10pt}
  \label{fig:mujoco_cost}
\end{wrapfigure}
MACPO enforces constraints through a hard trust-region style update, which yields strong violation rejection but tends to be overly conservative. 
SAC-Lag and MAPPO-Lag rely on Lagrangian relaxation, which is known to suffer from instability when balancing objectives under tight safety requirements~\citep{mit3}. CBF achieves reasonable constraint-violation levels but tends to be conservative. The CBF condition relies on the gradient of a learned barrier function $\nabla B(x)$, approximation errors in this component can distort the effective safe set and degrade the overall performance.

\textbf{Q2.} The ablation results in Fig.~\ref{fig:ablation_study} clearly demonstrate the importance of each loss component in critic learning. It presents the cumulative reward performance of our full method compared with its ablation variants across representative continuous-time MPE tasks. 
\begin{wrapfigure}{r}{0.52\columnwidth} 
  \centering
  \includegraphics[width=\linewidth]{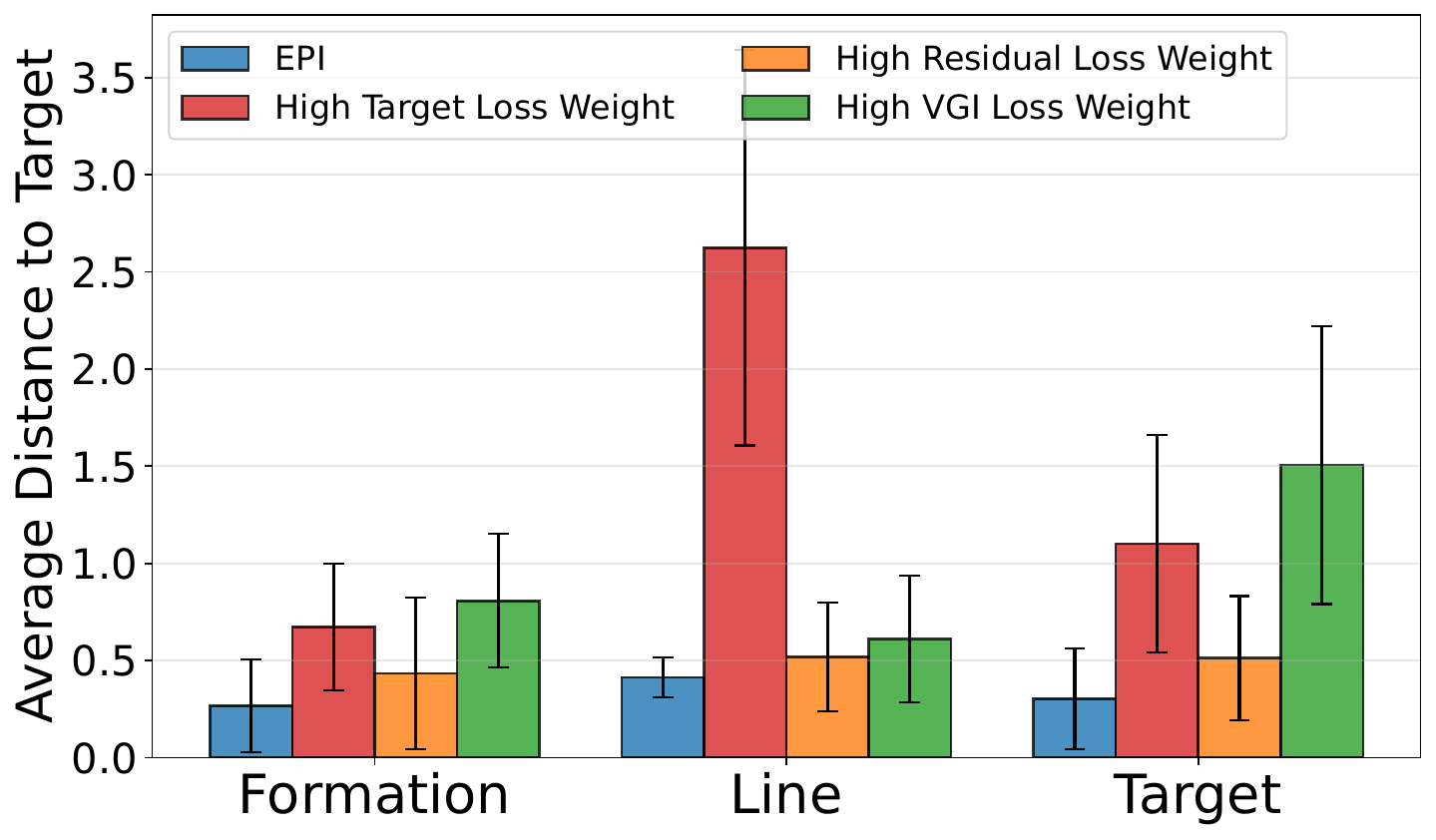}
  \caption{Weighted loss performance.}
  \vspace{-10pt}
  \label{fig:weighted_loss}
\end{wrapfigure}
Removing the target loss or the VGI loss significantly degrades performance, whereas removing the residual loss has only a minor effect. This difference stems from the fact that, unlike existing HJ-based PINN methods \citep{tro, stanford, pinn_Review} that address finite-horizon problems with boundary conditions, our framework targets the infinite-horizon setting where no such boundary conditions are available. In this case, the target loss serves as an anchor to stabilize value approximations, ensuring that value function $V(x)$ does not drift arbitrarily, while the VGI loss enforces consistency of the learned value gradients, which are crucial for both accurate value approximations and policy improvement. In contrast, the HJB residual loss mainly regularizes the PDE structure, but its role becomes less critical once the value gradients are optimized by VGI. As a result, the removal of VGI has a severe impact, since inaccurate value gradients directly harm both critic accuracy and actor updates, while the residual loss contributes less critically to overall training stability.

The grouped bars in Fig.~\ref{fig:weighted_loss} report the average distance to the target (lower is better) for three MPE tasks (Formation, Line, and Target) under different loss weightings. The balanced setting (EPI) attains the smallest distance in all tasks and shows the tightest variability. Over-emphasizing any single component degrades performance: increasing the target loss weight is particularly harmful on Line (large increase in distance), while overweighting ($\times 20$) the residual or the VGI loss also worsens results relative to EPI, though to a lesser extent. These ablations support using the balanced weighting adopted by EPI.

\textbf{Q3.} We generate one trajectory using EPI and collect the visited states. On these same states, we compare the value and policy from three methods: EPI, Ground Truth via the LQR method (details in the Appendix \ref{Appendix:Envs}), and an ablation without the epigraph reformulation, where the state constraint is treated as a collision penalty added to the cost function $l$, making the value function discontinuous. 
\begin{wrapfigure}{r}{0.76\textwidth}
  \centering
  \includegraphics[width=\linewidth]{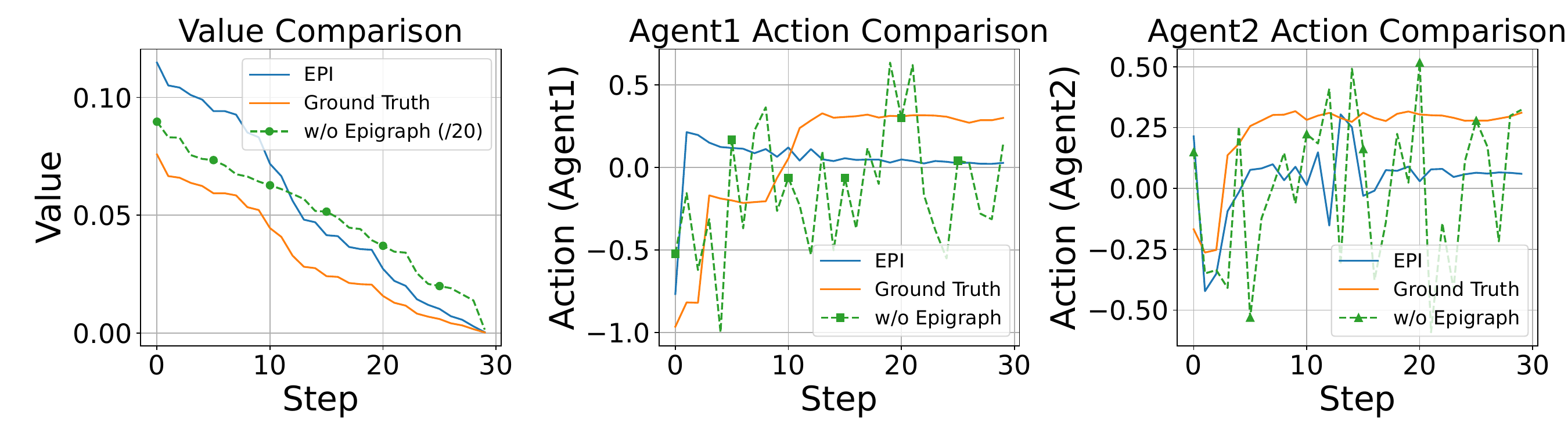}
  \caption{Performance with and without epigraph reformulation.}
  \label{fig:withwoepi}
  \vspace{-10pt}
\end{wrapfigure}
For Ground Truth, the value is computed as the discounted cumulative cost. While for the EPI and ablation without the epigraph form, the value is predicted through the trained value network. EPI closely tracks the Ground Truth in both value and actions for both agents, indicating accurate value approximation and stable control policies. In contrast, the ablation without the epigraph form exhibits severely mis-scaled value predictions (we plot it after a $\times\!\frac{1}{20}$ scaling to share the same y-axis) and noticeably unstable actions, which in practice are more likely to violate constraints because the discontinuous value function is not addressed by the epigraph form. The poor performance of the ablation without epigraph stems from the discontinuity of the value function when state constraints are directly encoded as hard penalties. Such discontinuities are notoriously difficult to approximate with neural networks, leading to severely mis-scaled value predictions and unstable gradients for policy updates. By contrast, the epigraph reformulation converts the discontinuous penalty into a continuous and smooth upper-bound optimization, which stabilizes critic learning and yields reliable policies.

\textbf{Q4.} To better understand how model performance depends on $z$, we test two MPE tasks (Formation and Line) under different values of $z$. Specifically, we train the EPI model with $z \in \{z^*-0.5z_{\max}, z^*-0.2z_{\max}, z^*, z^*+0.2z_{\max}, z^*+0.5z_{\max}\}$. 
Fig.~\ref{fig:z_offset} reports the results, where the $x$-axis indicates cost and the $y$-axis denotes the constraint violation rate. 
\begin{wrapfigure}{r}{0.7\columnwidth} 
  \vspace{-6pt} 
  \centering
  \includegraphics[width=\linewidth]{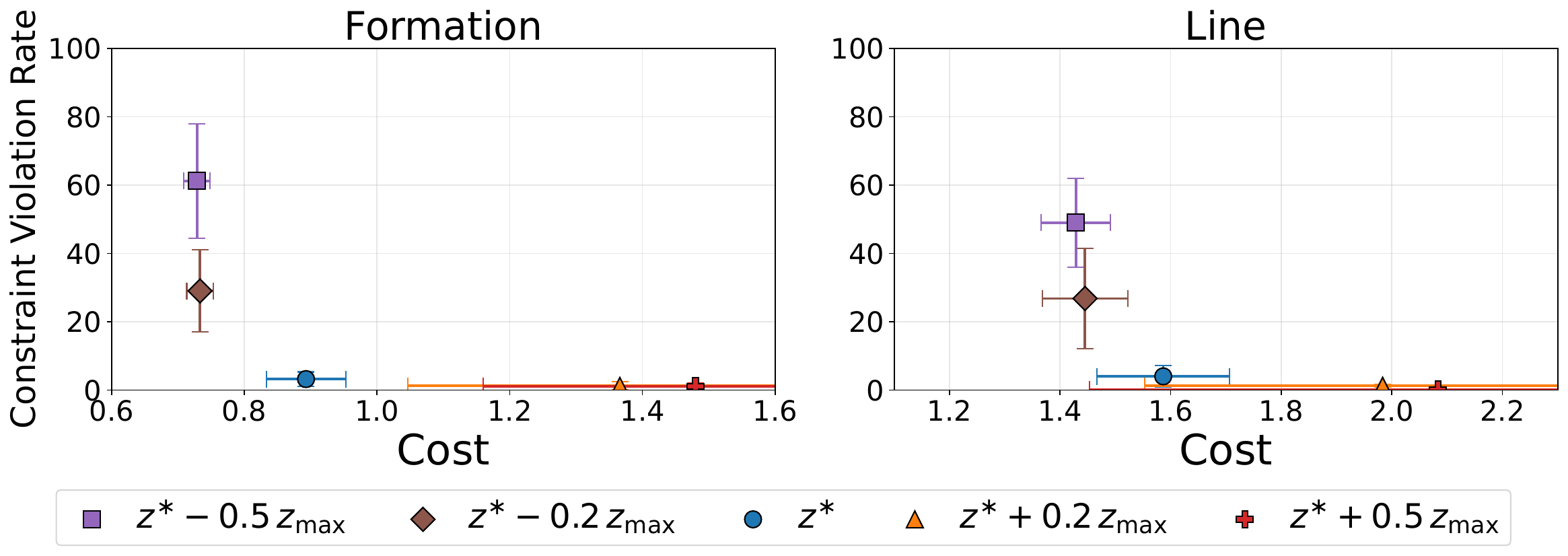}
  \caption{Sensitivity test of different z choices.}
  \label{fig:z_offset}
  \vspace{-6pt} 
\end{wrapfigure}
Compared with the optimal auxiliary state $z^*$, using a suboptimal $z$ shifts the trade-off between cost and constraint satisfaction, often resulting in either much higher violation rates or larger costs. 
Specifically, a smaller $z$ (e.g., $z^*-0.2z_{\max}$, $z^*-0.5z_{\max}$) significantly increases the violation rate while only slightly reducing cost. Getting back to the epigraph form $\max \{ V^{\mathrm{cons}}_{\phi}(x),\; V^{\mathrm{ret}}_{\psi}(x,z) - z \}$, a smaller $z$ makes $V^{\mathrm{ret}}(x,z)-z$ lager than $V^{\mathrm{cons}}_{\phi}(x)$, so the return term dominates in the epigraph form. As a result, the optimization prioritizes reward improvement while neglecting constraint satisfaction, leading to frequent violations. In constrast, when $z$ is larger than $z^*$ (e.g., $z^*+0.2z_{\max}$, $z^*+0.5z_{\max}$), the term $V^{\mathrm{ret}}(x,z)-z$ becomes smaller than $V^{\mathrm{cons}}_{\phi}(x)$, making constraint value dominate in the epigraph form. This forces the critic and actor to emphasize constraint satisfaction, which reduces violations but increases cost. 
\vspace{-0.1in}
\section{Conclusion}
\vspace{-0.1in}
In this paper, we propose an epigraph-based framework for CT-MARL that addresses the challenges of balancing reward maximization with constraint satisfaction. By reformulating the problem through the epigraph forms, we introduced an inner–outer optimization procedure that enables stable critic learning and effective policy updates. Our design further integrates different losses in critic learning, including target, residual, and VGI losses, to anchor value approximations and improve gradient accuracy in the infinite-horizon setting.
Through extensive experiments in both adapted MPE and MuJoCo benchmarks, we demonstrated that our method consistently outperforms state-of-the-art baselines in terms of both cost reduction and constraint satisfaction.

\section*{Acknowledgement}
This material is based upon work supported by the Air Force Office of Scientific Research under award number FA9550-24-1-0233. Any opinions, findings, and conclusions or recommendations expressed in this material are those of the author(s) and do not necessarily reflect the views of the United States Air Force.
\section*{Ethics Statement}
This work focuses on decision-making for the continuous-time constrained MDP problems. All experiments are conducted entirely in simulation and do not involve human subjects or personal data.

\bibliography{iclr2026_conference}
\bibliographystyle{iclr2026_conference}

\newpage
\appendix
\section{Mathematical Proof}

\subsection{Lemma~\ref{lem:min_z}: Equivalence of two value functions}\label{proof:equivalence}
\begin{proof}
Following proofs in \citep{lee2022safety, tro}),
Eq.~\ref{eq:min_z} implies the following equivalence
\begin{equation*}
v(x)-z \le 0 \quad \Longleftrightarrow\quad V(x,z)\le0\\
\end{equation*}
To prove the above relation, we first start from $v(x)-z \le 0$, which implies that there exists a joint control input $u\in \mathcal{U}$ such that 
\begin{equation*}
\int_t^{\infty} \gamma^{\tau-t}l(x(\tau),u(\tau))d\tau  - z \le 0,
\label{eq:ineq1}
\end{equation*}
with $c(x(\tau))\le 0$ for $\forall \tau \geq t$.
Thus, there will exist a joint control $u$ such that $V(x,z)\le 0$.

Second, when $V(x,z)\le 0$ and $c(x(\tau))\le 0$ for $\forall \tau \geq t$ hold, it implies that there exists $u \in \mathcal{U}$ such that
\begin{equation*}
\int_t^{\infty} \gamma^{\tau-t}l(x(\tau),u(\tau))d\tau - z \le 0,
\label{eq:ineq2}
\end{equation*}
which concludes $v(x) - z \le 0$. Therefore, the Lemma~\ref{lem:min_z} is proved.
\end{proof}

\subsection{Lemma~\ref{lem:dpp_epigraph}: Optimality Condition}\label{proof:optimal}
\begin{proof}
Following proofs in \citep{lee2022safety, tro, evans2022partial}, given all $(t,x,z)\in[0,\infty)\times\mathcal X\times\mathbb R$ and select a enough small $h>0$. There exist two different joint control inputs $(u_1(\cdot), u_2(\cdot)) \in \mathcal{U}$ such that
\begin{equation*}
u(\tau)=
\begin{cases}
u_1(\tau), & \tau\in[t,t+h],\\
u_2(\tau), & \tau\in(t+h,\infty).
\end{cases}
\end{equation*}
Then we have the following transformation for Eq.~\ref{eq:auxiliary_value_fun}
\begin{equation*}
\begin{aligned}
V(x,z) = &\min_{u_1\in\mathcal{U}, u_2\in\mathcal{U}} 
          \max\Bigl\{
           \max_{\tau\in[t,t+h]} c(x(\tau)),\,
           \max_{\tau\in[t+h,\infty)} c(x(\tau)), \\
           & \int_t^\infty \gamma^{\tau - t} l(x(\tau), u(\tau))d\tau - z(t)
     \Bigr\}\\
  = &\min_{u_1 \in \mathcal{U}} 
    \max\Bigl\{
         \max_{\tau\in[t,t+h]} c\bigl(x(\tau)\bigr),
           \min_{u_2\in\mathcal{U}}
           \max\bigl\{
                \max_{\tau\in[t+h,\infty)} c\bigl(x(\tau)\bigr),
         \int_t^{t+h} \gamma^{\tau - t} l(x(\tau), u(\tau))d\tau \\
         & + \int_{t+h}^\infty \gamma^{\tau - t}l(x(\tau), u(\tau))d\tau - \bigl(z(t+h) + \int_t^{t+h} \gamma^{\tau - t}l(x(\tau), u(\tau))d\tau\bigr)\bigr\}
   \Bigr\} \\
  = &\min_{u_1 \in \mathcal{U}} 
    \max\Bigl\{
        \max_{\tau\in[t,t+h]}c(x(\tau)), \min_{u_2\in\mathcal{U}}
           \max\bigl\{
           \max_{\tau\in[t+h,\infty)} c\bigl(x(\tau)\bigr), \\
        & \int_{t+h}^\infty \gamma^{\tau - t} l(x(\tau), u(\tau))\, d\tau - z(t+h)\bigr\} 
   \Bigr\} \\  
 \approx & \min_{u_1 \in \mathcal{U}} 
    \max\!\Bigl\{
        \max_{\tau\in[t,t+h]}c(x(\tau)), \min_{u_2\in\mathcal{U}}
           \max\bigl\{
           \max_{\tau\in[t+h,\infty)} c\bigl(x(\tau)\bigr), \\
       & \gamma^h \bigl(\int_{t+h}^\infty \gamma^{\tau - (t+h)} l(x(\tau), u(\tau))\, d\tau - z(t+h)\bigr) \bigr\} 
   \Bigr\} \\  
  = & \min_{u_1\in\mathcal{U}}
     \max\Bigl\{
           \max_{\tau\in[t,t+h]} c(x(\tau)),\,
           \gamma^h V(x(t+h),z(t+h))
     \Bigr\}\\
  = &\min_{u \in\mathcal{U}}
     \max\Bigl\{
           \max_{\tau\in[t,t+h]} c(x(\tau)),\,
           \gamma^h V(x(t+h),z(t+h))
     \Bigr\}\\
\end{aligned}
\end{equation*}
Therefore, the Lemma~\ref{lem:dpp_epigraph} is proved.
\end{proof}

\subsection{Theorem~\ref{them:hjb_pde}: Epigraph-based HJB PDE}
\label{proof:pde}

\begin{proof}
Following proofs in \citep{lee2022safety, tro, evans2022partial}, given all all $(t,x,z)\in[0,\infty)\times\mathcal X\times\mathbb R$ with a small horizon $\Delta t>0$, we apply Lemma~\ref{lem:dpp_epigraph} and Taylor expansion to derive the epigraph-based HJB PDE as follows
\begin{equation*}
\begin{aligned}
V(x,z)
&= \min_{u \in\mathcal{U}}
     \max\Bigl\{
           \max_{\tau\in[t,t+\Delta t]} c(x(\tau)),\,
           \gamma^h V(x(t+\Delta t),z(t+\Delta t))
     \Bigr\} \\
&\approx \min_{u\in\mathcal U}
   \max\Bigl\{
        c(x), (1+\ln \gamma\Delta t) (V(x,z)
        +\nabla_x V\cdot f(x,u)\Delta t 
        -\partial_z V \cdot l(x,u)\Delta t + o(\Delta t))
   \Bigr\} \\
& = \max\Bigl\{
        c(x), (1+\ln \gamma\Delta t)\min_{u\in\mathcal U} (V(x,z)
        +\nabla_x V\cdot f(x,u)\Delta t 
        -\partial_z V \cdot l(x,u)\Delta t + o(\Delta t))
   \Bigr\}   \\                                             
\end{aligned}
\end{equation*}
Subtracting $V(x,z)$ from both sides of above equality, dividing by $\Delta t$,
and letting $\Delta t \to 0$ yields the following HJB PDE, where $V(x,z)$ is the optimal solution to such PDE.
\begin{equation*}
    \max\!\Bigl\{
   c(x) - V(x,z),
   \min_{u\in\mathcal U}
      \bigl[
      \nabla_x V\cdot f(x, u) - \partial_z V \cdot l(x, u) + \ln \gamma \cdot V
      \bigr]
\Bigr\} = 0.
\end{equation*}
Here $\mathcal{H}=\nabla_x V\cdot f(x, u) - \partial_z V \cdot l(x, u) + \ln \gamma \cdot V$ is Hamiltonian and optimal control $u^*=\argmin_{u \in \mathcal{U}}\mathcal{H}$.

Next we prove that $V(x,z)$ is the unique viscosity solution to the epigraph-based HJB PDE using the contradiction technique. First, for $U \in C^\infty(\mathcal{X} \times \mathbb{R})$ such that $V-U$ has local maximum at $(x_0,z_0) \in \mathcal{X} \times \mathbb{R}$ and $(V-U)(x_0,z_0)=0$, we will prove
\begin{equation*}
\max\Bigl\{
   c(x_0)- U(x_0,z_0),\;
   \min_{u\in\mathcal U}
      \bigl[
      \nabla_x U(x_0,z_0)\cdot f(x_0, u) - \partial_z U(x_0,z_0)\cdot l(x_0, u) + \ln \gamma \cdot U(x_0,z_0)
      \bigr]
\Bigr\}\ge 0.  
\end{equation*}
Suppose the above inequality is not correct. We consider that there exists $\theta>0$ and $\tilde u \in \mathcal{U}$ such that
\begin{equation*}
\begin{aligned}
& c(x)- U(x_0, z_0) \le -\theta, \\
&\nabla_x U\cdot f(x, \tilde u) - \partial_z U \cdot l(x, \tilde u) + \ln \gamma \cdot U
\le -\theta. 
\end{aligned}    
\end{equation*}
for all points $(x,z)$ sufficiently close to $(x_0,z_0)$:
$\|x(s)-x_0\|+|z(s)-z_0|<h$ for small enough $h>0$, where $s\in[t_0,t_0+h]$.
Under the assumptions in Sec.~\ref{sec:cmdp}, and given state trajectories $x$ and $z$ evolved from the initial conditions $x = x_0$ and $z = z_0$ according to the corresponding dynamics, the following inequality holds
\begin{equation*}
\begin{aligned}
&c(x(s))-U(x_0, z_0) \le -\theta, \\
&\nabla_x U(x(s),z(s)) \cdot f(x(s),\tilde u) -
\partial_z U(x(s),z(s)) \cdot l(x(s),\tilde u) + \ln \gamma \cdot U(x(s),z(s)) \le -\theta.
\end{aligned}
\end{equation*}
Since $V-U$ has a local maximum at $(x_0,z_0)$, we can have that
\begin{equation*}
\begin{aligned}
& \min_{u \in \mathcal{U}} \left[
\gamma^hV(x(t_0 + h), z(t_0 + h)) - V(x_0, z_0)
\right] \\
\le & \min_{u \in \mathcal{U}} \left[
\gamma^hU(x(t_0 + h), z(t_0 + h)) - U(x_0, z_0) \right] \\
= & \min_{u \in \mathcal{U}}\bigl[
      (\nabla_x U(x(t_0),z(t_0))\cdot f(x(t_0), u) - \partial_z U(x(t_0),z(t_0))\cdot l(x(t_0), u) + \ln \gamma \cdot U(x(t_0),z(t_0)))h 
      \bigr] \\
\le & -\theta h
\end{aligned}
\end{equation*}
We know that Lemma 2 implies
\begin{equation*}
V(x_0,z_0)=
\min_{u\in\mathcal{U}}\max
\Bigl\{
   \max_{s\tau\in[t_0,t_0+h]} c(x(s)),\;
   \gamma^h V(x(t_0 + h),z(t_0 + h))
\Bigr\}.
\end{equation*}
By subtracting $U(x_0, z_0)$ on both side, we have
\begin{equation*}
  (V-U)(x_0,z_0)=\min_{u\in\mathcal{U}}\max
\Bigl\{c(x(s)) - U(x_0, z_0),\;
   \gamma^h V(x(t_0 + h),z(t_0 + h)) - U(x_0, z_0)
\Bigr\}.  
\end{equation*}
Since $(V-U)(x_0,z_0)=0$ holds such that $V(x_0, z_0) = U(x_0, z_0)$, then we will have that
\begin{equation*}
\min_{u\in\mathcal{U}}\max
\Bigl\{c(x(s)) - V(x_0, z_0),\;
   \gamma^h V(x(t_0 + h),z(t_0 + h)) - V(x_0, z_0)
\Bigr\} = \min_{u\in\mathcal{U}}\max\{\theta, \theta h\} >0,  
\end{equation*}
which has a contradiction with $(V-U)(x_0,z_0)=0$. Thus we prove that
\begin{equation*}
\max\Bigl\{
   c(x_0)- U(x_0,z_0),\;
   \min_{u\in\mathcal U}
      \bigl[
      \nabla_x U(x_0,z_0)\cdot f(x_0, u_0) - \partial_z U(x_0,z_0)\cdot l(x_0, u_0) + \ln \gamma \cdot U(x_0,z_0)
      \bigr]
\Bigr\} \ge 0.  
\end{equation*}
Second, for $U \in C^\infty(\mathcal{X} \times \mathbb{R})$ such that $V-U$ has local minimum at $(x_0,z_0) \in \mathcal{X} \times \mathbb{R}$ and $(V-U)(x_0,z_0)=0$, we will prove
\begin{equation*}
\max\Bigl\{
   c(x_0)- U(x_0,z_0),\;
   \min_{u\in\mathcal U}
      \bigl[
      \nabla_x U(x_0,z_0)\cdot f(x_0, u_0) - \partial_z U(x_0,z_0)\cdot l(x_0, u_0) + \ln \gamma \cdot U(x_0,z_0)
      \bigr]
\Bigr\}\le 0.  
\end{equation*}
The definition of auxiliary value $V(x,z)$ shows that
\begin{equation*}
\begin{aligned}
V(x, z) 
& = \min_{u \in \mathcal{U}} \max \left\{
    \max_{\tau \in [t, \infty]} c(x(\tau)),
    \int_t^{\infty} \gamma^{\tau - t} l (x(\tau), u(\tau))d\tau  - z
\right\} \\
& \ge \min_{u \in \mathcal{U}} \max \left\{
    c(x_0),
    \int_t^{\infty} \gamma^{\tau - t} l (x(\tau), u(\tau))d\tau  - z
\right\}
\end{aligned}
\end{equation*}
for all $u \in \mathcal{U}$. By subtracting $U(x_0, z_0)$ on both sides, we have
\begin{equation*}
  0 = (V - U)(x_0,z_0) \ge \max\{c(x_0) - U(x_0, z_0), \int_t^{\infty} \gamma^{\tau - t} l (x, u)d\tau  - z_0 - U(x_0, z_0)\}.  
\end{equation*}
The rest of the proof is to show
\begin{equation*}
    \min_{u\in\mathcal U}
      \bigl[
      \nabla_x U(x_0,z_0)\cdot f(x_0, u) - \partial_z U(x_0,z_0)\cdot l(x_0, u) + \ln \gamma \cdot U(x_0,z_0)
      \bigr] \le 0.
\end{equation*}
Suppose the above inequality is not correct. We consider that there exists $\theta>0$ such that
\begin{equation*}
    \min_{u\in\mathcal U}
      \bigl[
      \nabla_x U(x,z)\cdot f(x,u) - \partial_z U(x,z)\cdot l(x, u) + \ln \gamma \cdot U(x,z)
      \bigr] \ge \theta,
\end{equation*}
for all points $(x,z)$ sufficiently close to $(x_0,z_0)$: $\|x - x_0\| + |z - z_0| < h$ for small enough $h>0$, where $s\in[t_0,t_0+h]$. Given state trajectories $x$ and $z$ that evolve from the initial conditions $x = x_0$ and $z = z_0$ under the corresponding dynamics with any control $\tilde u \in \mathcal{U}$, where
\begin{equation*}
\begin{aligned}
\tilde u(s) = & \arg\min_{\tilde u \in \mathcal{U}} \big\{\nabla_x U(x(s),z(s))\cdot f(x(s), \tilde u) - \partial_z U(x(s),z(s))\cdot l(x(s), \tilde u) \\
& + \ln \gamma \cdot U(x(s),z(s)) \big\}.
\end{aligned}
\end{equation*}
Then we have the following condition that holds
\begin{equation*}
\nabla_x U(x(s),z(s)) \cdot f(x(s),\tilde u) -
\partial_z U(x(s),z(s)) \cdot l(x(s),\tilde u) + \ln \gamma \cdot U(x(s),z(s)) \ge \theta.
\end{equation*}
Consider $V-U$ has a local minimum at $(x_0,z_0)$, we will have that
\begin{equation*}
\begin{aligned}
& \min_{\tilde u \in \mathcal{U}} \left[
\gamma^hV(x(t_0 + h), z(t_0 + h)) - V(x_0, z_0)
\right] \\
\ge & \min_{\tilde u \in \mathcal{U}} \left[
\gamma^hU(x(t_0 + h), z(t_0 + h)) - U(x_0, z_0) \right] \\
= & \min_{\tilde u \in \mathcal{U}}\bigl[
      (\nabla_x U(x(t_0),z(t_0))\cdot f(x(t_0), \tilde u) - \partial_z U(x(t_0),z(t_0))\cdot l(x(t_0), \tilde u) + \ln \gamma \cdot U(x(t_0),z(t_0)))h 
      \bigr] \\
\ge & \theta h
\end{aligned}
\end{equation*}
Based on this derivation, we finally have that
\begin{equation*}
\begin{aligned}
\min_{\tilde u \in \mathcal{U}}
\gamma^hV(x(t_0 + h), z(t_0 + h)) \ge V(x_0, z_0) + \theta h > V(x_0, z_0).
\end{aligned}
\end{equation*}
However, we know that Lemma~\ref{lem:dpp_epigraph} implies that
\begin{equation*}
\min_{\tilde u \in \mathcal{U}}
\gamma^hV(x(t_0 + h), z(t_0 + h)) \le V(x_0, z_0),    
\end{equation*}
which is a contradiction. Thus, we prove that
\begin{equation*}
\max\Bigl\{
   c(x_0)- U(x_0,z_0),\;
   \min_{u\in\mathcal U}
      \bigl[
      \nabla_x U(x_0,z_0)\cdot f(x_0, u_0) - \partial_z U(x_0,z_0)\cdot l(x_0, u_0) + \ln \gamma \cdot U(x_0,z_0)
      \bigr]
\Bigr\}\le 0.  
\end{equation*}
Hence, we prove that $V(x,z)$ is the viscosity solution to the epigraph-based HJB PDE. The uniqueness follows Theorem 1 of Chapter 10 in \cite{evans2022partial}.
\end{proof}

\subsection{Advantage Function}
\label{proof:advantage}
We define the $Q(x_t,z_t,u_t)=\max\{c(x_t), r^hV(x_{t+h}, z_{t+h}\}$ over a short time interval $h>0$ and compute 
\begin{equation*}
\begin{aligned}
Q(x_t,z_t,u_t)- V(x_t,z_t) 
= &\max\{c(x_t), r^hV(x_{t+h}, z_{t+h})\} -V(x_t,z_t) \\
= &\max\{c(x_t)-V(x_t,z_t), (1+\ln \gamma h ) (V(x_t,z_t)
        +\nabla_{x_t} V\cdot f(x_t,u_t)h \\ 
&-\partial_{z_t} V \cdot l(x_t,u_t)h -V(x_t,z_t) + o(h)\} \\
= & \max\{c(x_t)-V(x_t,z_t), (\nabla_{x_t} V\cdot f(x_t, u_t) - \partial_{z_t} V \cdot l(x_t, u_t) + \ln \gamma \cdot V) h\}
\end{aligned}    
\end{equation*}
We divide $h$ on both sides of the above equation and let $h \rightarrow 0$ to compute the advantage function as
\begin{equation*}
\begin{aligned}
A(x_t,z_t,u_t)
& = \lim_{h \to 0}
      \frac{Q(x_t,z_t,u_t)- V(x_t,z_t)}{h} \\
& = \max\{c(x_t)-V(x_t,z_t), \nabla_{x_t} V\cdot f(x_t, u_t) - \partial_{z_t} V \cdot l(x_t, u_t) + \ln \gamma \cdot V \}
\end{aligned}    
\end{equation*}

\subsection{Convergence of Epigraph Value Function}
Consider the augmented state $(x,z)$ with state constraint $c(x)$ and non-negative cost $l(x,u)$. Define the discounted epigraph-Bellman operator over a short step $\Delta t > 0$:
\[
(\mathcal T  V)(x_t,z_t)
:= (1-\gamma^{\Delta t})c(x_t)
+ \gamma^{\Delta t} \min_{u\in\mathcal U}\Big\{ \max \big\{c(x_t),\, V(x_{t+\Delta t},z_{t+\Delta t}) \big\} \Big\},
\]
for $V:\mathcal X\times\mathbb R \to \mathbb R$ bounded. Then the value iteration $V_{k+1}=\mathcal T  V_k$ converges uniformly to the unique fixed point of $\mathcal T$.
\begin{proof}
(i) Contraction. For any $c(x_t)$ and bounded functions $V,W$, we have the following condition satisfying the contraction.
\begin{equation*}
    \begin{aligned}
& |\max\{c(x_t),V(x_{t+\Delta t},z_{t+\Delta t})\} - \max\{c(x_t),W(x_{t+\Delta t},z_{t+\Delta t})\}| \\
\le & |V(x_{t+\Delta t},z_{t+\Delta t})-W(x_{t+\Delta t},z_{t+\Delta t})| \\
\le & \|V-W\|_\infty
    \end{aligned}
\end{equation*}
(ii) Existence and uniqueness. By Banach’s fixed-point theorem, $\mathcal T$ admits a unique fixed point $V$, and for value iteration $V_{k+1}=\mathcal T V_k$ we have that 
\begin{equation*}
    \|V_k - V\|_\infty \le \gamma^k \|V_0 - V\|_\infty \to 0,
\end{equation*}

(iii) Approximate evaluation. If each iteration uses an approximate operator $\widetilde{\mathcal T}$ satisfying $\|\widetilde{\mathcal T} V - \mathcal T V\|_\infty \le \varepsilon$, then
\[
\limsup_{k\to\infty}\|V_k-V\|_\infty \le \frac{\varepsilon}{1-\gamma^{\Delta t}}.
\]
\end{proof}

\section{Training Algorithms}
\label{sec: training algorithms}
In this part, we provide additional details on the overall algorithmic pipeline and clarify the key implementation choices.
\begin{algorithm}[H]
\caption{Epigraph-Based Continuous-Time MARL}
\label{alg:epi_marl}
\begin{algorithmic}[1]
\State Initialize actor $\pi_\theta$, return critic $V^{\text{ret}}_\psi$, 
constraint critic $V^{\text{cons}}_\phi$, dynamics network $f_\xi$, 
reward network $l_\varphi$, and local rollout  $\mathcal{R}$.
  \For{$l=1,\dots,T$}
    \State \(\triangleright\) \textbf{Collect one rollout:}
    \State $x\gets\mathrm{env.reset}()$
    \For{$k=1,\dots,K$}
      \State sample arbitrary decision time $t\sim\mathcal{T}$  
      \For{each agent $i=1,\dots,N$}
            \State $u_i\sim\pi_{\theta_i}(u_i\mid x)$
      \EndFor
      \State set joint action $u=(u_1,\dots,u_N)$
      \State $(x',r)\gets\mathrm{env.step}(u)$
      \State append $(x,u,r,x')$ to local rollout $\mathcal R$
      \State $x\gets x'$
    \EndFor
    \State \(\triangleright\) \textbf{Outer optimization: epigraph update}
    \State find $z^\ast = \inf\{z \in \mathbb R: \max\{V^{\mathrm{cons}}_\phi(x),\,V^{\mathrm{ret}}_\psi(x,z)-z\}\le 0\}$ 
    \State \(\triangleright\) \textbf{Dynamics and Cost Model learning on }\(\mathcal R\)
    \State update $\xi,\varphi$ as per the Eq.~\ref{eq:dyn_cost}.
    \State \(\triangleright\) \textbf{Inner optimization given $z^\ast$: Critic update on }\(\mathcal{X}_R\)
    
    \State update $\psi,\phi$ by losses $\mathcal L_{\mathrm{cons}},\mathcal L_{\mathrm{ret}},\mathcal L_{\mathrm{HJB}}$ and $\mathcal L_{\mathrm{VGI}}$ as per the Eq.~\ref{eq:target-loss}, Eq.~\ref{eq:HJB} and Eq.~\ref{eq:vgi_eq}.   

    \State \(\triangleright\) \textbf{Actor update for each agent}
    \For{$i=1,\dots,N$}
      \State compute $A(x,u,z^\ast)$ for all $(x,u, z^\ast)\in\mathcal{X}_R$ and update the $\theta$ as the Eq.~\ref{eq:actor_adv}.
    \EndFor
  \EndFor
\end{algorithmic}
\end{algorithm}

\section{Environmental Settings}\label{Appendix:Envs}

We provide detailed descriptions of all benchmark environments used in our experiments. For each scenario, we list the number of agents, the number of obstacles, the safety constraints imposed, and the specific task objective with metrics.

\textbf{Metrics.}
We report two primary metrics—one reward-style \emph{training score} that aggregates task cost and constraint penalty, and one \emph{violation rate} measured over held-out rollouts.
\textbf{(1) Cumulative penalty / reward-style training score.}
In many standard environments (e.g., MPE and multi-agent MuJoCo), the task reward often consists of two independent components:
(i) a \emph{task term} such as distance-to-target or velocity tracking, and
(ii) a safety penalty that is activated only when constraint-relevant events occur (e.g., collisions or proximity violations).
This design is also used in prior safe MARL methods such as MACPO and Lagrangian baselines \citep{macpo}.

For clarity of notation, we write the task cost as $\ell_t \ge 0$ (derived from the negative reward of the task term) and denote the constraint penalty as $\kappa_t \ge 0$.
The environment therefore provides a composite instantaneous cost
\[
\psi_t \;:=\; \ell_t + \kappa_t,
\]
which simply aggregates the task objective and the constraint penalty already defined in the environment.
For a trajectory $\tau$ with horizon $T(\tau)$, we define the total episode cost as
\[
J(\tau) := \sum_{t=0}^{T(\tau)-1} \psi_t ,
\qquad
S(\tau) := - J(\tau),
\]
where $S(\tau)$ is the cumulative reward used for performance plots.

\textbf{(2) Violation rate (evaluation).}
Given $N_{\text{eval}}$ episodes (we use $N_{\text{eval}}=100$ by default), define the episode-level
violation indicator
\[
v(\tau) \;:=\; \mathbf{1}\!\left\{\,\exists\, t\ \text{s.t.}\ \kappa_t>0\ \right\},
\]
i.e., an episode is counted as violating if it ever incurs a positive state-constraint penalty.\footnote{%
If $\kappa_t$ is an indicator of hard violations, this coincides with “any violation.”
If $\kappa_t$ is a continuous hinge, we use the same criterion $\kappa_t>0$.
}
The violation rate is then
\[
\text{Viol.\ Rate} \;=\; \frac{1}{N_{\text{eval}}}\sum_{i=1}^{N_{\text{eval}}} v(\tau_i).
\]

\subsection{Safe MPE}
In the MPE, we setup the details as follows:
\textbf{Action.} Continuous 2-D acceleration for x and y axis.\\
\textbf{Reward and costs.} Each agent is assigned a per-agent target $g_i$. The dense goal reward is
\[
r^{\text{goal}}_i(t)=-\|x_i(t)-g_i\|_2.
\]
A discrete collision cost with obstacles or other agents applies:
\[
c^{\text{disc}}_i(t)=\begin{cases}
10,& \text{if agent--obstacle overlap}\\
0,& \text{otherwise}.
\end{cases}
\]
We also record a continuous proximity/penetration cost (not added into the dense goal reward):
\[
c^{\text{cont}}_i(t) \;=\; \frac{1}{2}\!\!\sum_{o\in\mathcal{O}}\!\! \phi\big((r_i+r_o)-\|x_i-x_o\|\big),
\quad 
\phi(\delta)=
\begin{cases}
20\,\delta,& \delta>0\ \ (\text{overlap})\\
0.5\,\delta,& \delta\le 0
\end{cases}
\]
where $r_i,r_o$ are radius (sizes).\\
\paragraph{Difference from the original discrete-time MPE.}
The standard MPE environment uses a fixed and discrete integration step 
$\Delta t$, where each simulation step updates the agent states according to
$p_{t+1} = p_t + v_t \Delta t$ and $v_{t+1} = v_t + f_t \Delta t$ with a fixed time increment.
In contrast, our continuous-time MPE adapts the physical integration step to an arbitrary
$\Delta t$ provided by the learning algorithm.
The state evolution follows
\[
\dot{p}(t) = v(t), \qquad 
\dot{v}(t) = \frac{f(t)}{m} - \text{damping}\cdot v(t),
\]
and is numerically integrated via
\[
p \leftarrow p + v \cdot \Delta t, \qquad 
v \leftarrow v + \frac{f}{m}\Delta t,
\]
using the user-specified $\Delta t$.
For clarity, the update used in the original environment is:
$$
\begin{aligned}
\texttt{step}&(F): \\[4pt]
\qquad p &= p + v \cdot 0.1 \quad \text{(fixed as } 0.1\text{)},\\
\qquad v &= v + \frac{F}{m} \cdot 0.1 \quad \text{(fixed as } 0.1\text{)}.
\end{aligned}
$$
Our continuous-time version introduces:
$$
\begin{aligned}
\texttt{step\_continuous}&(F,\Delta t): \\[4pt]
\qquad p &= p + v \cdot \Delta t \quad \text{(depend on the input } \Delta t\text{)},\\
\qquad v &= v + \frac{F}{m} \cdot \Delta t \quad \text{(depend on the input } \Delta t\text{)}.
\end{aligned}
$$
so that the state update depends directly on the argument $\Delta t$ rather than a fixed constant.

\textbf{Corridor.} This scenario contains 3 agents with 2 large corridor walls. Agents must avoid collisions with the corridor walls and with each other while navigating from their starting positions to reach the assigned target locations on the opposite side.

\textbf{Formation.} This scenario also involves 3 agents and 2 obstacles. The agents are required to bypass obstacles and then coordinate to form a triangular formation at the designated region, under the constraint of avoiding collisions with both obstacles and other agents.

\textbf{Line.} In this task, 3 agents operate in an environment with 2 obstacles. After avoiding the obstacles, the agents must position themselves to form a straight line. The safety constraints enforce that no agent collides with obstacles or with other agents during navigation.

\textbf{Target.} This scenario uses 2 agents with 1 obstacle placed in the environment. Each agent is assigned a fixed target position, and the agents must navigate to their respective goals while avoiding collisions with the obstacle and with each other.

\textbf{Cooperative Navigation.} This is a cooperative navigation task with 3 agents and no obstacles. The agents must spread out to cover multiple target landmarks while avoiding collisions among themselves. Specifically, the agents' goals are the one closest to them rather than fixed ones.

\textbf{Cooperative Predator–Prey.} This task includes 3 controllable predator agents and 1 prey that moves randomly. There are no obstacles, but predators must avoid colliding with each other. The predators’ objective is to coordinate their movements to capture the prey.

\subsection{Safe Multi-agent MuJoCo}
\label{app:ant}
\textbf{Half Cheetah.} We adapt the Half Cheetah environment into three multi-agent variants: Half Cheetah-2x3, Half Cheetah-3x2, and Half Cheetah-6x1. In each case, the body is partitioned into joints agents with different grouping configurations. For example, Half Cheetah-3x2 is three agents with 2 moving joints for each agent. Randomly placed walls are introduced into the environment, requiring the agents not only to coordinate efficient forward locomotion but also to avoid collisions with obstacles.

\textbf{Reward.} 
$
r = r_{\text{run}}=\frac{x_{t+1}-x_t}{\Delta t}.
$

\textbf{Safety cost.} A binary proximity cost to the wall:
\[
c_t = \mathbf{1}\{\,|x_{\text{wall}}-x_{\text{agent}}|<9\,\}\in\{0,1\}.
\]
Observation augments the usual state with wall velocity, wall force proxy, and clipped distance to the wall; the environment also recolors the wall when unsafe.\\

\paragraph{Difference from the original MuJoCo environment.}
In standard MuJoCo control tasks, the simulation uses a fixed micro time step
$0.01$ (each frame takes $0.01$), and each environment step corresponds to a fixed number of internal
physics frames (e.g., \texttt{frame\_skip}~$=5$), resulting in a fixed control
interval $\Delta t = 0.05$.
Our continuous-time MuJoCo variant removes this fixed control interval.
For any desired $\Delta t$, we execute 
\[
\texttt{do\_simulation}(a,\; N), \qquad 
N = \frac{\Delta t}{0.01},
\]
i.e., the number of internal physics frames is chosen dynamically according to the
requested integration step.  
Thus the effective control interval is no longer fixed but fully determined by
$\Delta t$, enabling variable-resolution continuous-time rollouts.
The reward terms (forward velocity, control cost, contact cost) are normalized
by the actual $\Delta t$, ensuring consistency across different temporal
resolutions.
The original update is:
\[
\texttt{step}(u): \qquad
N = 5,\quad \texttt{do\_simulate}(u, N).
\]
Our continuous-time version becomes:
\[
\texttt{step\_continuous}(u,\Delta t): \qquad
N = \Delta t / 0.01,\quad 
\texttt{do\_simulate}(u, N).
\]

\textbf{Ant.} We construct four multi-agent variants of the Ant: Ant-2x4, Ant-4x2, Ant-8x1, and Ant-2x4d. In all cases, the body is controlled by joints agents arranged in different groupings across the legs. As with Half Cheetah, walls are introduced as obstacles, and the agents must coordinate locomotion while ensuring safety by avoiding collisions with these obstacles. The reward is set same as the Half Cheetah

\textbf{Safety shaping.} Identical piecewise-slant corridor: compute $y_{\text{off}}$ from $(x,y)$ and define
\[
c^{\text{obj}}_t=\mathbf{1}\{|y_{\text{off}}|<1.8\}.
\]

\subsection{Constrained Coupled Oscillator Environment} \label{Appendix:didactic}

We consider a two–agent coupled spring–damper system. The state and control are
\[
x \!=\! \begin{bmatrix}x_1 & v_1 & x_2 & v_2\end{bmatrix}^\top,\qquad
u \!=\! \begin{bmatrix}u_1 & u_2\end{bmatrix}^\top .
\]
Each agent $i\in\{1,2\}$ controls one mass with continuous–time dynamics
\begin{align*}
\dot x_i &= v_i,\\
\dot v_i &= -k\,x_i - b\,v_i + u_i,
\end{align*}
with spring constant $k=1.0$ and damping coefficient $b=0.5$.
Stacking the states gives $\dot x = A x + B u$ with
\[
A=\begin{bmatrix}
0&1&0&0\\ -k&-b&0&0\\ 0&0&0&1\\ 0&0&-k&-b
\end{bmatrix},\qquad
B=\begin{bmatrix}
0&0\\ 1&0\\ 0&0\\ 0&1
\end{bmatrix}.
\]

\textbf{Control limits and discretization.}

Actions are normalized $\tilde u_i\in[-1,1]$ and mapped to physical inputs by
$u_i=u_{\max}\tilde u_i$ with $u_{\max}=10$ (component–wise box constraint).
\begin{align*}
v_i^{t+1} &= v_i^{t} + \big(-k\,x_i^{t} - b\,v_i^{t} + u_i^{t}\big)\Delta t,\\
x_i^{t+1} &= x_i^{t} + v_i^{t+1}\Delta t ,
\end{align*}
for a horizon of $N=30$ steps.

\textbf{Stage cost.}
The per–step quadratic cost is
\[
\ell(x,u) = x_1^2 + x_2^2 + \lambda_c\,(x_1-x_2)^2 + \beta\,(u_1^2+u_2^2),
\]
with coupling strength $\lambda_c=2.0$ and control penalty $\beta=0.01$.
Equivalently, $\ell(x,u)=x^\top Qx + u^\top R u$ where
\[
Q=\begin{bmatrix}
1+\lambda_c & 0 & -\lambda_c & 0\\
0&0&0&0\\
-\lambda_c&0&1+\lambda_c&0\\
0&0&0&0
\end{bmatrix},
\qquad
R=\beta I_2.
\]
For training we use a shaped reward
\[
r_t = -\frac{1}{30}\,\ell(x_t,u_t).
\]

\textbf{Hard state constraint.}
We impose an ordering constraint between the two positions,
\[
x_1 \;\le\; x_2 + 0.02 ,
\]
and record an additional penalty
\[
p_t = -10\cdot \mathbf{1}\!\left\{\,x_{1,t} > x_{2,t} + 0.02\,\right\},
\]
returned alongside $r_t$.

\textbf{Smooth violation signal (for logging).}
We also log a smooth surrogate of the constraint violation,
\[
\phi(x) \;=\; 2\,\sigma\!\big(s\,(x_1-x_2+0.02)\big) - 1,
\qquad
\sigma(z)=\frac{1}{1+e^{-z}},\; s=20,
\]
which maps to $(-1,1)$ and grows monotonically with the amount of violation.

\textbf{Unconstrained LQR.}
The continuous-time algebraic Riccati equation (CARE)
\begin{equation*}
  A^\top P + P A - P B R^{-1} B^\top P + Q = 0
\end{equation*}
is solved for the unique positive semidefinite matrix $P$. The unconstrained optimal linear feedback is
\begin{equation*}
  K \;=\; R^{-1} B^\top P,
  \qquad
  u_{\mathrm{LQR}}(x) \;=\; -K x .
\end{equation*}

\textbf{Hard state constraint and CBF condition.}
We impose the safety constraint
\begin{equation*}
  x_1 - x_2 - 0.02 \;\le\; 0
  \quad\Longleftrightarrow\quad
  h(x) \;:=\; 0.02 - (x_1 - x_2) \;\ge\; 0 .
\end{equation*}
Let $\nabla h(x) = \begin{bmatrix} -1 & 0 & 1 & 0 \end{bmatrix}^{\!\top}$.
A (first-order) control barrier function (CBF) condition enforces forward invariance of the safe set
$\mathcal{C}=\{x: h(x)\ge 0\}$ by requiring
\begin{equation*}
  \dot h(x,u) \;=\; \nabla h(x)^\top (A x + B u) \;\ge\; -\alpha\, h(x),
  \label{eq:cbf}
\end{equation*}
with a user-chosen class-$\mathcal{K}$ parameter $\alpha>0$.
Defining
\begin{equation*}
  a(x) \;:=\; \nabla h(x)^\top B \in \mathbb{R}^{2},
  \qquad
  b(x) \;:=\; -\,\nabla h(x)^\top A x \;-\; \alpha\, h(x) \in \mathbb{R},
\end{equation*}
the CBF condition Eq. \ref{eq:cbf} is the single affine-in-$u$ half-space constraint
\begin{equation*}
  a(x)^\top u \;\ge\; b(x).
  \label{eq:halfspace}
\end{equation*}

\textbf{Closed-form safety projection.}
To obtain a safe control with minimal distortion from $u_{\mathrm{LQR}}$, we solve the weighted projection
\begin{equation*}
  \min_{u\in\mathbb{R}^2} \;\tfrac{1}{2}\,(u-u_{\mathrm{LQR}})^\top W\,(u-u_{\mathrm{LQR}})
  \quad\text{s.t.}\quad a(x)^\top u \ge b(x),
  \label{eq:qp}
\end{equation*}
with $W=R$ (``$R$-metric''; Euclidean $W=I$ is also possible).
Because Eq. \ref{eq:qp} has a single linear constraint, it admits a closed form:
\[
  u^\star(x) \;=\;
  \begin{cases}
    u_{\mathrm{LQR}}(x), & \text{if } a^\top u_{\mathrm{LQR}} \ge b,\\[2pt]
    u_{\mathrm{LQR}}(x) + \tau\, W^{-1} a, 
    & \text{otherwise, with } \displaystyle
      \tau \;=\; \frac{b - a^\top u_{\mathrm{LQR}}}{\,a^\top W^{-1} a\,}.
  \end{cases}
\]
Finally we saturate to the actuator limits $u_{\max}>0$:
\begin{equation*}
  u_{\mathrm{GT}}(x) \;=\; \mathrm{clip}\!\left(u^\star(x),\, -u_{\max},\, u_{\max}\right).
\end{equation*}

\section{Additional Environmental Results}\label{Appendix:add_results}
\subsection{Visiual Trajectories}
\begin{figure*}[h]
    \centering

    \begin{subfigure}{0.19\textwidth}
        \includegraphics[width=\linewidth]{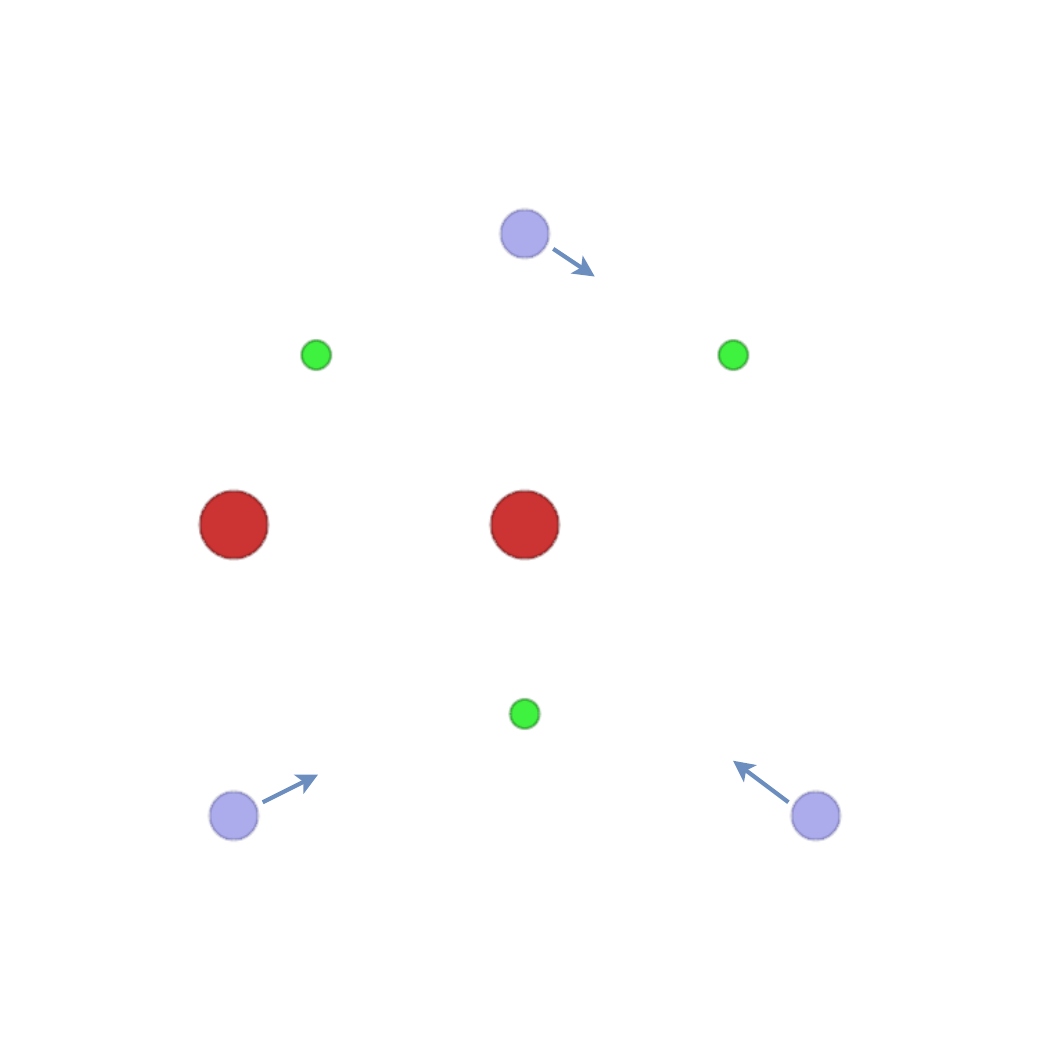}
        \caption{EPI-1}
    \end{subfigure}
    \begin{subfigure}{0.19\textwidth}
        \includegraphics[width=\linewidth]{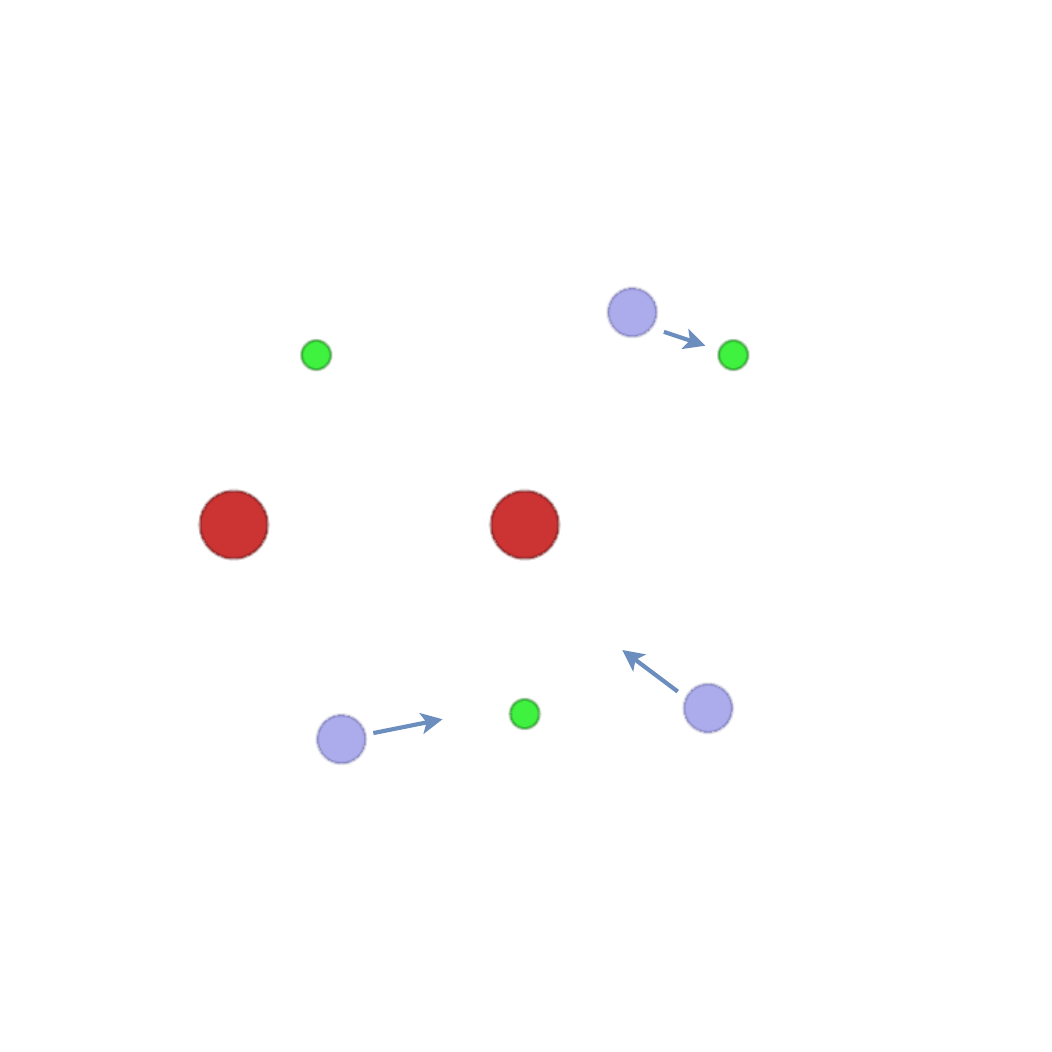}
        \caption{EPI-2}
    \end{subfigure}
    \begin{subfigure}{0.19\textwidth}
        \includegraphics[width=\linewidth]{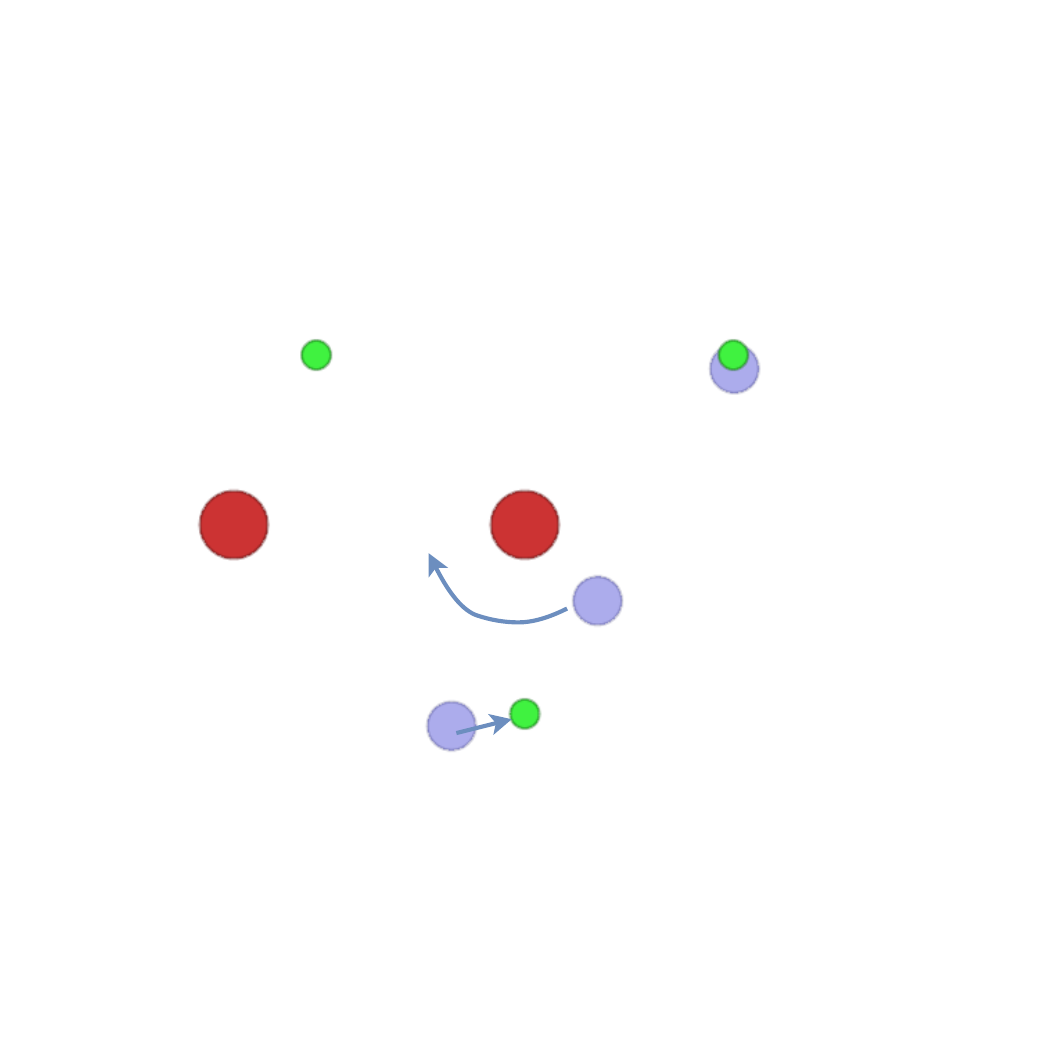}
        \caption{EPI-3}
    \end{subfigure}
    \begin{subfigure}{0.19\textwidth}
        \includegraphics[width=\linewidth]{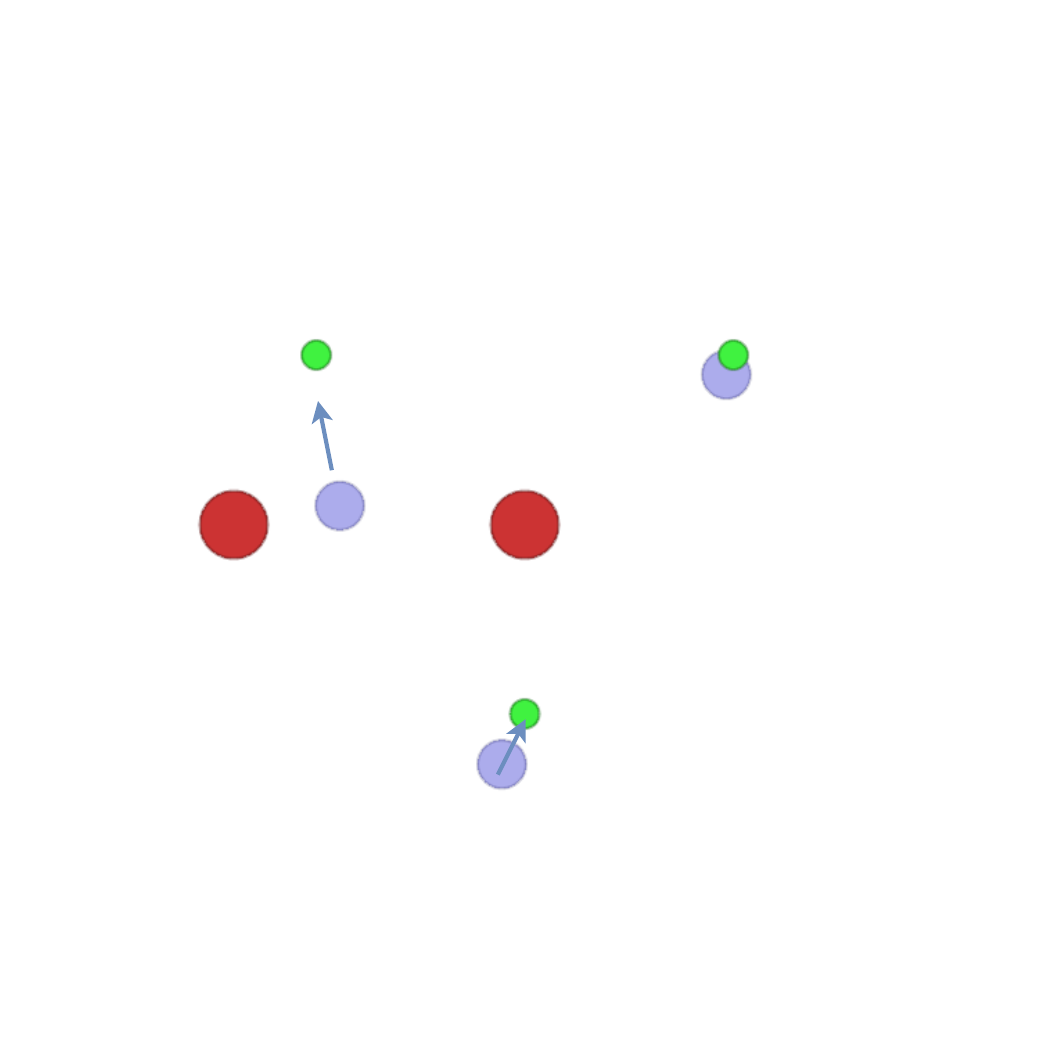}
        \caption{EPI-4}
    \end{subfigure}
    \begin{subfigure}{0.19\textwidth}
        \includegraphics[width=\linewidth]{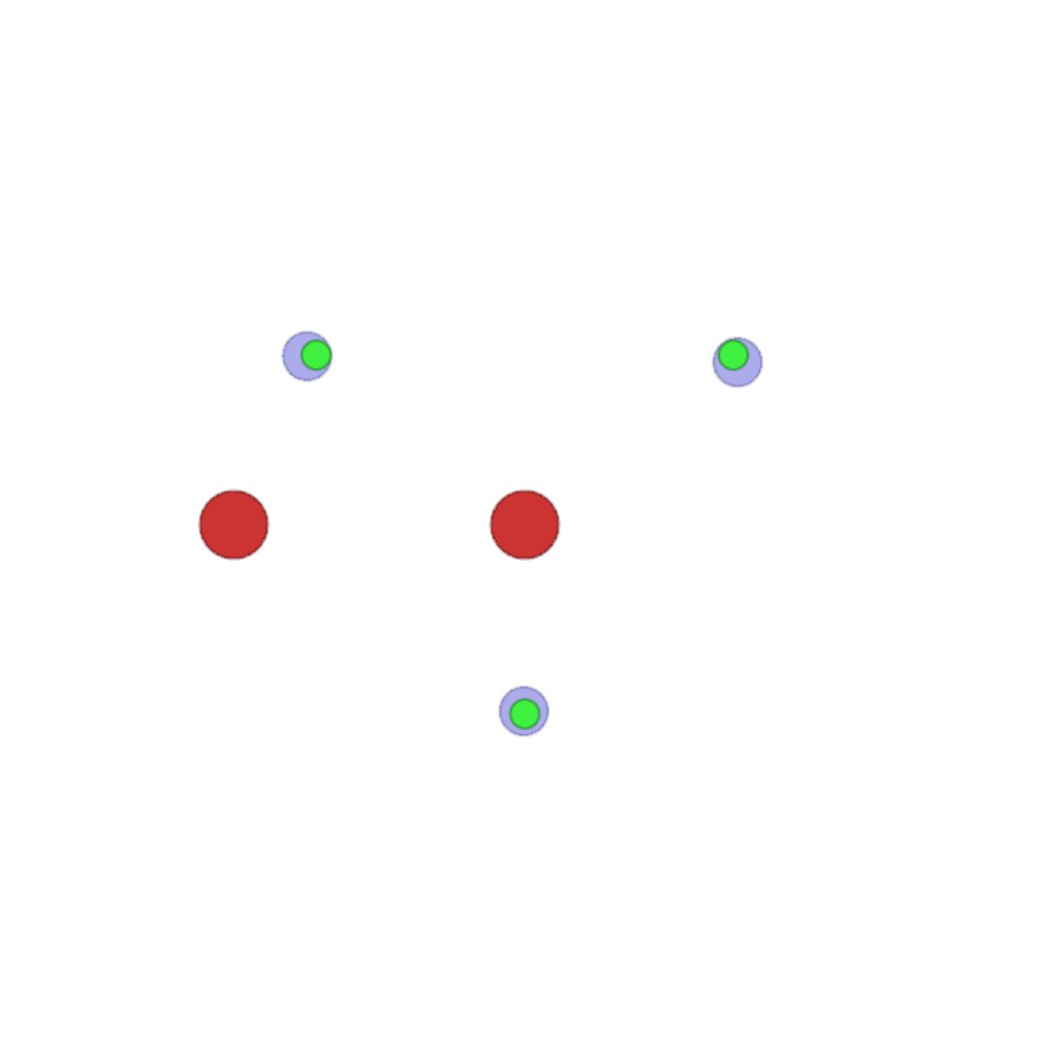}
        \caption{EPI-5}
    \end{subfigure}

    \par\medskip
    \begin{subfigure}{0.19\textwidth}
        \includegraphics[width=\linewidth]{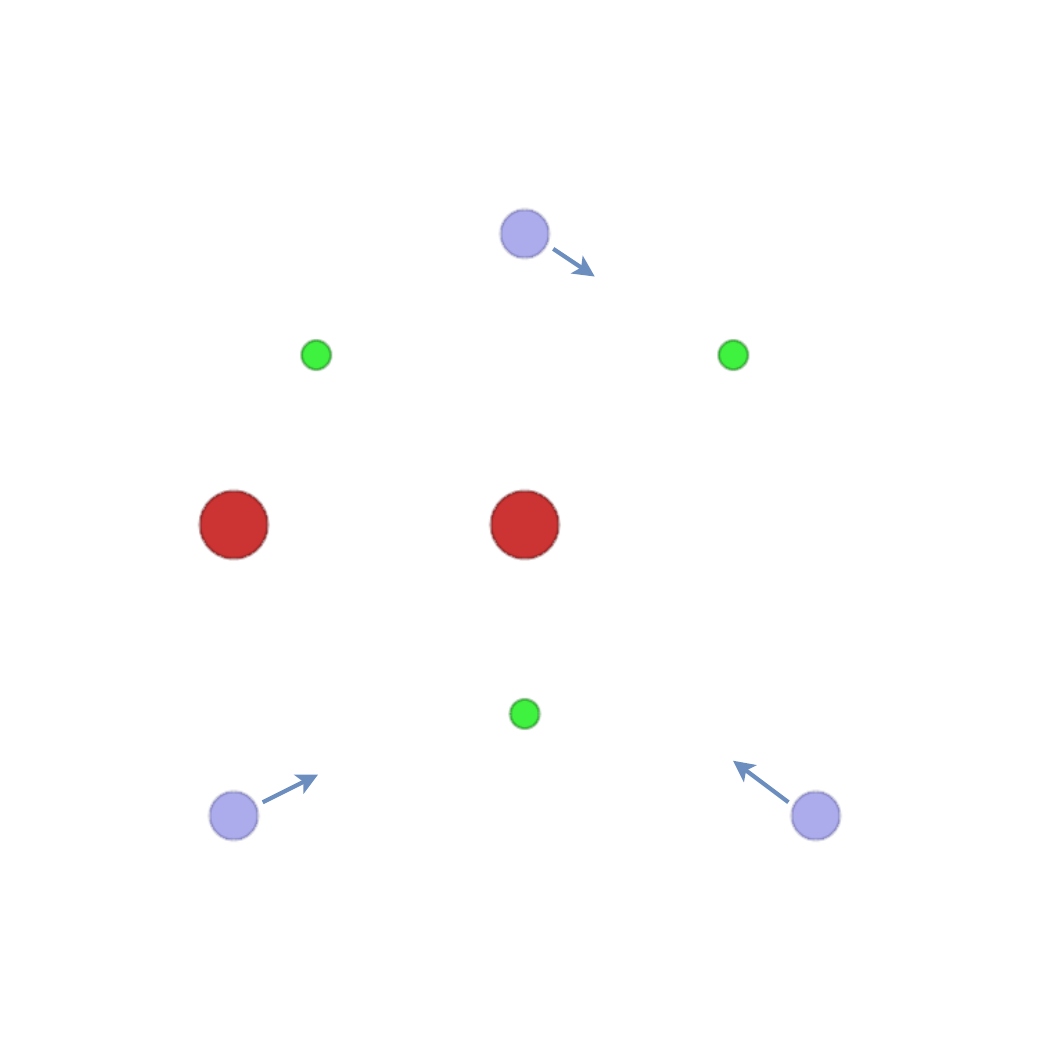}
        \caption{EPPO-1}
    \end{subfigure}
    \begin{subfigure}{0.19\textwidth}
        \includegraphics[width=\linewidth]{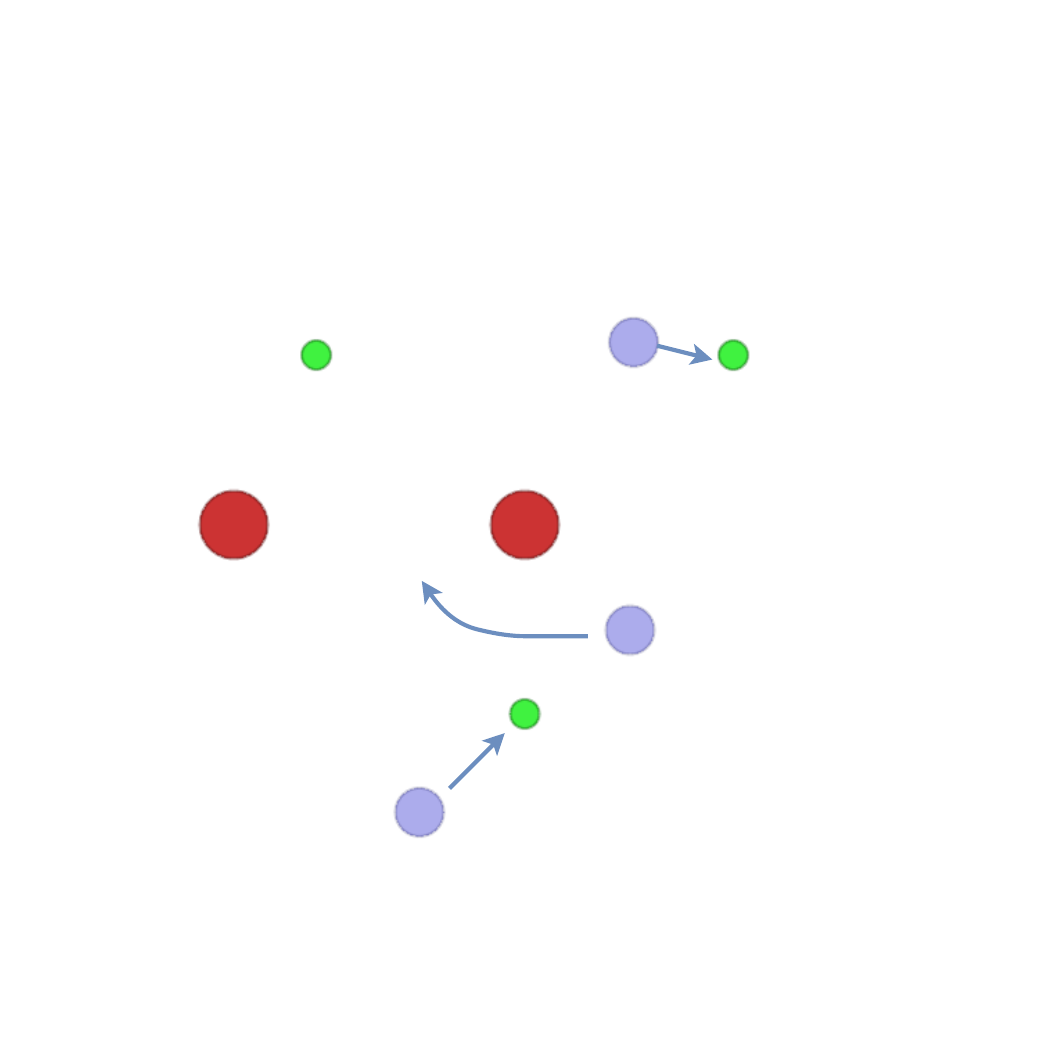}
        \caption{EPPO-2}
    \end{subfigure}
    \begin{subfigure}{0.19\textwidth}
        \includegraphics[width=\linewidth]{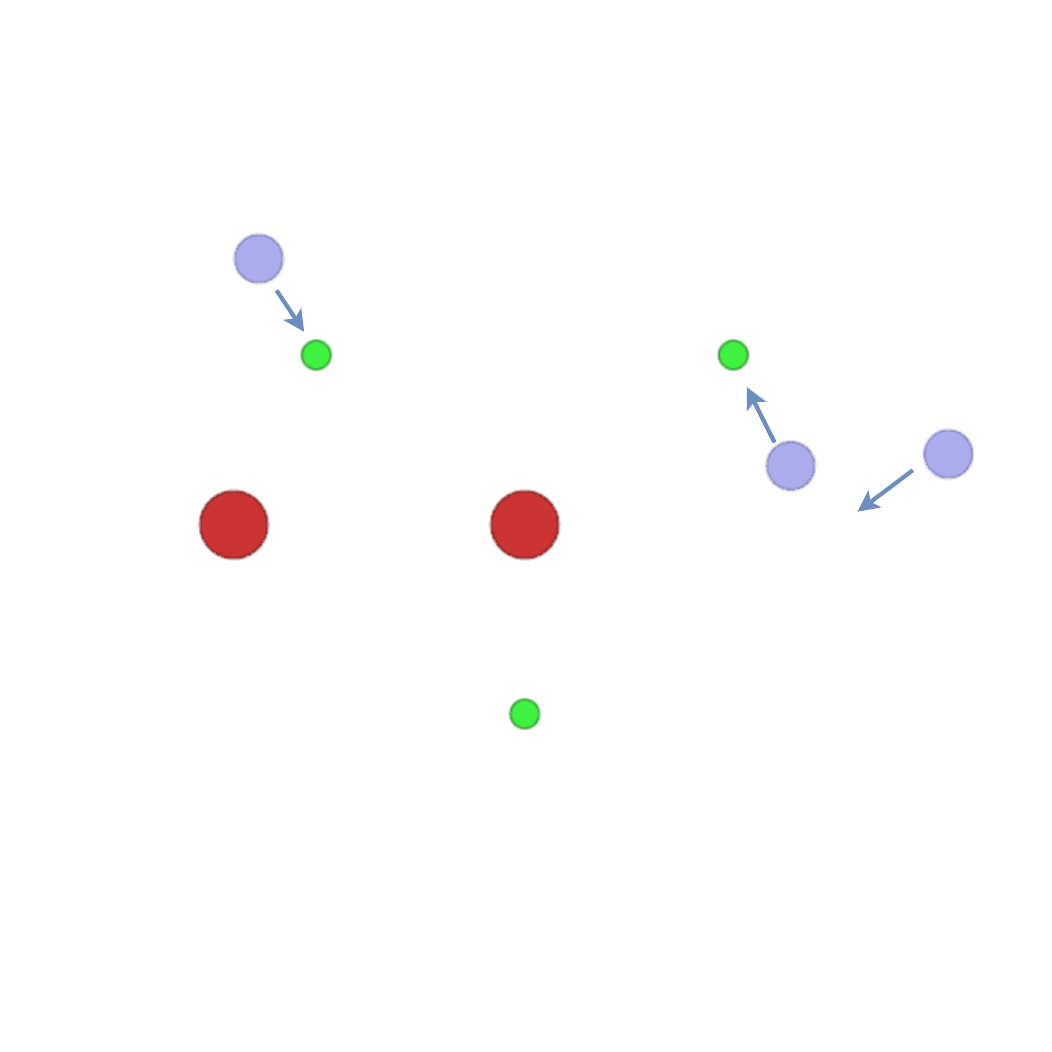}
        \caption{EPPO-3}
    \end{subfigure}
    \begin{subfigure}{0.19\textwidth}
        \includegraphics[width=\linewidth]{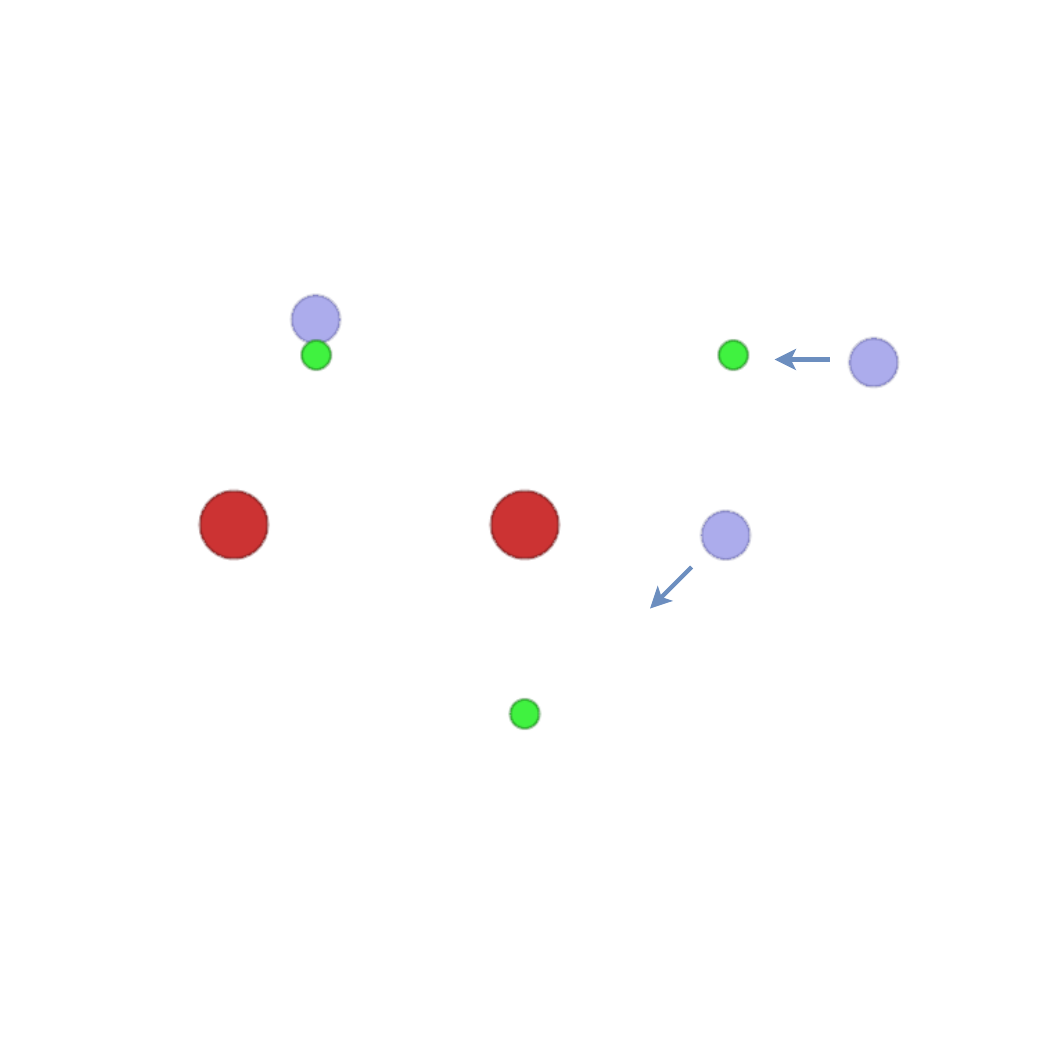}
        \caption{EPPO-4}
    \end{subfigure}
    \begin{subfigure}{0.19\textwidth}
        \includegraphics[width=\linewidth]{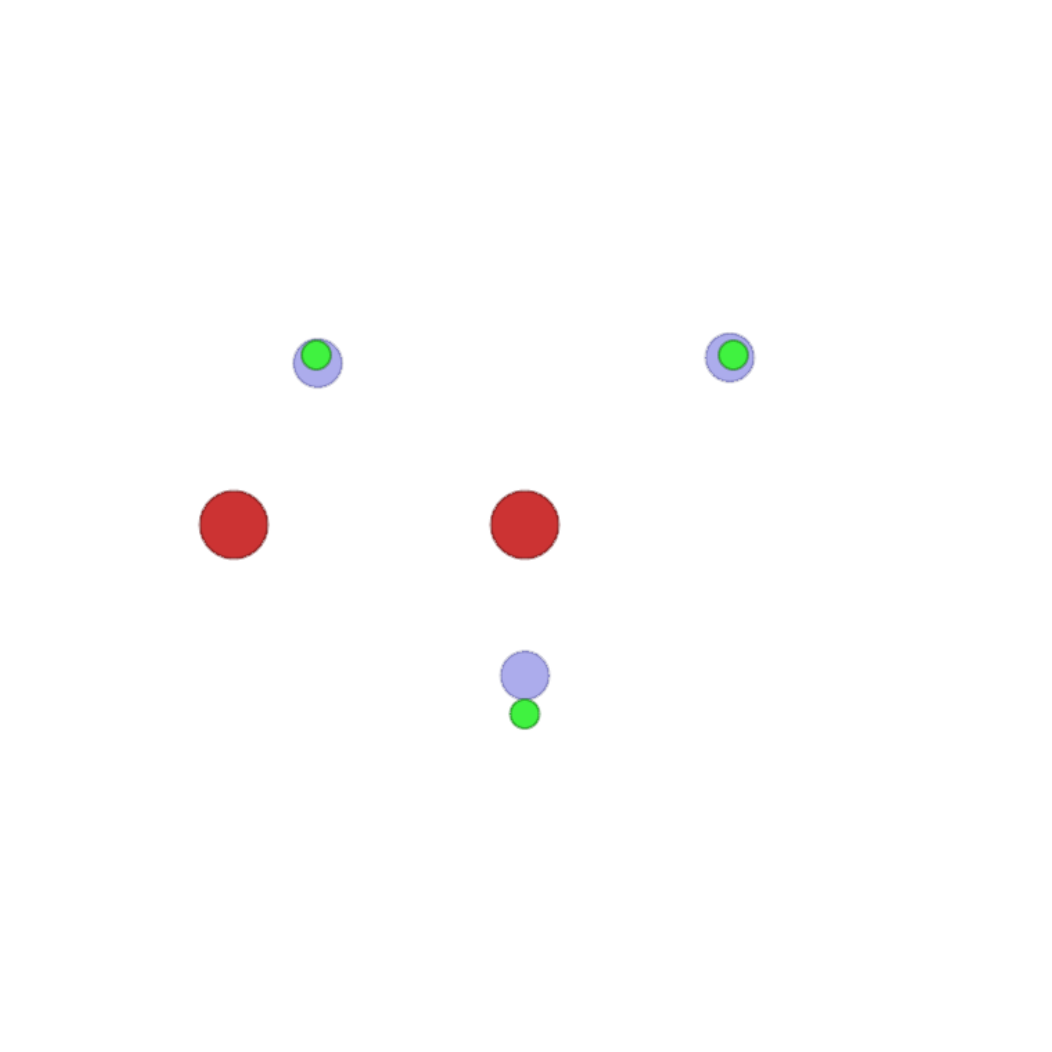}
        \caption{EPPO-5}
    \end{subfigure}

    \par\medskip
    \begin{subfigure}{0.19\textwidth}
        \includegraphics[width=\linewidth]{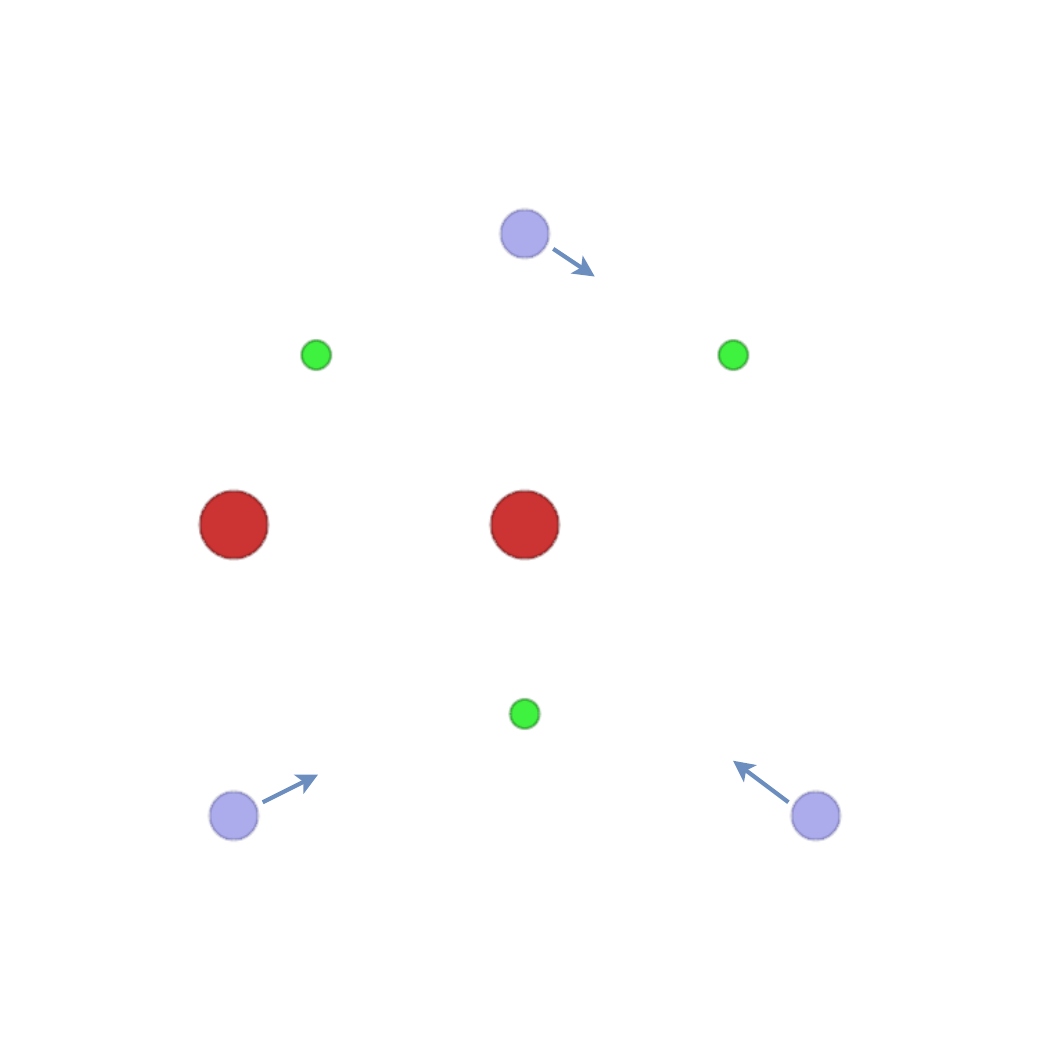}
        \caption{MACPO-1}
    \end{subfigure}
    \begin{subfigure}{0.19\textwidth}
        \includegraphics[width=\linewidth]{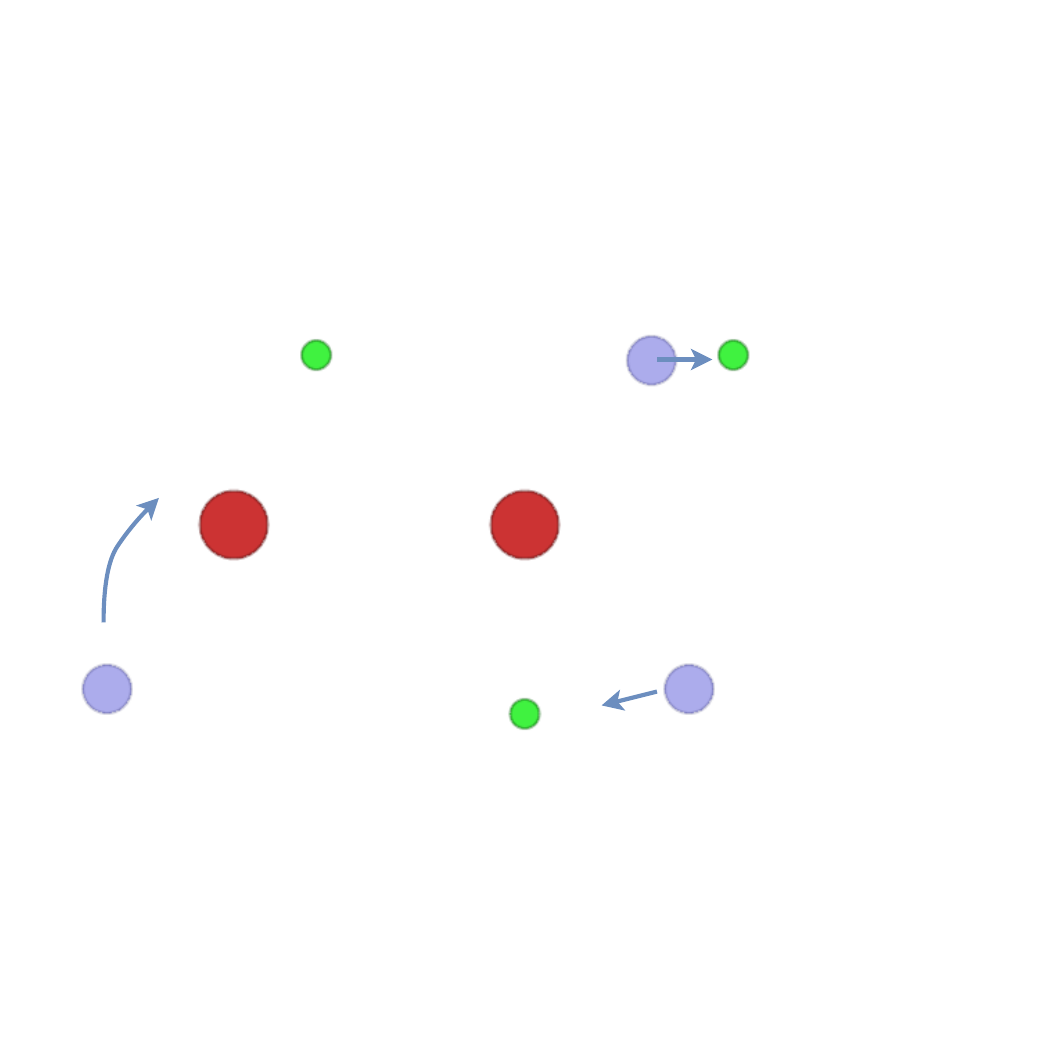}
        \caption{MACPO-2}
    \end{subfigure}
    \begin{subfigure}{0.19\textwidth}
        \includegraphics[width=\linewidth]{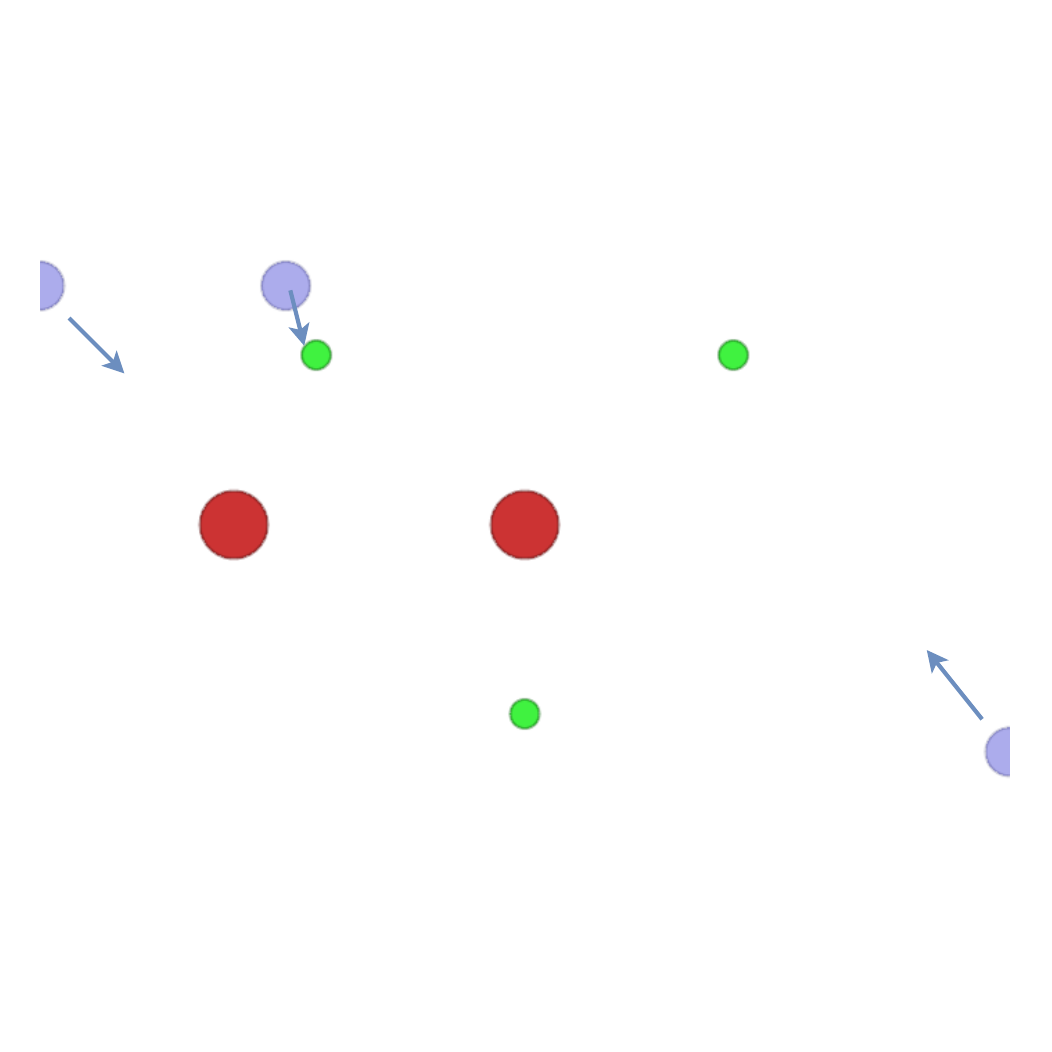}
        \caption{MACPO-3}
    \end{subfigure}
    \begin{subfigure}{0.19\textwidth}
        \includegraphics[width=\linewidth]{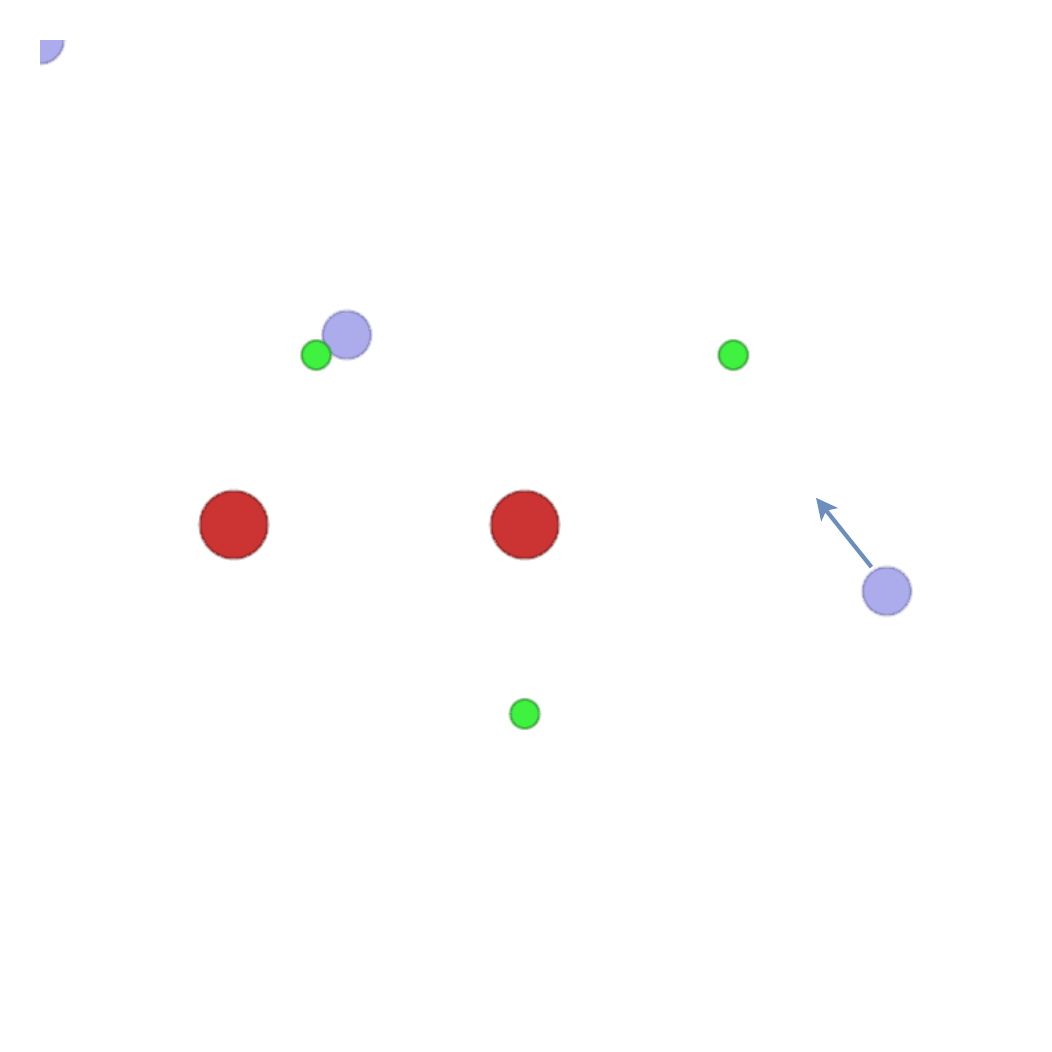}
        \caption{MACPO-4}
    \end{subfigure}
    \begin{subfigure}{0.19\textwidth}
        \includegraphics[width=\linewidth]{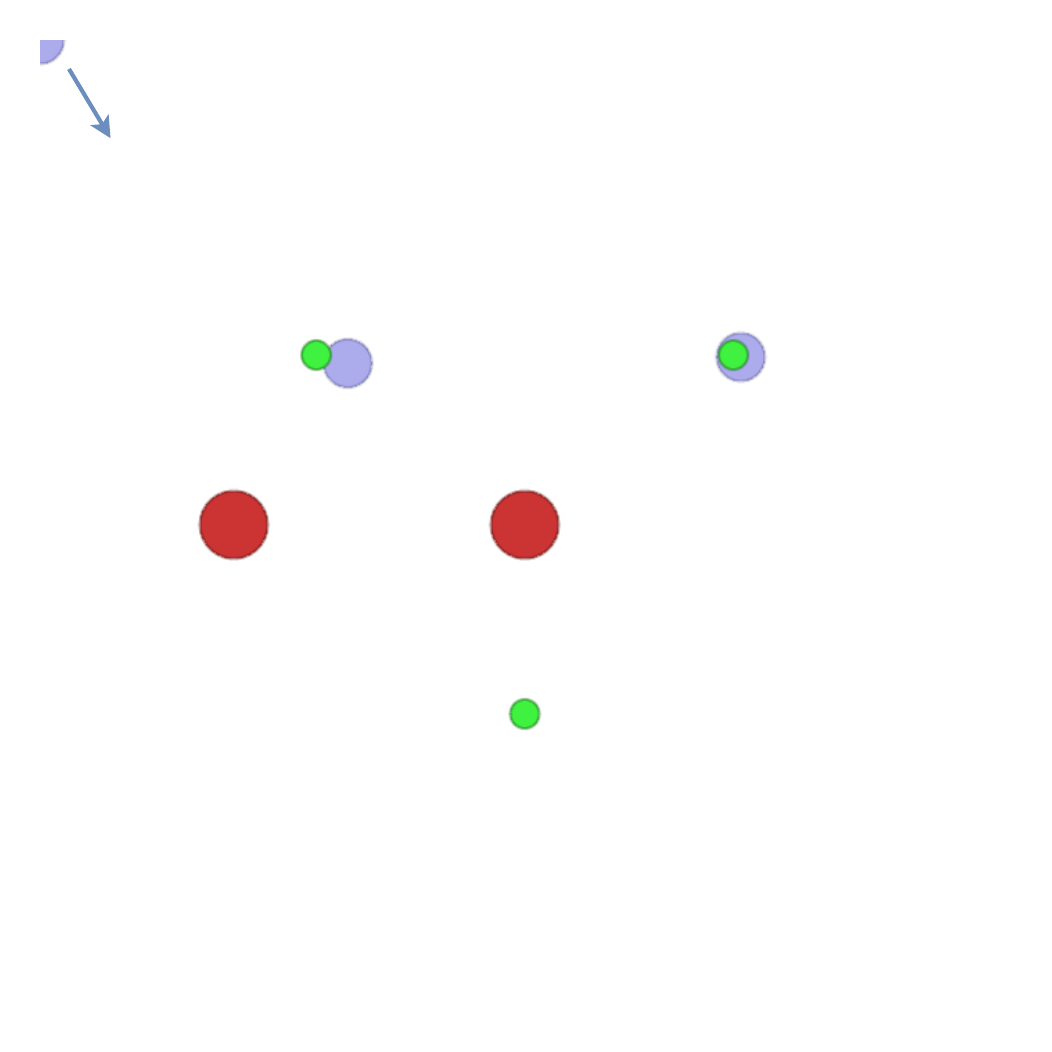}
        \caption{MACPO-5}
    \end{subfigure}

    \caption{Trajectory demonstrations (key frames) across methods in Formation. Row 1: EPI results, Row 2: EPPO results, Row 3: MACPO results.}
    \label{fig:all_methods}
\end{figure*}

The trajectory demonstrations in Fig.~\ref{fig:all_methods} highlight clear behavioral differences across algorithms in Formation scenario. Our proposed method EPI learns smooth trajectories that avoid obstacles while consistently reaching the target, demonstrating both constraint satisfaction and goal achievement. In contrast, EPPO occasionally captures the avoidance behavior but often gets stuck at suboptimal solutions. This is because during training, its randomized sampling of the auxiliary state $z$ prevents stable policy convergence in continuous-time settings; even if outer optimization is applied at execution, the learned policy lacks accurate control signals. On the other hand, MACPO, which enforces hard constraints via a trust-region style update, tends to overestimate the obstacle region. As a result, agents often exhibit overly conservative behaviors—such as retreating toward corners to avoid violations—rather than efficiently pursuing their targets. Together, these comparisons confirm that EPI achieves the most balanced and effective behavior among the three approaches.

\subsection{Performance Under Stochastic Settings}
\begin{figure}[h]
\vspace{-10pt}
  \centering
  \includegraphics[width=0.9\linewidth]{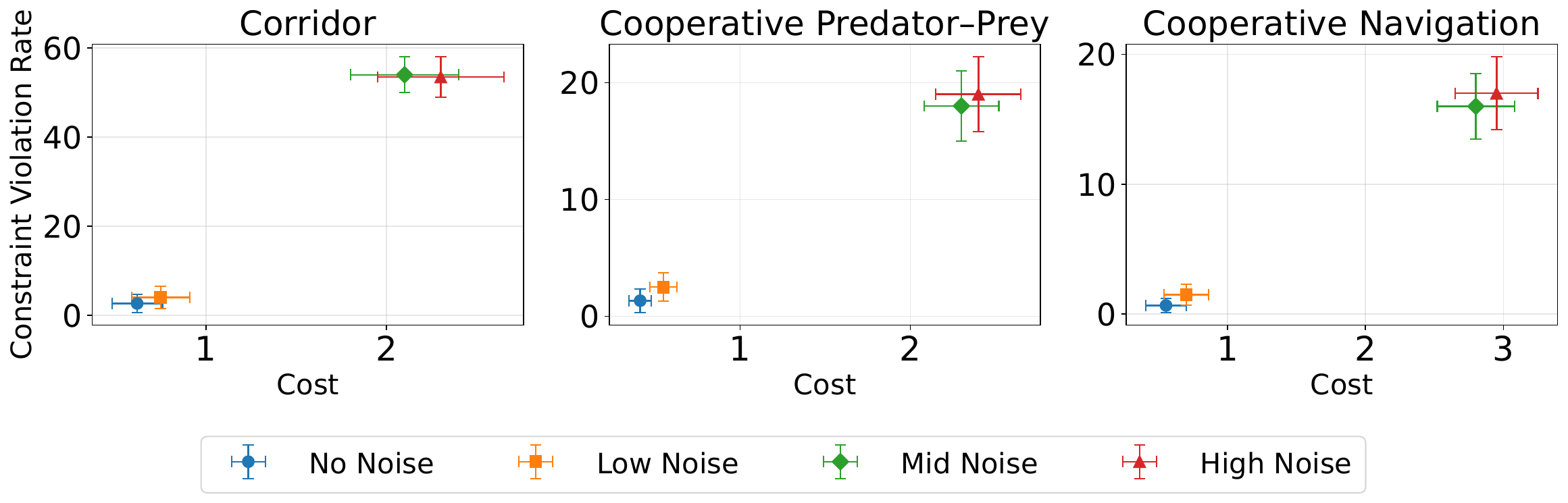}
  \caption{Performance under Different Noise Levels.}
  \label{fig:noise}
\end{figure}

To evaluate robustness under stochastic dynamics, we perturb the continuous-time transition model as
    $x_{t+\Delta t}
    = f(x_t, u_t)\,\Delta t + \varepsilon_t,
    \varepsilon_t \sim \mathcal{N}(0,\sigma^2 I).$ in Fig.\ref{fig:noise}. 
We consider three noise magnitudes: \textbf{Low Noise:} $\sigma^2 = 0.1$ \textbf{Mid Noise:} $\sigma^2 = 0.5$ and \textbf{High Noise:} $\sigma^2 = 1.0$.
We observe that No Noise and Low Noise yield similar identical cost and constraint-violation behavior across all three tasks. Because the PINN-based value approximation are inherently robust to small local perturbations, as long as the injected disturbance is within a moderate range, the learned dynamics model, cost model, and value gradients remain accurate.
In contrast, Mid Noise and High Noise introduce much larger deviations in the state propagation. These disturbances accumulate over time, causing the PINN to receive significantly deviated training signals. Since our method does not incorporate explicit uncertainty modeling or stochastic HJB formulations, the serious noise directly degrades the learned critic and value gradients, eventually leading to unstable or even failed policies.

\begin{figure}[!ht]
\vspace{-10pt}
  \centering
  \includegraphics[width=0.6\linewidth]{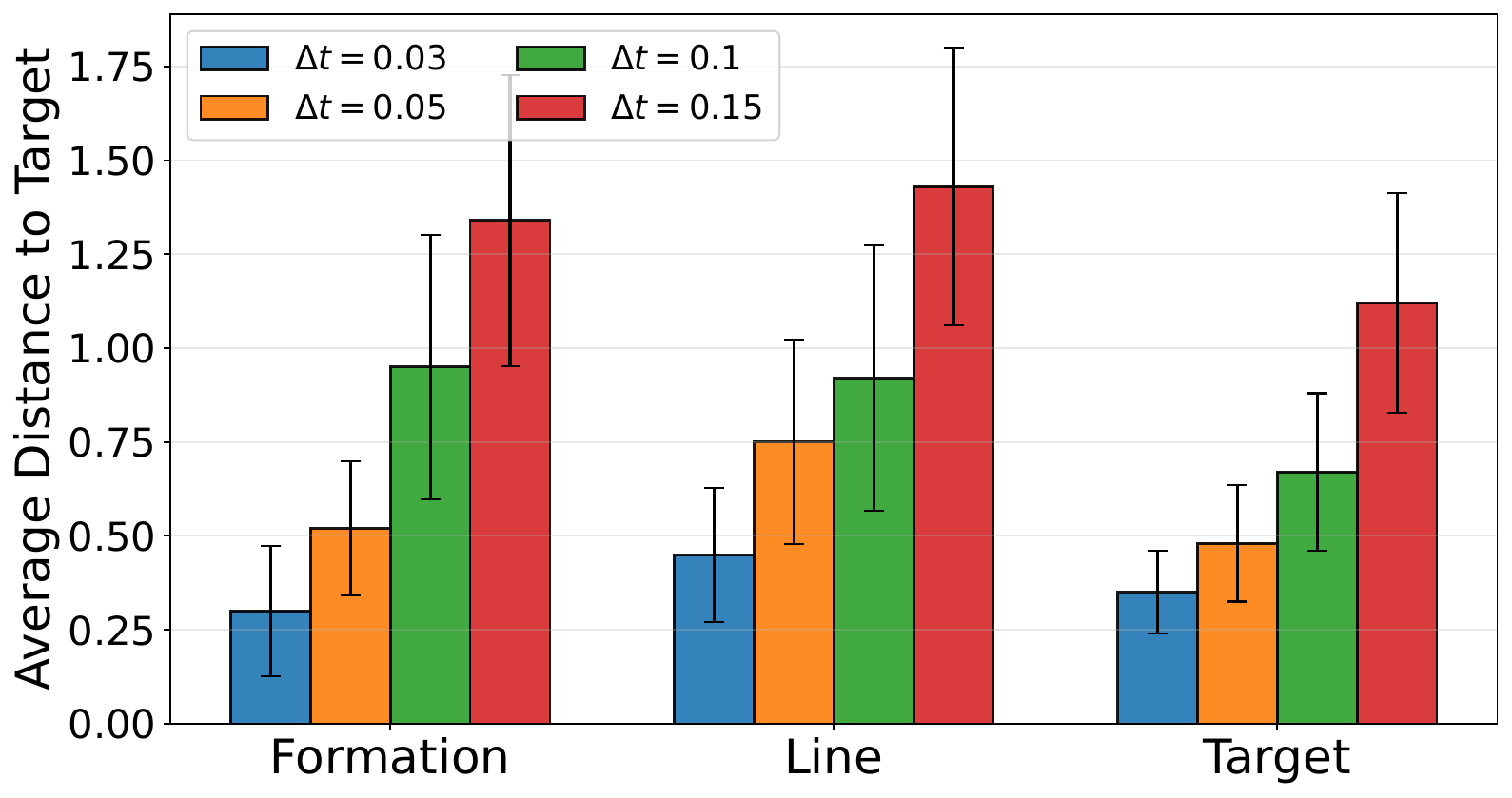}
  \caption{Average Distance to the Target under Different $\Delta t$.}
  \label{fig:Sweep_bar}
\end{figure}

\subsection{Effect of the Discretization Interval.}
Figure~\ref{fig:Sweep_bar} evaluates how the choice of discretization interval $\Delta t$ affects the performance of EPI.
For each fixed $\Delta t$, we roll out complete trajectories using the learned policy and measure the
average distance to the target over the entire trajectory. 
Across all three scenarios, we observe a consistent trend: the \emph{average distance to the target increases as $\Delta t$ becomes larger}. 
This behavior is expected in continuous-time control. 
When $\Delta t$ is small, the temporal resolution is high and the policy is updated frequently, allowing the learned value gradients to provide fine-grained control corrections.
In contrast, larger $\Delta t$ leads to \emph{coarser control updates}, reducing the precision of the policy's response to the evolving system dynamics.
Moreover, both the HJB residual and the VGI update rely on local differential information.
As $\Delta t$ grows, the mismatch between the continuous-time formulation and the discrete rollout increases, which in turn amplifies approximation errors in the learned value gradients.
These errors accumulate along the trajectory and result in the observed degradation in task accuracy.

\subsection{Trajectory of $z^*$ through the Training}
\begin{figure}[!htb]
\vspace{-10pt}
  \centering
  \includegraphics[width=0.9\linewidth]{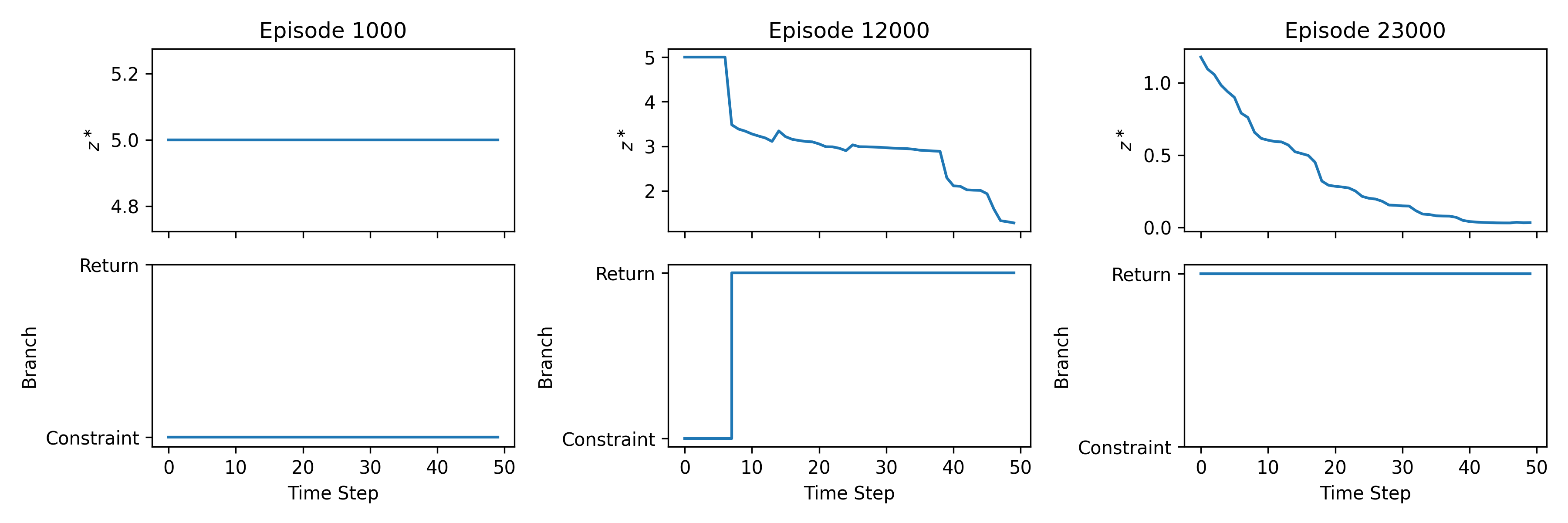}
  \caption{$z^*$ Trajectory through the Training in Target.}
  \label{fig:z_traj}
\end{figure}
Figure~\ref{fig:z_traj} illustrates the evolution of the optimal epigraph
variable $z_t^\ast$ and the active branch
(\texttt{return} vs.\ \texttt{constraint}) at three representative stages of
training.  
In early training (Episode~1000), the policy frequently visits infeasible
states, causing $V_{\mathrm{cons}}(x_t)>0$ and forcing the epigraph to select
the constraint branch; consequently $z_t^\ast$ remains at the clipped upper
bound $z_{\max}$.  
By mid training (Episode~12000), the critic starts to maintain
$V_{\mathrm{cons}}(x_t)\le 0$ for part of the trajectory, producing
intermittent switching and a decreasing $z_t^\ast$.  
In late training (Episode~23000), the trajectory remains feasible, the return
branch is consistently selected, and $z_t^\ast$ decreases smoothly along the
rollout.  
These behaviors align with the expected epigraph semantics: infeasible states
produce $z_{\max}$, while improved policies yield stable return-dominated
gradually decreasing $z_t^\ast$.

\subsection{Compare EPI with Traditional Epigraph Method}

\begin{figure}[H]
\vspace{-10pt}
  \centering
  \includegraphics[width=0.9\linewidth]{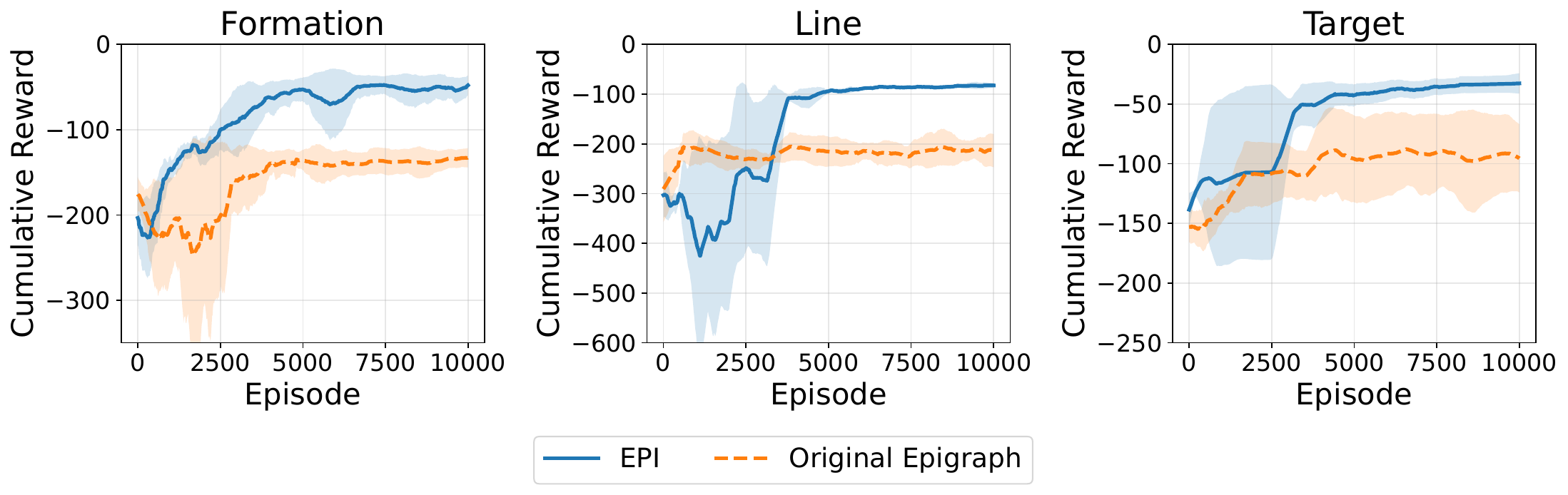}
  \caption{Performance of EPI and Traditional Epigraph under MPE settings.}
  \label{fig:epi_compare}
\end{figure}

Figure~\ref{fig:epi_compare} compares our $z$-independent epigraph
formulation (EPI) with the traditional $z$-dependent epigraph used in EPPO-like
methods on the \textsc{Formation}, \textsc{Line}, and \textsc{Target} tasks.
In the traditional design, a scalar $z$ is randomly sampled at the initial
state of each episode and then propagated through its auxiliary dynamics, so
that both critic and actor are conditioned on this randomly chosen epigraph
level.  As shown in Fig.~\ref{fig:epi_compare}, converges to a lower cumulative reward, and
exhibits substantially larger variance across seeds.
In contrast, EPI learns $z$-independent critics
$\big(V^{\mathrm{cons}}(x), V^{\mathrm{ret}}(x)\big)$ and computes $z^\ast$
via a one-dimensional search during training, while the actor depends only on
the physical state $x$.  This removes the nonstationary noise introduced by
random $z$ sampling: for a fixed $x$, the policy gradient under EPI is unique,
whereas in the traditional epigraph it fluctuates with the sampled $z$ even
when the critic has already converged.  In continuous-time settings this issue
is amplified, since small changes in $z$ shift the switching time between the
constraint and return branches and thereby alter the entire rollout.

\subsection{Comparison between EPI and Discrete-time Baselines}

\begin{figure}[H]
\vspace{-10pt}
  \centering
  \includegraphics[width=0.55\linewidth]{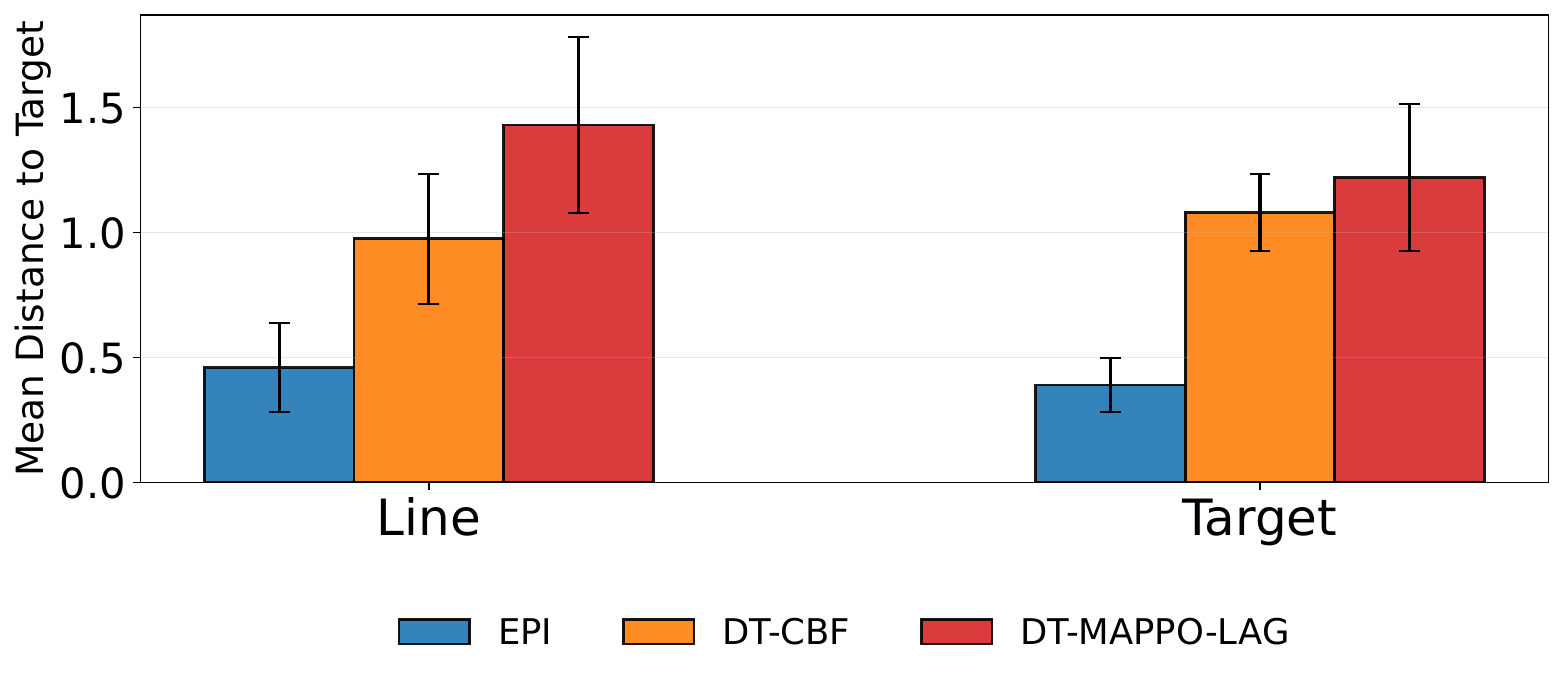}
  \caption{Performance of EPI and Discrete-time Baselines under MPE settings.}
  \label{fig:dt_compare}
\end{figure}

To validate the performance of traditional discrete-time based methods in continuous-time settings, the Fig \ref{fig:dt_compare} compares EPI with two discrete-time baselines (DT-CBF and
DT-MAPPO-LAG) on the \textit{Line} and \textit{Target} tasks in the
continuous-time MPE environment.  
All baselines are adapted to the
discrete-time setting by removing their residual-loss components.
Apart from this modification, all implementation
details follow their original published versions \citep{cbf}.  
Across both tasks, EPI consistently achieves lower mean distance to the
target and smaller variance, demonstrating the performance gain from the modules that designed for the continuous-time settings.

\begin{table}[!htbp]
\centering
\caption{Hyperparameter settings used.}
\label{tab:hyperparams}
\begin{tabular}{ll}
\toprule
\textbf{Parameter} & \textbf{Value} \\
\midrule
Episode length for MPE& 50 \\
Episode length for MuJoCo& 100 \\
Episode length for Didactic& 50 \\
Total number of episode for MPE& 30000 \\
Total number of episode for MuJoCo& 30000 \\
Total number of episode for Didactic& 3000 \\
z range for MPE& 0-10 \\
z range for MuJoCo& 0-5 \\
z range for Didactic& 0-2 \\
Discount factor $\gamma$ & 0.99 \\
Actor learning rate & 0.0001 \\
Critic (Return) learning rate & 0.001 \\
Critic (Constraint)learning rate & 0.001 \\
Dynamics model learning rate & 0.001 \\
Reward model learning rate & 0.001 \\
Exploration steps & 1000 \\
Model save interval & 1000 \\
Random seed & 113-120 \\
\bottomrule
\label{tab:hp}
\end{tabular}
\end{table}

\section{Hyperparameters and Neural Network Structures}
Experiments were conducted on hardware comprising an Intel(R) Xeon(R) Gold 6254 CPU @ 3.10GHz, four NVIDIA A5000 GPUs and eight NVIDIA A6000 GPUs. This setup ensures the computational efficiency and precision required for the demanding simulations involved in multi-agent reinforcement learning and safety evaluations.

Table~\ref{tab:hp} lists the defaults used in all experiments. Episode lengths are chosen so that a single rollout covers a full interaction cycle (50 steps for MPE and the didactic environment, 100 for MuJoCo). We train for 30000 episodes in MPE and MuJoCo and for 3000 episodes in the didactic setting, reflecting simulator cost and convergence speed. The z range controls epigraph sampling for the VGI updates and is set wider in MPE (0--10), moderate in MuJoCo (0--5), and narrow in the didactic task (0--2). The actor uses a conservative learning rate (1e-4) for stable policy updates; the critics and the dynamics/reward models use 1e-3 to accelerate value/model fitting. Training is warm-started with 1000 exploration steps, checkpoints are saved every 1000 episodes, and reported results are averaged over seeds 113--120.

\begin{table}[htbp]
\centering
\caption{Summary of neural network architectures used in our framework.}
\label{tab:net_arch}
\resizebox{\textwidth}{!}{%
\begin{tabular}{lll}
\toprule
\textbf{Network} & \textbf{Input Dimension} & \textbf{Architecture and Activation} \\
\midrule
\textbf{Return Value Network} & State (\(d \)) & FC(128) → FC(128) → FC(1), ReLU or Tanh \\
\textbf{Constraint Value Network} & State (\(d\)) & FC(128) → FC(128) → FC(1), ReLU or Tanh \\
\textbf{Dynamics Network} & State + Joint Action (\(d + na \)) & FC(128) → FC(128) → FC(\(d\)), ReLU \\
\textbf{Reward Network} & State + Joint Action  (\(d + na \)) & FC(128) → FC(128) → FC(1), ReLU \\
\textbf{PolicyNet} & Observation  + Time Interval (\(o + 1\)) & FC(128) → FC(128) → FC(64) → FC(\(a\)), ReLU\\
\bottomrule
\end{tabular}%
}
\label{tab:nn}
\end{table}

\noindent
Table~\ref{tab:net_arch} summarizes the five multilayer perceptrons used in our framework. Two scalar critics map the state \(x\in\mathbb{R}^d\) to the return value and the constraint value, each with two hidden layers of width 128 and ReLU or Tanh activations. The dynamics and reward models take the concatenated state–action input \((x,u)\in\mathbb{R}^{d+na}\) and output, respectively, a \(d\)-dimensional state derivative/increment and a scalar reward; both use two 128-width hidden layers with ReLU. The policy network consumes the observation \(o\in\mathbb{R}^{\,o}\) augmented with a scalar time-interval feature \(\Delta t\) to condition actions on continuous-time step size, and produces an \(a\)-dimensional action through a 128–128–64 hidden stack with ReLU.

\noindent
Notation: \(d\) = state dimension, \(o\) = observation dimension, \(a\) = per-agent action dimension, \(n\) = number of agents, so the joint action has dimension \(na\). The value heads output scalars; the dynamics head outputs \(\mathbb{R}^d\); the policy head outputs \(\mathbb{R}^a\). Action squashing or clipping to environment bounds (if used) is applied after the final linear layer.

\section{The Use of Large Language Models (LLMs)}

We employed LLMs as a writing assistant to polish the paper.

\end{document}

%% file: math_commands.tex
\usepackage{amsmath,amsfonts,bm}

\def\eqref#1{equation~\ref{#1}}

\def\1{\bm{1}}

\DeclareMathAlphabet{\mathsfit}{\encodingdefault}{\sfdefault}{m}{sl}
\SetMathAlphabet{\mathsfit}{bold}{\encodingdefault}{\sfdefault}{bx}{n}

\DeclareMathOperator*{\argmin}{arg\,min}